	\def\version{arxiv}
\newcommand{\ifarxiv}[2]{\ifthenelse{\equal{\version}{arxiv}}{#1}{#2}}
\newcommand{\ifproceedings}[2]{\ifthenelse{\equal{\version}{proceedings-siam}}{#1}{#2}}
	\let\mytitle\@title%
	\newenvironment{captionbeside}[1]{%
		\caption{#1}
	}{%
	}
		\colorlet{backgroundshading}{red!70!orange!20} %
		\colorlet{backgroundshading}{black!10}
		\colorlet{headingscolor}{red!60!black} %
		\colorlet{headingscolor}{black!80}
	\colorlet{softhighlight}{headingscolor!50!backgroundshading} 
\tikzset{external/export=false} %
	\newcommand{\externalizedpicture}[1]{%
		\includegraphics{pics/externalized/#1}%
	}
	\newcommand{\externalizedpicture}[1]{%
		\tikzset{external/export=true}%
		\tikzsetnextfilename{#1}%
	}
\pgfplotsset{compat=1.12}
\tikzset{%
	every picture/.style={
		>=stealth,
		font={\sffamily\small},
	},
}
\tikzset{%
	highlightbox/.style={
		draw=softhighlight,
		very thick,
		rounded corners=3pt,
	},
}
\newcommand\separatedpar{%
	\medskip
	\plaincenter{\textcolor{headingscolor}{%
		$*$\hspace{3em}$*$\hspace{3em}$*$}%
	}\par
	\medskip
	\noindent
}
\renewcommand*{\NAT@spacechar}{~}
\let\oldthebibliography\thebibliography
\renewcommand\thebibliography[1]{%
	\oldthebibliography{#1}%
	\pdfbookmark[1]{\bibname}{}%
}
\def\Foreach{\kw{for each} }
\def\EndFor{\End\li\kw{end for} }
\def\EndIf{\End\li\kw{end if} }
\def\EndWhile{\End\li\kw{end while} }
\newtheoremstyle{proofstyle}%
  {\item[\theorem@headerfont\hskip\labelsep ##1\theorem@separator]}%
  {\item[\theorem@headerfont\hskip\labelsep ##1 of ##3\theorem@separator]}
	\theoremstyle{plain}
	\newtheorem{conjecture}[theorem]{Conjecture}
	\newtheorem{definition}[theorem]{Definition}
	\theoremstyle{plain}
	\theoremstyle{proofstyle}
	\newtheorem{theorem}{Theorem}[section]
	\theoremstyle{plain}
	\newtheorem{proposition}[theorem]{Proposition}
	\newtheorem{lemma}[theorem]{Lemma}
	\newtheorem{corollary}[theorem]{Corollary}
	\newtheorem{definition}[theorem]{Definition}
	\theoremstyle{plain}
	\theoremstyle{proofstyle}
	\newtheorem{proof}{Proof}
\newenvironment{thmenumerate}[1]{%
	\begin{enumerate}[
		label={\makebox[\widthof{(a)}][c]{\textup{(\alph*)}}},
		ref={\ref{#1}\kern.1em--\kern.1em(\alph*)},
		itemsep=0pt,
	]%
}{%
	\end{enumerate}%
}
\pgfplotsset{
	legend cell align=left,
	every axis legend/.append style={%
		fill=backgroundshading!50!white,
		rounded corners=1pt,
	},
	xlabel near ticks,
	ylabel near ticks,
}
\pgfplotsset{
	every y tick scale label/.style={
		at={(-.03,1)},yshift=-1ex,anchor=south east,inner sep=0pt
	}
}
\def\mathematicaPlotAddPlot{%
	\@ifstar\@mathematicaPlotAddPlot\@@mathematicaPlotAddPlot%
}
\newcommand\@@mathematicaPlotAddPlot[4][]{%
	\addplot+[#1] table[x=#2,y=#3] {\tablefile} ;
	\ifthenelse{\equal{#4}{}}{}{\addlegendentry{#4}}
}
\newcommand\@mathematicaPlotAddPlot[4][]{%
	\addplot[#1] table[x=#2,y=#3] {\tablefile} ;
	\ifthenelse{\equal{#4}{}}{}{\addlegendentry{#4}}
}
\newdimen\makeboxdimen
\newcommand\makeboxlike[3][l]{%
\setbox0=\hbox{#2}%
\global\makeboxdimen=\wd0%
\setbox1=\hbox{\makebox[\makeboxdimen][#1]{%
\makebox[0pt][#1]{#3}%
}}%
\ht1=\ht0%
\dp1=\dp0%
\box1%
}
\newcommand\plaincenter[1]{%
	\mbox{}\hfill#1\hfill\mbox{}%
}
\newcounter{inlineenum}
\newcommand*\ie{\mbox{i.\hspace{.2ex}e.}}
\newcommand*\eg{\mbox{e.\hspace{.2ex}g.}}
\newcommand*\iid{\mbox{i.\hspace{.2ex}i.\hspace{.2ex}d.}\xspace}
\newcommand*\IID{\mbox{I.\hspace{.2ex}I.\hspace{.2ex}D.}\xspace}
\newcommand*\wrt{\mbox{w.\hspace{.2ex}r.\hspace{.2ex}t.}\xspace}
\newcommand*\aka{\mbox{a.\hspace{.2ex}k.\hspace{.2ex}a.}\xspace}
\newcommand*\withoutlossofgenerality{\mbox{w.\hspace{.23ex}l.\hspace{.2ex}o.\hspace{.17ex}g.}\xspace}
\newcommand*\whp{\mbox{w.\hspace{.2ex}h.\hspace{.2ex}p.}\xspace}
\newcommand*\phdthesis{\mbox{Ph.\hspace{.2ex}D.} thesis\xspace}
\newcommand\parenthesisclause[1]{%
	\unskip%
	\nobreak\hspace{.2ex}\mbox{---}\hspace{.2ex}%
	#1%
	\nobreak\hspace{.2ex}\mbox{---}\hspace{.2ex}%
	\ignorespaces%
}
\newcommand*\parenthesissign{%
	\unskip%
		\hspace{.2ex}---\hspace{.2ex}%
	\ignorespaces%
}
\newcommand\literalquote[3]{%
	{%
		\def\ellipsis{[\,\dots]\xspace}%
		\def\comment##1{[##1]}%
		\def\cite##1{##1}%
		\edef\citekey{#2}%
		{\itshape``\ignorespaces#1\unskip''}
		(%
			\ifthenelse{\equal{#3}{}}{%
			\citet{\citekey}%
			}{%
			\citet{\citekey}%
			, p.\,#3%
			}%
		)%
	}%
	\xspace%
}
\newcommand\literalquotegerman[3]{%
	{%
		\def\ellipsis{[\,\dots]\xspace}%
		\def\comment##1{[##1]}%
		\def\cite##1{##1}%
		\edef\citekey{#2}%
		{\selectlanguage{ngerman}%
		\itshape,,\ignorespaces#1\unskip``}
		(%
			\ifthenelse{\equal{#3}{}}{%
			\citet{\citekey}%
			}{%
			\citet{\citekey}%
			, p.\,#3%
			}%
		)%
	}%
	\xspace%
}
\newcommand\literalquotep[3]{%
	{%
		\def\ellipsis{[\,\dots]\xspace}%
		\def\comment##1{[##1]}%
		\def\cite##1{##1}%
		\edef\citekey{#2}%
		{\itshape``\ignorespaces#1\unskip''}
		(%
			\ifthenelse{\equal{#3}{}}{%
			\citep{\citekey}%
			}{%
			\citep{\citekey}%
			, p.\,#3%
			}%
		)%
	}%
	\xspace%
}
\newcommand{\ESymbol}{\mathbb{E}}
\newcommand\R{\mathbb R}
\newcommand\N{\mathbb N}
\newcommand\Z{\mathbb Z}
\newcommand\Q{\mathbb Q}
\renewcommand\C{\mathbb{C}}
\newcommand\Oh{O}
\def\.{\mskip1mu}
\DeclareMathOperator\ld{ld}
\newcommand{\ProbSymbol}{\ensuremath{\mathbb{P}}}
\providecommand{\given}{}
\DeclarePairedDelimiterXPP\Prob[1]{\ProbSymbol}[]{}{%
	\renewcommand\given{\nonscript\:\delimsize\vert\nonscript\:\mathopen{}}%
	#1%
}
\DeclarePairedDelimiterXPP\E[1]{\ESymbol}[]{}{%
	\renewcommand\given{\nonscript\:\delimsize\vert\nonscript\:\mathopen{}}%
	#1%
}
\DeclarePairedDelimiterXPP\Eover[2]{\ESymbol_{#1}}[]{}{%
	\renewcommand\given{\nonscript\:\delimsize\vert\nonscript\:\mathopen{}}%
	#2%
}
\providecommand{\Prob}{} %
\providecommand{\E}{} %
\providecommand{\Eover}{} %
\newcommand\total[1]{\Sigma #1}
\newcommand\ui[2]{#1^{\smash{(}#2\smash{)}}}
\newcommand{\vect}[1]{\boldsymbol{\mathbold{#1}}}
\newcommand\eqdist{	
	\mathchoice{%
		\mathrel{\overset{\raisebox{0ex}{$\scriptstyle \mathcal D$}}=}%
	}{%
		\mathrel{\like{=}{%
			\overset{\raisebox{-1ex}{$\scriptscriptstyle \mathcal D$}}=%
		}}%
	}{%
		\mathrel{\overset{\mathcal D}=}%
	}{%
		\mathrel{\overset{\mathcal D}=}%
	}%
}
\newcommand\ppe{\phantom{=}}
\newcommand\like[3][c]{%
	\mathchoice{%
		\makeboxlike[#1]{%
			\ensuremath{\displaystyle\relax#2}%
		}{%
			\ensuremath{\displaystyle\relax#3}%
		}%
	}{%
		\makeboxlike[#1]{%
			\ensuremath{\textstyle\relax#2}%
		}{%
			\ensuremath{\textstyle\relax#3}%
		}%
	}{%
		\makeboxlike[#1]{%
			\ensuremath{\scriptstyle\relax#2}%
		}{%
			\ensuremath{\scriptstyle\relax#3}%
		}%
	}{%
		\makeboxlike[#1]{%
			\ensuremath{\scriptscriptstyle\relax#2}%
		}{%
			\ensuremath{\scriptscriptstyle\relax#3}%
		}%
	}
}
\newcommand\bernoulli{\mathrm B}
\newcommand\hypergeometric{\mathrm{HypG}}
\newcommand\multinomial{\mathrm{Mult}}
\newcommand\binomial{\mathrm{Bin}}
\newcommand\betadist{\mathrm{Beta}}
\DeclarePairedDelimiterXPP\distFromWeights[1]{\mathcal D}(){}{#1}
\providecommand\distFromWeights{} %
\newcommand\BetaFun{\mathrm B}
\renewcommand\given{\mathbin{\mid}}
\newcommand\harm[1]{\ensuremath{H_{#1}}}
\newcommand\ce{\colonequals}
\newcommand{\surroundedmath}[3]{%
	\mathchoice{%
		#1{#2{#3}#2}%
	}{%
		#1{#3}%
	}{%
		#1{#3}%
	}{%
		#1{#3}%
	}%
}
\newcommand\rel[1]{\surroundedmath{\mathrel}{\:}{#1}}
\newcommand\wrel[1]{\surroundedmath{\mathrel}{\;}{#1}}
\newcommand\wwrel[1]{\surroundedmath{\mathrel}{\;\;}{#1}}
\newcommand\bin[1]{\surroundedmath{\mathbin}{\:}{#1}}
\newcommand\wbin[1]{\surroundedmath{\mathbin}{\;}{#1}}
\newcommand\wwbin[1]{\surroundedmath{\mathbin}{\;\;}{#1}}
\newcommand{\eqwithref}[2][c]{%
	\relwithref[#1]{#2}{=}%
}
\newcommand{\relwithref}[3][c]{%
	\mathrel{\underset{\mathclap{\makebox[\widthof{$=$}][#1]{\scriptsize\wref{#2}}}}{#3}}%
}
\newcommand{\relwithtext}[3][c]{%
	\mathrel{\underset{\mathclap{\makebox[\widthof{$=$}][#1]{\scriptsize#2}}}{#3}}%
}
\newcommand\toll[2][]{%
	\ensuremath{%
	\ifthenelse{\equal{#1}{}}{%
		T_{#2}%
	}{%
		T_{#2}({#1})%
	}}%
}
\newcommand\istoll[2][]{%
	\ensuremath{%
	\ifthenelse{\equal{#1}{}}{%
		\iscost_{\!#2}%
	}{%
		\iscost_{\!#2}({#1})%
	}}%
}
\newcommand\insertsortcost{W}
\newcommand\iscost{\insertsortcost}
\newcommand\indicator[1]{\indicatornobraces{\{#1\}}}
\newcommand\indicatornobraces[1]{\mathds 1_{#1}}
\newcommand\characteristicvector[2][]{%
	\ifthenelse{\equal{#1}{}}{%
		\mathds1_{#2}%
	}{%
		\ui{\mathds1_{#2}}{#1}%
	}
}
\newcommand\entropy[1][]{%
	\ensuremath{%
		\mathcal H_{#1}%
	}\xspace%
}
\newcommand\QSentropy{%
	\entropy[Q]%
}
\newcommand\discreteEntropy[1][]{%
	\ensuremath{
		\entropy[] \ifthenelse{\equal{#1}{}}{ }{ (#1) }
	}\xspace%
}
\newcommand\contentropy[1][]{%
	\ensuremath{%
		\entropy[\ln]%
		\ifthenelse{\equal{#1}{}}{ }{ (#1) }%
	}\xspace%
}
\newcommand\binaryEntropy[1][]{%
	\ensuremath{
		\entropy[\ld] \ifthenelse{\equal{#1}{}}{ }{ (#1) }%
	}\xspace%
}
\newcommand\convAS{	
	\mathchoice{%
		\mathrel{\overset{\raisebox{0ex}{\text{a.s.}}}\longrightarrow}%
	}{%
		\mathrel{\like{\longrightarrow}{%
			\overset{\raisebox{-1ex}{\kern-1pt\scriptsize\text{a.s.}}}\longrightarrow%
		}}%
	}{%
		\mathrel{\overset{\text{a.s.}}\longrightarrow}%
	}{%
		\mathrel{\overset{\text{a.s.}}\longrightarrow}%
	}%
}
\DeclareRobustCommand\arrayA{%
	\ensuremath{\mathchoice{%
		\smash{\raisebox{-.2pt}{\scalebox{1.25}[1.18]{$\mathtt{A}$}}}%
	}{%
		\smash{\raisebox{-.2pt}{\scalebox{1.25}[1.18]{$\mathtt{A}$}}}%
	}{%
		\smash{\raisebox{-.2pt}{\scalebox{1.25}[1.18]{$\scriptstyle\mathtt{A}$}}}%
	}{%
		\smash{\raisebox{-.2pt}{\scalebox{1.25}[1.18]{$\scriptscriptstyle\mathtt{A}$}}}%
	}}\xspace%
}
\providecommand\arrayA
\newcommand\bmprob[2][]{%
	\ifthenelse{\equal{#1}{}}{%
		\ifthenelse{\equal{#2}{}}{%
			\mathbb{P}_{\mkern-3mu\scriptscriptstyle\textrm{BM}}%
		}{%
			\ui { \mathbb{P}_{\mkern-3mu\scriptscriptstyle\textrm{BM}} } {#2}%
		}%
	}{%
		\ifthenelse{\equal{#2}{}}{%
			\mathbb{P}_{\mkern-3mu\scriptscriptstyle#1}%
		}{%
			\ui { \mathbb{P}_{\mkern-3mu\scriptscriptstyle#1} } {#2}%
		}%
	}%
}
\newcommand\steadystatemissrate[1]{%
	\ifthenelse{\equal{#1}{}}{%
		f
	}{%
		f_{#1}
	}
}
\DeclarePairedDelimiterXPP\dirichletExpectation[2]{\ESymbol_{D(#2)}}[]{}{%
	\renewcommand\given{\nonscript\:\delimsize\vert\nonscript\:\mathopen{}}%
	#1%
}
\providecommand\dirichletExpectation{} %
\newcommand*\virtualizename{\mathit{virtualClass}}
\newcommand*\virtualize[2][]{%
	\ifthenelse{\equal{#2}{}}{%
		\virtualizename_{#1}%
	}{%
		\virtualizename_{#1}(#2)%
	}%
}
\newcommand*\costof[1][]{%
	\ifthenelse{ \equal{#1}{} }{%
		\mathit{cost}%
	}{%
		\mathit{cost}_{#1}%
	}
}
\newcommand*\nextstate[1][]{%
	\ifthenelse{ \equal{#1}{} }{%
		\mathit{next}%
	}{%
		\mathit{next}_{#1}%
	}
}
\let\oldalign\align
\let\endoldalign\endalign
\newcommand*{\numberthis}{\stepcounter{equation}\tag{\theequation}}
\newenvironment{galign}[1][]{%
	\subequations%
	#1%
	\def\theequation{\theparentequation{\protect\footnotesize\hspace{.1em}.\arabic{equation}}}%
	\align%
}{%
	\endalign%
	\endsubequations%
}
\let\oldparagraph\paragraph
\renewcommand\paragraph{%
    \@ifstar{\myparagraphStar}{\myparagraphNoStar}%
}
\newcommand\myparagraphStar[1]{%
	\oldparagraph*{#1.}%
}
\newcommand\myparagraphNoStar[2][]{%
	\ifthenelse{\equal{#1}{}}{%
		\oldparagraph[#2]{#2.}%
	}{%
		\oldparagraph[#1]{#2.}%
	}%
}
\colorlet{symmetriccolor}{black!10}
\colorlet{grayedout}{black!50}
\tikzset{
	inner node/.style={
		circle,draw,inner sep=1pt,
		minimum size=1.7em,
		fill=backgroundshading!50,
	},
	leaf node/.style={
		draw,rectangle,minimum size=1.25em,inner sep=1pt,
		fill=backgroundshading!75,
	},
	inner node concise/.style={
		circle,draw,fill,minimum size=.8ex,inner sep=0pt,
	},
	leaf node concise/.style={
		draw,fill,rectangle,minimum size=.5ex,inner sep=0pt,
	},
	leaf node empty/.style={
		draw=none,fill=none,rectangle,minimum height=.5ex,minimum width=.15ex,
	},
	partitioning node/.style={
		draw,inner sep=1pt,minimum height=1.5em,minimum width=1.5em,
		rectangle,rounded corners=2pt,
		fill=backgroundshading!50,
	},
	graphs/comparison tree detailed/.style={
		extended binary tree layout,
		nodes={inner node},
		leaf/.style={leaf node},
		math nodes,
		level distance=0ex, sibling distance=0ex,
		sibling sep=.6em, level sep=.4em,
	},
	graphs/comparison tree detailed compact/.style={
		extended binary tree layout,
		nodes={inner node,minimum size=1.05em,inner sep=0pt},
		leaf/.style={leaf node,minimum size=.6em,font=\scriptsize},
		math nodes,
		level distance=0ex, sibling distance=0ex,
		sibling sep=.6em, level sep=.4em,
	},
	graphs/recursion tree detailed/.style={
		tree layout,
		significant sep=.5em,
		missing nodes get space,
		nodes={partitioning node},
		leaf/.style={partitioning node},
		math nodes,
		level distance=0ex, sibling distance=0ex,
		sibling sep=.8em, level sep=.8em,
	},
	graphs/ternary comparison tree detailed/.style={
		tree layout,
		significant sep=.5em,
		missing nodes get space,
		nodes={inner node,minimum size=1.8em,},
		leaf/.style={leaf node},
		math nodes,
		edge quotes={
			pos=.33,
			draw,circle,fill=backgroundshading,
			inner sep=0pt,
			minimum size=8pt,
			font=\tiny,
			anchor=center,
		},
		level distance=0ex, sibling distance=0ex,
		sibling sep=1.1em, level sep=1.3em,
	},
	graphs/comparison tree concise/.style={
		binary tree layout,
		significant sep=0ex,
		nodes={inner node concise},
		empty nodes,
		leaf/.style={leaf node empty},
		level distance=0ex, sibling distance=0ex,
		sibling sep=.8ex, level sep=.4ex,
	},
}
\tikzset{
	bm edge node/.style={
		draw,
		diamond,
		inner sep=0pt,
		fill=backgroundshading,
		minimum size=8pt,
		font=\tiny,
		pos=.25,
	},
	st/.style={
		circle split,draw,
		inner sep=1pt,font={\footnotesize},
		minimum size=2em,
		fill=backgroundshading!50,
	},
}
\newcommand*\bmnode{\kern-.5pt$\lightning$\kern-1pt}
\def\mydots{\xleaders\hbox to.5em{\hfill.\hfill}\hfill}
\newlength\tmpLenNotations
\newenvironment{notations}[1][10em]{%
	\small
	\newcommand\notationentry[1]{%
		\settowidth\tmpLenNotations{##1}%
		\ifthenelse{\lengthtest{\tmpLenNotations > \labelwidth}}{%
			\parbox[b]{\labelwidth}{%
				\makebox[0pt][l]{##1}\\%
			}%
		}{%
			\mbox{##1}%
		}%
		\mydots\relax%
	}%
	\begin{list}{}{%
		\setlength\labelsep{0em}%
		\setlength\labelwidth{#1}%
		\setlength\leftmargin{\labelwidth+\labelsep+1em}%
	}
		\newcommand\notation[1]{\item[{##1}]}%
	\raggedright
}{%
	\end{list}
}
	\let\oldsubsection\subsection
	\renewcommand\subsection[1]{%
		\oldsubsection{#1.}
	}
\newsavebox\tabletmp%
\newlength\tmpheight%
\newlength\tmpdepth%
\let\epsilon\varepsilon
	\title{%
		Quicksort Is Optimal For Many Equal Keys%
		\thanks{%
			\strut{}An extended version of this paper with full proofs
			is available online: \href{https://arxiv.org/abs/1608.04906}{\texttt{arxiv.org/abs/1608.04906}}.
		}%
	}
	\title{%
		Quicksort Is Optimal For Many Equal Keys%
	}
\author{%
	Sebastian Wild%
	\thanks{%
		David R.\ Cheriton School of Computer Science, 
		University of Waterloo,
		Email: \texttt{wild@uwaterloo.ca}%
	}
}
\begin{document}

\maketitle

\ifproceedings{%
	\fancyfoot[R]{\footnotesize{\textbf{Copyright \textcopyright\ 2017 by SIAM\\
	Unauthorized reproduction of this article is prohibited}}}

}{}

\begin{abstract}
\noindent
I~prove that 
the average number of comparisons for median-of-$k$ Quicksort 
(with fat-pivot \aka three-way partitioning) 
is asymptotically only a constant $\alpha_k$ times worse than 
the lower bound for sorting random multisets of $n$ elements with $\Omega(n^\epsilon)$ duplicates of each value (for any $\epsilon>0$).
The constant is
$\alpha_k = \ln(2) / \bigl(\harm{k+1}-\harm{\smash{(k+1)/2}} \bigr)$,
which converges to~$1$ as $k\to\infty$,
so median-of-$k$ Quicksort is asymptotically optimal for inputs with many duplicates.
This partially resolves a conjecture by 
Sedgewick and Bentley (1999, 2002)
and constitutes the first progress on the analysis of 
Quicksort with equal elements since Sedgewick's 1977 article.

\end{abstract}

\section{Introduction}

Sorting is one of the basic algorithmic tasks that is used as a fundamental
stepping stone for a multitude of other, more complex challenges.
Quicksort is the method of choice for sorting in practice.
Any undergraduate student learns as a justification for its quality that the average number 
of comparisons is $\Theta(n\log n)$ for a random permutation of $n$ distinct elements, 
putting it into the same complexity class as, \eg, Mergesort.
By choosing random pivots we can make this average the \emph{expected} behavior 
regardless of the order of the input.
Moreover, the hidden constants in $\Theta(n\log n)$ are actually small: 
Quicksort needs $2 \ln(2) n\ld(n) \pm \Oh(n)$
comparisons in expectation, \ie, asymptotically only a factor $2\ln(2)\approx 1.39$ more than the 
information-theoretic lower bound for any comparison-based sorting method.%
\footnote{%
	I~write $\ld$ for $\log_2$\protect\ifproceedings{}{ (\textsl{\underline logarithmus \underline dualis})},
	and $\ln$ for $\log_e$\protect\ifproceedings{}{ (\textsl{\underline logarithmus \underline naturalis})}.%
}
\ifproceedings{\par}{}%
By choosing the pivot as median of $k$ sample elements in each step, 
for $k$ a fixed, odd integer, this constant can be reduced to 
$\alpha_k = \ln(2) / \bigl(\harm{k+1}-\harm{\smash{(k+1)/2}} \bigr)$~\citep{Sedgewick1977},
where \ifproceedings{\smash}{}{$\harm k = \sum_{i=1}^k 1/i$}.
Note that $\alpha_k$ converges to $1$ as $k\to\infty$, so Quicksort's expected behavior 
is asymptotically optimal on random permutations.

\ifproceedings{}{
	It is also well-known that Quicksort rarely deviates much from this expected behavior.
	Folklore calculations show that Quicksort needs, \eg, at most 
	$7$ times the expected number of comparisons with probability at least $1-1/n^2$.
	This result can be strengthened~\citep{McDiarmidHayward1996} 
	to guarantee at most $\alpha$ times the expected costs for any $\alpha>1$
	with probability $1-\Oh(n^{-c})$ for all constants $c$ (independent of $\alpha$).
	Median-of-$k$ Quicksort is hence optimal not only in expectation, but almost always!
}

These guarantees certainly justify the wide-spread use of Quicksort, 
but they only apply to inputs with $n$ \emph{distinct} elements.
In many applications, duplicates (\ie, elements that are equal \wrt the order relation) 
appear naturally. 
The SQL clause \texttt{GROUP BY C}, for example, 
can be implemented by first sorting rows \wrt column \texttt{C}
to speed up the computation of aggregating functions like \texttt{COUNT} or \texttt{SUM}.
Many rows with equal \texttt{C}-entries are expected in such applications.

At the \textsl{KnuthFest} celebrating Donald Knuth's $1\.000\.000_2$th birthday,
Robert Sedgewick gave a talk titled 
\textsl{``Quicksort is optimal''}~\citep{SedgewickBentley2002},%
\footnote{%
	\strut{}Sedgewick presented the conjecture in a less widely-known talk already 
	in 1999~\citep{SedgewickBentley1999}.
	Since they never published their results in an article,
	I briefly reproduce their arguments in \wref{sec:previous-work}.
}
presenting a result that is at least very nearly so:\linebreak[3]
Quicksort with \textsl{fat-pivot} (\aka \textsl{three-way}) 
partitioning uses $2\ln 2 \approx 1.39$ times
the number of comparisons needed by \emph{any} comparison-based sorting method 
for \emph{any} randomly ordered input, with or without equal elements.
He closed with the conjecture that this factor can be made arbitrarily close to $1$
by choosing the pivot as the median of $k$ elements for sufficiently large~$k$.
This statement is referred to as the Sedgewick-Bentley conjecture.

The Sedgewick-Bentley conjecture is a natural 
generalization of the distinct-keys case (for which the statement is known to hold true),
and sorting experts will not find the claim surprising, since
it is the theoretical justification for a long-established best practice 
of general-purpose sorting:
as observed many times in practice,
Quicksort with the fat-pivot partitioning method by \citet{Bentley1993}%
\footnote{%
	The\ifproceedings{}{ classic} implementation by \citet{Bentley1993} from 1993 is used%
	\ifproceedings{}{ as default sorting methods} in important programming libraries, \eg, 
	the GNU C++ Standard Template Library (STL) and the Java runtime library (JRE).
	Since version 7, the JRE uses dual-pivot Quicksort instead, 
	but if the two sampled pivots are found \emph{equal}, 
	it falls back to fat-pivot partitioning by \citeauthor{Bentley1993}.
}
and \textsl{ninther} (\aka \textsl{pseudomedian of nine}) pivot sampling
comes to within \emph{ten percent} of the optimal expected comparison count 
for inputs with or without equal keys.

In this paper, I confirm the Sedgewick-Bentley conjecture for inputs with ``many duplicates'',
\ie, where every key value occurs $\Omega(n^\epsilon)$ times
for an arbitrary constant $\epsilon>0$.
To do so I show a bound on the constant of proportionality, namely
the same $\alpha_k$ mentioned above.
(Sedgewick and Bentley did not include a guess 
about the constant or the speed of convergence in their conjecture).
While Quicksort can certainly be outperformed for particular types of inputs,
the combination of simple, efficient code and almost universal proven optimality is unsurpassed;
it is good news that the latter also includes inputs with equal keys.

Confirming the Sedgewick-Bentley conjecture is not a surprising outcome,
but it has been surprisingly resistant to all attempts to formally prove it: 
no progress had been made in the nearly two decades since it was posed
(see \wref{sec:main-result} for some presumable reasons for that),
so restricting the problem might be sensible.
I would like to remark that my restriction of many duplicates 
is a technical requirement of the \emph{analysis},
but we have no reason to believe that Quicksort performs much different 
when it is violated.
More importantly, it is \emph{not} the purpose of this paper to suggest 
sorting inputs with many duplicates as a natural problem per se, 
for which tailored methods shall be developed.
(One is tempted to choose a hashing-based approach
when we \emph{know} that the number $u$ of different values is small,
but my results show that precisely for such instances, Quicksort will 
be competitive to any tailored algorithm!)
It must be seen as an idiosyncrasy of the present analysis one
might try to overcome in the future.

\ifproceedings{}{
	It is a strength of Quicksort is that it smoothly \emph{adapts} to the actual amount
	of duplication without requiring explicit detection and special handling of that case.
	The concept is analogous (but orthogonal) to \textsl{adaptive} sorting 
	methods~\citep{EstivillCastro1992} that similarly take 
	advantage of existing (partial) \emph{order} in the input.
	The purpose of this paper is thus to finally deliver a mathematical proof that
	median-of-$k$ Quicksort is an optimal ``entropy-adaptive''
	(\aka \textsl{distribution sensitive}~\citep{SandeepGupta1999})
	sorting method, at least for a large class of inputs.
}

\paragraph{Methods}
For the analysis we will work in an alternative input model:
instead of fixing the \emph{exact} multiplicities of a multiset, 
we fix a \emph{discrete probability distribution} over $[1..u]$
and draw $n$ elements independently and identically distributed (\iid) 
according to that distribution.
We will see that going from fixed to expected multiplicities 
only increases the average sorting costs, so the \iid model provides
an upper bound.

The analysis of Quicksort on discrete \iid inputs proceeds in two steps: 
First, I use that costs of (median-of-$k$) Quicksort correspond
to costs of searching in a ($k$-fringe-balanced) binary search tree (BST).
A~concentration argument shows that for inputs with many duplicates,
the search tree typically reaches a stationary state 
after a short prefix of the input,
and is thus independent of the (larger) rest of the input.

The second step is to determine the expected search costs
in a $k$-fringe-balanced BST built by repeatedly inserting \iid elements
until all $u$ values appear in the tree; (I~call the tree \textsl{saturated} then).
I show that the search costs are asymptotically $\alpha_k$ times the \textsl{entropy}
of the universe distribution.
To do so I derive lower and upper bounds from the recurrence of costs 
(with the \emph{vector} of probability weights as argument)
using the aggregation property of the entropy function.
To the best of my knowledge, the analysis of search costs in fringe-balanced trees 
with equal keys is novel, as well.

Most used techniques are elementary, but keeping the error terms under control
(so that $\Omega(n^\epsilon)$ duplicates suffice for any $\epsilon>0$) 
made some refinements of classical methods necessary 
that might be of independent interest.

\paragraph{Outline}
We start with some notation and a collection of known results (\wref{sec:preliminaries}),
and introduce the input models (\wref{sec:input-models}).
\wref{sec:previous-work} is devoted to related work on sorting in the presence 
of equal elements.
We continue in \wref{sec:main-result} with a formal statement of the 
main result and 
a briefly discussion why some previous attempts did not succeed.
In \wref[Sections]{sec:equals-quicksort-and-search-trees}{}\,--\,\ref{sec:expected-node-depth-asymptotic},
we study the cost of fat-pivot Quicksort on discrete \iid inputs as outlined 
above%
\ifproceedings{%
	.
}{%
	: \wref{sec:equals-quicksort-and-search-trees} demonstrates
	that it suffices to study search costs in trees to analyze Quicksort,
	and \wref{sec:saturated-trees} gives a stochastic description of these trees.
	In \wref{sec:equals-separating-n-and-q}, we exploit the assumption of
	many equal keys to separate the influence of the input size from the influence
	of different universe distributions (the ``first step'' from above).
	In \wref{sec:expected-node-depth-asymptotic} we then derive an asymptotic
	approximation for the search costs in fringe-balanced trees (second step).
}%
We establish a lower bound for the \iid model in \wref{sec:lower-bound}.
\wref{sec:proof-main-result} proves the main result by combining all pieces.
The results are summarized and put into context in \wref{sec:conclusion}.

\ifproceedings{
	There is an extended version of this paper available online
	\href{https://arxiv.org/abs/1608.04906}{\texttt{arxiv.org/abs/1608.04906}}
	that contains full proofs and more detailed comments.
}{%
	\wref{app:notations} collects all notation used in this paper for reference.
	In \wref{app:height-of-recursion-trees}, we formally prove
	that fringe-balanced search trees have logarithmic height with high probability
	in the presence of equal keys.
}

\section{Preliminaries}
\label{sec:preliminaries}

We start with some common notation and 
a precise description of fat-pivot Quicksort with median-of-$k$ sampling,
as well as the corresponding a search tree variant, the $k$-fringe-balanced trees.
We collect further known results that will be used later on%
\ifproceedings{%
	.
}{%
	here for the reader's convenience: some classical tail bounds (\wref{sec:preliminaries-tail-bounds}),
	a fact on the height concentration of randomly built search trees 
	(\wref{sec:preliminaries-log-height-whp}),
	and a few properties of the entropy function used later on
	(\wref{sec:preliminaries-properties-entropy}).
}

\subsection{Notation}
\label{sec:notation}

\ifproceedings{}{
	This section contains the most important notation;
	a comprehensive list is given in \wref{app:notations} for reference.
}

I~write vectors in bold font $\vect x = (x_1,\ldots,x_n)$ and always understand them
as column vectors (even when written in a row in the text).
By default, all operations on vectors are meant component-wise.
By $\total{\vect x}$, I~denote $x_1+\cdots+x_n$, the \emph{total} of $\vect x$.
The \emph{profile} (or multiplicities vector) a multiset $M$ over $[u] = \{1,\ldots,u\}$
is a vector $\vect x = (x_1,\ldots,x_u)$ where every values $v\in[u]$ occurs $x_v$
times in $M$.

I use Landau-symbols ($\Oh$, $\Omega$ etc.)\ in the formal sense of \citet[Section A.2]{Flajolet2009}
and write $f(n) = g(n) \pm \Oh(h(n))$ to mean $|f(n) - g(n)| = \Oh(h(n))$.
(I use $\pm$ to emphasize the fact that we specify $f(n)$ up to an additive error of $h(n)$ 
without restricting the sign of the error.)

For a random variable $X$, $\E{X}$ is its expectation and $\Prob{X=x}$ denotes the probability of 
the event $X=x$. 
For $\vect q\in[0,1]^u$ with $\total{\vect q}=1$, 
I~write $U\eqdist \distFromWeights{\vect q}$ to say that $U$ is a 
random variable with $\Prob{U=i} = q_i$ for $i\in[u]$;
generally, $\eqdist$ denotes equality in distribution.
$\multinomial(n,\vect q)$ is a \textsl{multinomial distributed} variable
with $n$ trials from $\distFromWeights{\vect q}$, and
$\betadist(\alpha,\beta)$ is a \textsl{beta distributed} variable with parameters $\alpha,\beta>0$.
By $\entropy$ or $\entropy[\ld]$ I~denote the 
binary \textsl{Shannon entropy} $\entropy(\vect q) = \sum_{i=1}^u q_i \ld(1/q_i)$;
likewise $\entropy[\ln]$ is the base $e$ entropy.

I say an event $E=E(n)$ occurs \emph{with high probability (\whp)}
if for \emph{any} constant $c$ holds
$\Prob{E(n)} = 1 \pm \Oh(n^{-c})$ as $n\to\infty$.
Note that other sources use \whp if any such $c\in\N$ exists, 
which is a much weaker requirement.
Similarly for $X_1,X_2,\ldots$ a sequence of real random variables,
I say ``$X_n = \Oh(g(n))$ \whp'' if for every constant $c$ there is a constant $d$ so that
$\Prob{|X_n| \le d |g(n)|} = 1 \pm \Oh(n^{-c})$ as $n\to\infty$.

\subsection{Reference Version of Fat-Pivot Quicksort}
\label{sec:fat-pivot-quicksort}

By ``fat-pivot'' I mean a partitioning method that splits the input into three parts:
elements (strictly) smaller than the pivot, elements equal to the pivot, and
elements (strictly) larger than the pivot.
Instead of only one pivot element in binary partitioning, we now obtain a ``fat'' pivot segment 
that separates the left and right subproblems.
This idea is more commonly known as \textsl{three-way partitioning,}
but I prefer the vivid term fat pivot to make the distinction between fat-pivot partitioning and 
partitioning around \emph{two pivots} clear.
The latter has become fashionable again in recent years~\citep{javacoredevel2009,Wild2012,Wild2016}.
Since all duplicates of the pivot are removed before recursive calls,
the number of partitioning rounds is bounded by 
the number of different values in the input.

To simplify the presentation, I~assume in this work that partitioning 
\emph{retains the relative order} of elements that go to the same segment.
A~corresponding reference implementation operating on linked lists 
is given in \wref{alg:list-quicksort}.
It uses the median of $k$ sample elements as pivot, where 
the sample size $k=2t+1$, $t\in\N_0$, is a tuning parameter of the algorithm. 
We consider $k$ a fixed constant in the analysis.

\begin{algorithm}[tbhp]
	\caption{%
		Fat-pivot median-of-$k$ Quicksort.%
	}
	\label{alg:list-quicksort}
	\vspace*{-2ex}
	\small
	\begin{codebox}
	\Procname{$\proc{Quicksort}_{k}(L)$}
	\li	\If $(L.\id{length}) \le k - 1$
	\li	\Then
			\Return $\proc{Insertionsort}(L)$
		\EndIf
	\li	$P \gets \proc{Median}(L[1..k])$
	\li	$L_1,L_2,L_3 \gets \text{new List}$
	\li	\While $\lnot\, L.\id{empty}$
	\li	\Do
			$U \gets L.\id{removeFirstElement}()$
	\li		\kw{case distinction on} $\id{cmp}(U,P)$
			\Do
	\li			\kw{in case} $<$ \kw{do} $L_1.\id{append}(U)$
	\li			\kw{in case} $=$ \kw{do} $L_2.\id{append}(U)$
	\li			\kw{in case} $>$ \kw{do} $L_3.\id{append}(U)$
			\End
	\li		\kw{end cases}
		\EndWhile
	\li	$L_1 \gets \proc{Quicksort}_{k}(L_1)$
	\li	$L_3 \gets \proc{Quicksort}_{k}(L_3)$
	\li	\Return $L_1.\id{append}(L_2).\id{append}(L_3)$
	\end{codebox}
	\vspace*{-3ex}
\end{algorithm}

I always mean \wref{alg:list-quicksort} when speaking of Quicksort in this paper.
Moreover, by its costs I always mean the number of ternary key comparisons,
\ie, the number of calls to \id{cmp}.
Apart from selecting the median,
one partitioning step in \wref{alg:list-quicksort} uses exactly $n$
ternary comparisons.
\ifproceedings{%
}{
	Some of these will be redundant because they already happened while
	determining the median of the sample.
	We ignore this optimization here.
}

\wref{alg:list-quicksort} serves as precise specification for the analysis
but it is not the method of choice in practice.
However, our results apply to practical methods, as well
(see the discussion in \wref{sec:recursion-trees}).

\subsection{Quicksort and Search Trees}
\label{sec:quicksort-search-trees}

It has become folklore that the comparison costs of (classic) Quicksort are the same as
the internal path length of a BST, 
which equals the cost to construct the tree by successive insertions.
\Citet{Hibbard1962} first described this fact in 1962, right after Hoare's 
publication of Quicksort itself~\citep{Hoare1961a,Hoare1962}.
Formally we associate a \emph{recursion tree} to each execution of Quicksort:
Each partitioning step contributes a node labeled with the used pivot value.
Its left and right children are the recursion trees of the 
left and right recursive calls, respectively.
\wref{fig:equals-recursion-tree-simple} shows an example.

\begin{figure}[tbhp]
	\begin{captionbeside}{%
		Execution trace of Quicksort without pivot sampling
		and its recursion tree.
		\protect\ifproceedings{}{%
			The recursion tree coincides with the BST obtained by successively
			inserting the original input.%
		}%
		An animated version of this example%
		\protect\ifproceedings{}{ (and its extensions to finge-balancing and equal keys)}
		is available online:
		\href{https://youtu.be/yi6syj9nksk}{\texttt{youtu.be/yi6syj9nksk}}.
		\vspace{2ex}%
	}
	\newcommand\drawarray[4]{
		\begin{scope}[shift={(#1,-1.5*#2)}]
		\draw[fill=backgroundshading!50] (0.5,0) rectangle +(#3,1) ;
		\foreach \x/\u in {#4} {
			\node at (\x,0.5) {$\u$} ;
		}
		\end{scope}
	}%
	\newcommand\drawpivot[2]{
		\node[headingscolor] (P#1) at (#1,-1.5*#2+0.5) {$#1$} ;
		\node[draw,very thick,circle,headingscolor,inner sep=2.5pt] 
			(#1) at (#1+9.5,-1.5*#2+1) {$#1$} ;
	}%
	\plaincenter{%
	\begin{tikzpicture}[xscale=.45,yscale=.45,every node/.append style={font=\scriptsize}]
		\drawarray009{1/7,2/4,3/2,4/9,5/1,6/3,7/8,8/5,9/6}
		\drawarray016{1/4,2/2,3/1,4/3,5/5,6/6}
		\drawarray712{1/9,2/8}
		\drawpivot71
		\drawarray023{1/2,2/1,3/3}
		\drawarray422{1/5,2/6}
		\drawarray721{1/8}
		\drawpivot42
		\drawpivot92
		\drawarray031{1/1}
		\drawarray231{1/3}
		\drawarray531{1/6}
		\drawpivot23
		\drawpivot53
		\drawpivot83
		\drawpivot14
		\drawpivot34
		\drawpivot64
		
		\draw[thick,headingscolor]
			(7) -- (4) -- (2) -- (1)
			(7) -- (9) -- (8)
			(4) -- (5) -- (6)
			(2) -- (3)
		;
		\begin{pgfonlayer}{background}
			\draw[very thick,headingscolor!25!backgroundshading]
				(P7) -- (P4) -- (P2) -- (P1)
				(P7) -- (P9) -- (P8)
				(P4) -- (P5) -- (P6)
				(P2) -- (P3)
			;
		\end{pgfonlayer}
	\end{tikzpicture}%
	}\hspace*{-2em}
	\ifproceedings{\label{fig:equals-recursion-tree-simple}}{}
	\end{captionbeside}
	\ifproceedings{}{\label{fig:equals-recursion-tree-simple}}
\end{figure}

Recursion trees obviously fulfill the search tree property;
in fact, the recursion tree is \emph{exactly} the binary search tree that results 
from successively inserting the input elements into an initially empty tree 
(in the order they appear in the original input)
if Quicksort always uses the first element of the list as pivot
and partitions so that the relative order of elements smaller resp.\ larger 
than the pivot is retained (as is done in \wref{alg:list-quicksort}).
Even the \emph{same set} of comparisons 
is then used in both processes, albeit in a different order.

\subsection{Fringe-Balanced Trees}
\label{sec:fringe-balanced-trees}

The correspondence extends to median-of-$k$ Quicksort with 
an appropriate \textsl{fringe-balancing rule} for BSTs.
This is a little less widely known, but also well researched~\citep{Drmota2009}.
Upon constructing a $k$-fringe-balanced search tree, 
we \emph{collect} up to $k-1$ elements in a leaf.
(This corresponds to truncating the recursion in Quicksort 
for $n \le k-1$ and leave small subproblems for Insertionsort.)
Once a leaf has collected $k = 2t+1$ elements, it is \emph{split:}
Simulating the pivot selection process in Quicksort,
we find the median from the $k$ elements in the leaf and use it as 
the label of a new inner node.
Its children are two new leaves with the elements that did not become pivots:
the $t$ smaller elements go to the left, the $t$ larger to the right.
Because of the dynamic process of splitting leaves, the correspondence
is best shown in motion; the supplementary animation
(\href{https://youtu.be/yi6syj9nksk}{\texttt{youtu.be/yi6syj9nksk}})
includes fringe-balanced trees.

More formally,
a $k$-fringe-balanced tree, for $k=2t+1$ an odd integer, 
is a binary search tree whose leaves can store between $0$ and $k-1$ elements.
An empty fringe-balanced tree is represented as $\mathit{Leaf}()$, a single empty leaf.
The $k$-fringe-balanced tree $\mathcal T$ 
corresponding to a sequence of elements $x_1,\ldots,x_n$
is obtained by successively inserting the elements into an initially empty tree
using \wref{alg:fb-insert};
more explicitly, 
with $T_0 = \mathit{Leaf}()$ and $T_i = \proc{Insert}_k(\mathcal T_{i-1},x_i)$ for $1\le i\le n$,
we have $\mathcal T = \mathcal T_n$.

\begin{algorithm}[tbhp]
	\caption{%
		Insert into $k$-fringe-balanced tree.%
	}%
	\label{alg:fb-insert}
	\vspace*{-2ex}
	\small
	\begin{codebox}
	\Procname{$\proc{Insert}_{k}(\mathcal T, x)$}
	\li	\If $\mathcal T$ is $\mathit{Leaf}(\vect U)$
	\li	\Then
			Append $x$ to $\vect U$ \label{lin:stwl-insert-append-to-leaf}
	\li		\If $|\vect U| \le k - 1$ \kw{then}
				\Return $\mathit{Leaf}(\vect U)$ 
			\kw{end if}
	\zi		\Comment Else: Split the leaf
	\li		$P \gets \proc{Median}(U_1,\ldots,U_k)$
	\li		$C_1,C_2 \gets $ new empty list
	\li		\Foreach $U$ in $\vect U$%
	\li		\Do
				\kw{case distinction} on $\id{cmp}(U,P)$
				\label{lin:stwl-insert-classify-upon-split}
				\Do
	\li				\kw{in case} $U < P$ \kw{do} 
						append $U$ to $C_1$
	\li				\kw{in case} $U > P$ \kw{do} 
						append $U$ to $C_2$
	\zi				\Comment In case $U\isequal P$, we drop $U$.
				\End
	\li			\kw{end cases}
				\EndFor
	\li		\Return $\mathit{Inner}\bigl( P, \mathit{Leaf}(C_1), \mathit{Leaf}(C_2) \bigr)$
	\li \Else 
		 	$\mathcal T$ is $\mathit{Inner}(P,\mathcal T_1,\mathcal T_2)$
	\li		\kw{case distinction} on $\id{cmp}(x,P)$
				\label{lin:stwl-insert-classify-search}
			\Do
\ifproceedings{
	\li			\kw{in case} $x \isequal P$ \kw{do} 
				\Do
	\li				\kern-.5em
					\Return $\mathcal T$ 
					\quad \Comment tree unchanged
				\End
				\label{lin:stwl-insert-duplicate-discarded}
	\li			\kw{in case} $x \mathrel{\like{\isequal}{<}} P$ \kw{do} 
				\Do
	\li				\kern-.5em
					\Return $\mathit{Inner}\bigl(
						P, \proc{Insert}_k(\mathcal T_1,x),\mathcal T_2 
					\bigr)$
				\End
	\li			\kw{in case} $x \mathrel{\like{\isequal}{>}} P$ \kw{do} 
				\Do
	\li				\kern-.5em
					\Return $\mathit{Inner}\bigl(
						P, \mathcal T_1 , \proc{Insert}_k(\mathcal T_2,x)
					\bigr)$
				\End
			\End
}{
	\li			\kw{in case} $x \isequal P$ \kw{do} 
					\Return $\mathcal T$ 
					\quad \Comment tree unchanged
				\label{lin:stwl-insert-duplicate-discarded}
	\li			\kw{in case} $x \mathrel{\like{\isequal}{<}} P$ \kw{do} 
					\Return $\mathit{Inner}\bigl(
						P, \proc{Insert}_k(\mathcal T_1,x),\mathcal T_2 
					\bigr)$
	\li			\kw{in case} $x \mathrel{\like{\isequal}{>}} P$ \kw{do} 
					\Return $\mathit{Inner}\bigl(
						P, \mathcal T_1 , \proc{Insert}_k(\mathcal T_2,x)
					\bigr)$
			\End
}
	\li		\kw{end cases}
		\EndIf
	\end{codebox}
	\vspace*{-3ex}
\end{algorithm}

Growing trees with \wref{alg:fb-insert} enforces a certain balance upon 
the lowest subtrees, 
\ie, at the \emph{fringe} of the tree,
hence the name \emph{fringe-balanced}.
Searching an element in a fringe-balanced tree works as in an ordinary BST,
except for the leaves, where we sequentially search through the buffer%
\ifproceedings{%
	.
}{%
	;\wref{alg:fb-search} shows pseudocode for completeness.
}%
(Note that elements collected in leaves are not sorted.)

\ifproceedings{}{
	\begin{algorithm}[tbhp]
		\caption{%
			Search in fringe-balanced trees.%
		}%
		\label{alg:fb-search}
		\vspace*{-2ex}
		\small
		\begin{codebox}
		\Procname{$\proc{Search}(\mathcal T, x)$}
		\li	\If $\mathcal T$ is $\mathit{Leaf}(\vect U)$
		\li	\Then
			\Return $\proc{SequentialSearch}(\vect U, x)$
		\li \Else 
				$\mathcal T$ is $\mathit{Inner}(P, \mathcal T_1, \mathcal T_2 )$
		\li		\kw{case distinction} on $\id{cmp}(U,P)$
				\Do
		\li			\kw{in case} $U \isequal P$ \kw{do} \Return ``Found''
		\li			\kw{in case} $U \mathrel{\like{\isequal}{<}} P$ \kw{do} 
						\Return $\proc{Search}( \mathcal T_1,x )$
		\li			\kw{in case} $U \mathrel{\like{\isequal}{>}} P$ \kw{do} 
						\Return $\proc{Search}( \mathcal T_2,x )$
				\End
		\li		\kw{end cases}
			\EndIf
		\end{codebox}
		\vspace*{-3ex}
	\end{algorithm}
}

\paragraph{Duplicate Insertions}
Since our analysis of Quicksort with equal keys builds 
on the precise performance of fringe-balanced trees,
we put a particular emphasis is on the treatment of duplicate insertions
in \wref{alg:fb-insert}.
If a key $x$ is already present during insertion and
$x$ appears as key of an \emph{inner} node, the (repeated) insertion has no effect;
but if $x$ is found in a \emph{leaf}, another copy of $x$ is appended to the buffer of that leaf.

\ifproceedings{}{
	This unequal treatment might seem peculiar at first sight, 
	but does exactly what we need:
	duplicates do play a role for selecting pivots
	\parenthesisclause{likely values contribute more duplicates 
	to a random sample and are thus more likely to be selected as pivot}
	but once a pivot has been selected, all its copies are removed in this
	single partitioning step no matter how many there are.
}

\subsection{Tail Bounds}
\label{sec:preliminaries-tail-bounds}

\ifproceedings{
	We recall the classical Chernoff bound.
}{
	For the reader's convenience, we collect a few basic tail bounds here: 
	a classical and a less known Chernoff concentration bound, 
	and a bound for the far end of the lower tail of the binomial distribution $\binomial(n,p)$.
}

\begin{lemma}[Chernoff Bound]
\label{lem:chernoff-bound-binomial}
	Let \(X\eqdist\binomial(n,p)\) for \(n\in\N\) and \(p\in(0,1)\) and let \(\delta\ge0\). Then
	\begin{align*}
			\Prob[\Bigg]{ \biggl|\frac{X}n - p\biggr| \ge \delta }
		&\wwrel\le
			2\exp (-2 \delta^2 n ).
	\end{align*}
\qed\end{lemma}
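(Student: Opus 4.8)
The plan is to use the Chernoff–Cram\'er method: bound the tail of $X/n$ by an exponential moment, estimate that moment through a moment generating function inequality for bounded random variables, and then optimize the free parameter. This lemma is exactly Hoeffding's inequality specialized to i.i.d.\ Bernoulli summands, so the argument is classical, but I sketch it for completeness.

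First I would write $X = B_1 + \cdots + B_n$ with the $B_i$ i.i.d.\ $\bernoulli(p)$, and set $Y_i = B_i - p$, so that $\E{Y_i} = 0$ and $Y_i$ takes values in $[-p,\,1-p]$, an interval of length $1$. For any $t > 0$, Markov's inequality applied to the nonnegative variable $e^{t(Y_1 + \cdots + Y_n)}$ gives
\[
	\Prob[\big]{ X/n - p \ge \delta }
	\wwrel\le e^{-tn\delta} \prod_{i=1}^n \E{e^{tY_i}} .
\]
The crucial ingredient is Hoeffding's lemma: if $\E{Y} = 0$ and $Y \in [a,b]$ almost surely, then $\E{e^{tY}} \le \exp\!\bigl(t^2(b-a)^2/8\bigr)$ for all $t \in \R$. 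I would prove this by convexity: on $[a,b]$ the function $y \mapsto e^{ty}$ lies below the chord joining $(a,e^{ta})$ and $(b,e^{tb})$; taking expectations and using $\E{Y} = 0$ bounds $\E{e^{tY}}$ by an explicit expression whose logarithm $\varphi(t)$ satisfies $\varphi(0) = \varphi'(0) = 0$ and $\varphi''(t) = s(t)\bigl(1-s(t)\bigr)(b-a)^2 \le (b-a)^2/4$ for a suitable $s(t)\in[0,1]$, so that Taylor's theorem yields $\varphi(t) \le t^2(b-a)^2/8$.

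With $b - a = 1$ this gives $\E{e^{tY_i}} \le e^{t^2/8}$, hence $\Prob{X/n - p \ge \delta} \le \exp(-tn\delta + nt^2/8)$ for every $t > 0$; choosing $t = 4\delta > 0$ (the case $\delta = 0$ being trivial, as the claimed bound is then $\ge 1$) makes the exponent equal to $-2\delta^2 n$. Running the identical computation with $Y_i$ replaced by $-Y_i = p - B_i$ (again mean $0$, range $1$) bounds the opposite tail $\Prob{X/n - p \le -\delta}$ by the same quantity. A union bound over these two one-sided events introduces the factor $2$ and completes the proof.

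The only point that requires genuine care is the proof of Hoeffding's lemma; the rest is a one-line optimization. If one prefers to keep the preliminaries short, one can instead cite Hoeffding's inequality directly and omit the self-contained derivation, at the cost of relying on an external reference.
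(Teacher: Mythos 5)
Your proof is correct; it is the standard Chernoff--Cram\'er derivation of Hoeffding's inequality specialized to Bernoulli summands, and the computation (choosing $t=4\delta$ to obtain exponent $-2\delta^2 n$, then a union bound for the two-sided statement) checks out. The paper, however, does not prove this lemma at all: it simply cites it as Theorem~2.1 of McDiarmid (1998) and moves on. So you have supplied a self-contained argument where the paper relies on an external reference. This is a reasonable trade-off, and in fact you anticipate it yourself in your closing sentence. The only thing to watch if you wanted to make the Hoeffding-lemma step fully rigorous is that the Taylor bound $\varphi(t)\le t^2(b-a)^2/8$ requires $\varphi$ to be twice differentiable on the whole relevant interval and the identification of $\varphi''$ with a variance-type quantity $s(1-s)(b-a)^2$; these details are standard but should not be waved away in a formal write-up. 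For the purposes of this paper, citing McDiarmid as the authors do is the cleaner choice.
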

This bound appears, \eg, as Theorem~2.1 of \citet{McDiarmid1998}.
\ifproceedings{%
}{%

	The following lemma is a handy, but less well-known bound for the \emph{multinomial} distribution
	that appears \parenthesisclause{indeed rather hidden} as Lemma~3
	in a paper by \citet{Devroye1983} from 1983.
	(Its proof is also discussed on \textsl{math stack exchange:} 
	\href{http://math.stackexchange.com/q/861058}{\texttt{math.stackexchange.com/q/861058}}.)

	\begin{lemma}[Chernoff Bound for Multinomial]
	\label{lem:chernoff-bound-multinomial}
		Let \(\vect X \eqdist \multinomial(n,\vect p)\) for \(n\in\N\) and \(\vect p \in (0,1)^u\)
		with \(\total{\vect p} = 1\).
		Further, let \(\delta\in(0,1)\) with \(\delta\ge \sqrt{ 20u/n }\) be given.
		Then
		\begin{align*}
				\Prob[\Bigg]{\.\sum_{i=1}^u \biggl|\frac{X_i}n - p_i\biggr| \rel\ge \delta }
			&\wwrel\le
				3 \exp( - \delta^2 n /25).
		\end{align*}
	\qed\end{lemma}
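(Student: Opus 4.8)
The plan is to reduce the $\ell_1$ deviation of the multinomial vector to the deviation of a single \emph{binary} count, apply the elementary Chernoff bound of \wref{lem:chernoff-bound-binomial}, and pay a $2^u$ factor in a union bound which the hypothesis $\delta\ge\sqrt{20u/n}$ is precisely tailored to absorb.

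\textbf{Reduction to merged categories.} For any $\vect a\in\R^u$ one has $\sum_{i=1}^u|a_i| = \max_{S\subseteq[u]}\bigl(\sum_{i\in S}a_i-\sum_{i\notin S}a_i\bigr)$, the maximum being attained at $S=\{i:a_i\ge0\}$. I~would apply this with $a_i = X_i/n-p_i$: using $\total{\vect X}=n$ and $\total{\vect p}=1$, the $S$-term equals $\frac2n(X_S-np_S)$ with $X_S=\sum_{i\in S}X_i$ and $p_S=\sum_{i\in S}p_i$, so that
\begin{align*}
	\sum_{i=1}^u\biggl|\frac{X_i}n-p_i\biggr|
	&\wwrel= \max_{S\subseteq[u]}\frac2n\bigl|X_S-np_S\bigr| .
\end{align*}
In particular the event whose probability we must bound is contained in $\bigcup_{S\subseteq[u]}\{\,|X_S/n-p_S|\ge\delta/2\,\}$.

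\textbf{Union bound and binary Chernoff.} Lumping categories of a multinomial vector yields a binomial, i.e.\ $X_S\eqdist\binomial(n,p_S)$; for $S$ with $p_S\in\{0,1\}$ the event $|X_S/n-p_S|\ge\delta/2$ is impossible, and for every other $S$ we have $p_S\in(0,1)$, so \wref{lem:chernoff-bound-binomial} applies, and a union bound over the $2^u$ subsets gives
\begin{align*}
	\Prob[\Bigg]{\,\sum_{i=1}^u\biggl|\frac{X_i}n-p_i\biggr|\rel\ge\delta}
	&\wwrel\le 2^u\cdot2\exp\bigl(-2(\delta/2)^2n\bigr)
	 \wwrel= 2^{u+1}\exp\bigl(-\delta^2n/2\bigr) .
\end{align*}
Finally I~would check that this is at most $3\exp(-\delta^2n/25)$: equivalently $2^{u+1}\le3\exp\bigl(\tfrac{23}{50}\delta^2n\bigr)$, and since $\delta^2n\ge20u$ the right-hand side is at least $3\exp(9.2u)\ge3\exp(u)\ge3\cdot2^u>2^{u+1}$, which closes the argument.

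The crux — more an observation than a genuine obstacle — is the first step: recognizing that the $\ell_1$ distance equals (twice) the worst-case centered count over a merged sub-family of categories, which collapses the $u$-dimensional concentration question to a one-dimensional binomial one. Everything afterwards is bookkeeping; the only point that needs a little care is that the factor $2^u$ lost in the union bound is exactly what the threshold $\delta\ge\sqrt{20u/n}$ is meant to swallow, which is why the lemma carries the (deliberately generous) constant $25$ in the exponent and the harmless prefactor $3$ rather than the naive $\tfrac12$ and $2^{u+1}$.
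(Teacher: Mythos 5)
Your proof is correct, and it is essentially the argument the paper points to: the paper gives no proof of its own but cites Devroye (1983, Lemma~3) and a \textsl{math stack exchange} discussion, both of which use precisely your reduction---the identity $\sum_i|a_i|=\max_{S\subseteq[u]}\bigl(\sum_{i\in S}a_i-\sum_{i\notin S}a_i\bigr)$ together with $\total{\vect a}=0$ to collapse the $\ell_1$ deviation to a merged binomial $X_S\eqdist\binomial(n,p_S)$, a union bound over the $2^u$ subsets, and absorption of the resulting $2^{u}$ factor via $\delta\ge\sqrt{20u/n}$. Your constants check out ($2^{u+1}\le 3e^{(23/50)\delta^2 n}$ follows from $\delta^2 n\ge 20u$), and you correctly exclude $S=\emptyset$ and $S=[u]$ so that \wref{lem:chernoff-bound-binomial} applies with $p_S\in(0,1)$.
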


}%
\ifproceedings{%
	We also use the following elementary observation;
	a proof appears in the extended version of this paper.
}{
	Finally, we also use the following elementary observation.
}

\begin{lemma}[Far-End Left-Tail Bound]
\label{lem:binomial-not-very-small-whp}
\ifproceedings{~\\}{}
	Let $\ui Xn \eqdist \binomial(n,\ui pn)$ be a sequence of random variables and $k\in\N$
	a constant, where $\ui pn$ satisfies
	$\ui pn = \omega\bigl(\frac{\log n}{n}\bigr)$ as $n\to\infty$
	and is bounded away from $1$, \ie, there is a constant $\epsilon>0$ so that 
	$\ui pn \le 1-\epsilon$ for all $n$.
	Then $\Prob{\ui Xn \le k}  = o(n^{-c})$ as $n\to\infty$ for any constant $c$.
\end{lemma}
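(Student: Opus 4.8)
The plan is to estimate $\Prob{\ui Xn \le k}$ directly: since $k$ is a fixed constant, the event $\{\ui Xn\le k\}$ is a union of only $k+1$ atoms, so a crude termwise bound on the binomial probabilities will suffice. Throughout I write $\mu_n \ce n\ui pn$ for the mean. The hypothesis $\ui pn = \omega(\log n/n)$ is precisely the statement $\mu_n/\log n \to \infty$, so for \emph{every} constant $M$ we have $\mu_n \ge M\log n$ for all large $n$, while trivially $\mu_n \le n$ because $\ui pn \le 1$. These two facts are all I will need about $\ui pn$.

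First I would expand and bound each summand. Writing
\[
  \Prob{\ui Xn \le k} \;=\; \sum_{j=0}^{k} \binom nj (\ui pn)^j(1-\ui pn)^{n-j},
\]
I use $\binom nj \le n^j$ and $1-x\le e^{-x}$ to see that the $j$-th term is at most $(n\ui pn)^j e^{-\ui pn(n-j)} = \mu_n^{\,j}\, e^{-\ui pn(n-j)}$. For $0\le j\le k$ and $\ui pn\le 1$ one has $\ui pn(n-j) = \mu_n - j\ui pn \ge \mu_n-k$, and once $\mu_n\ge 1$ also $\mu_n^{\,j}\le\mu_n^{\,k}$; summing the $k+1$ terms gives
\[
  \Prob{\ui Xn \le k} \;\le\; (k+1)\,e^{k}\,\mu_n^{\,k}\,e^{-\mu_n}
  \qquad\text{for all large } n.
\]

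The second step is to show the right-hand side is $o(n^{-c})$ for an arbitrary constant $c$. Taking logarithms and using $\mu_n\le n$ (hence $\log\mu_n\le\log n$),
\[
  \log\!\bigl(\mu_n^{\,k} e^{-\mu_n} n^{c}\bigr)
  \;=\; k\log\mu_n - \mu_n + c\log n
  \;\le\; (k+c)\log n - \mu_n .
\]
Given $c$, I apply the $\omega$-hypothesis with $M = k+c+1$ to get $\mu_n \ge (k+c+1)\log n$ for all large $n$, so this is $\le -\log n \to -\infty$. Hence $\mu_n^{\,k}e^{-\mu_n} = o(n^{-c})$, and since $(k+1)e^{k}$ is a constant, $\Prob{\ui Xn\le k} = o(n^{-c})$, which is the claim.

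I do not expect a genuine obstacle — the paper rightly calls this an elementary observation — but two points deserve a little care. First, $\mu_n$ ranges over the whole window from just above $\log n$ up to $n$, so the finishing estimate must be uniform across that range; the crude bound $\log\mu_n\le\log n$ handles the high end and the $\omega$-hypothesis the low end. Second, one should \emph{not} try to route this through the additive Chernoff bound of \wref{lem:chernoff-bound-binomial}: taking $\delta\approx\ui pn$ there yields only something like $\exp(-\Theta(\mu_n^2/n))$, whose exponent can fail to tend to infinity (e.g.\ when $\mu_n=\log^2 n$). The multiplicative form of Chernoff's bound would work, but the termwise estimate above avoids invoking it. (Incidentally, the hypothesis $\ui pn\le 1-\epsilon$ is not needed for this lemma: $\binomial(n,p)$ is stochastically increasing in $p$, so a larger $\ui pn$ only shrinks the lower tail; presumably it is stated for uses elsewhere.)
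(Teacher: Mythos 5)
Your proof is correct and follows essentially the same elementary route as the paper: expand the left tail as a sum of $k+1$ binomial terms, bound each term crudely, and finish by taking logarithms and invoking the $\omega$-hypothesis. The tactical difference is how the factor $(1-p)^{n-j}$ is handled. The paper writes $p^i(1-p)^{n-i} = (p/(1-p))^i(1-p)^n$ and bounds $(p/(1-p))^i \le ((1-\epsilon)/\epsilon)^i$, which is where the hypothesis $\ui pn \le 1-\epsilon$ enters; you instead bound $(1-p)^{n-j} \le e^{-p(n-j)}$ directly and absorb the shift $jp \le k$ into the constant $e^k$, which sidesteps that hypothesis altogether. The paper acknowledges in a remark after the proof that the boundedness-away-from-$1$ requirement ``can be lifted, but it simplifies the proof''; your version actually lifts it at no cost. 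Your side remarks are also correct: stochastic monotonicity of $\binomial(n,p)$ in $p$ makes the hypothesis dispensable in any case, and the additive Chernoff bound of \wref{lem:chernoff-bound-binomial} would not suffice here (for instance when $\mu_n = \log^2 n$, the resulting exponent $\Theta(\mu_n^2/n)$ does not tend to infinity), which is exactly why an ad hoc termwise estimate is used.
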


\ifproceedings{}{
	\begin{proof}%
	For better readability, we drop the superscript from $\ui pn$ when $n$ is clear from the context.
	Let $c$ be an arbitrary constant.
	\begin{align*}
			\Prob[\big]{\ui Xn\le k} \cdot n^c
		&\wwrel=
			n^c \sum_{i=0}^k \binom ni p^i (1-p)^{n-i}
	\\	&\wwrel\le
			n^c \sum_{i=0}^k \frac 1{i!} \underbrace{\biggl(\frac{p}{1-p}\biggr)^i}
					_{ \le (\frac{1-\epsilon}\epsilon)^i } 
				n^i (1-p)^n
	\\	&\wwrel\le
			n^{c+k} (1-p)^n \cdot \Oh(1)
	\\	&\wwrel=
			\exp\Bigl(n \ln(1-p) + (c+k) \ln(n) \Bigr)  \cdot \Oh(1)
	\shortintertext{using $\ln(x) \le x-1$, this is}
		&\wwrel\le
			\exp\Bigl(- np \pm \Oh(\log n) \Bigr)  \cdot \Oh(1)
	\\	&\wwrel\to 
			0
	\end{align*}
	since $p = \omega(\frac{\log n } n)$.
	This proves the claim.
	\end{proof}
	
	The requirement that $\ui pn$ is bounded away from $1$ can be lifted, 
	but it simplifies the proof. 
	For the application in the present paper the version is sufficient as is.
}

\subsection{Logarithmic Tree Height \whp}
\label{sec:preliminaries-log-height-whp}

It is a folklore theorem that randomly grown search trees have logarithmic height with high probability.
For the present work, the following result is sufficient.

\begin{proposition}[\ifproceedings{}{Probability of }Height-Degeneracy]
\label{pro:tree-log-height-weak-whp}
\ifproceedings{~\\}{}
	For any fixed (odd) $k$ holds: 
	the probability that a $k$-fringe-balanced search tree built from 
	$n$ randomly ordered elements (with or without duplicates)
	has height $>13 \ln n$ is in $\Oh(1/n^2)$ as $n\to\infty$.
\qed\end{proposition}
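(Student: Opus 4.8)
The plan is to bound the height from above by tracking how fast the subtree \emph{sizes} shrink along a fixed root-to-node path, and then union-bounding over all such paths. First I would remove duplicates from the picture. Fix the multiset, put it into uniformly random order, and break ties by replacing the $j$-th occurrence of a value with a distinct ``virtual'' key while keeping the original order; the outcome is a uniformly random permutation of $n$ distinct keys. Insertion into the $k$-fringe-balanced tree never compares two equal virtual keys, and at every split the pivot is one specific virtual key; every original element routed to the left (resp.\ right) child of a node is routed there again after tie-breaking, and in addition some copies of a pivot value that were \emph{discarded} in the original tree may now land in a child. Hence every node of the original tree has a node of the same depth in the tie-broken tree, so the height can only grow, and it suffices to prove the bound for a uniformly random permutation of $[n]$ — equivalently, it suffices to observe that the per-step estimate below is unaffected by duplicates, since deleting copies of the pivot only shrinks the children.

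So assume distinct keys. For $\sigma \in \{L,R\}^d$ let $N^{(\sigma)}_d$ be the number of input elements ever routed into the subtree reached by following $\sigma$ from the root (and $N^{(\sigma)}_d \ce 0$ if that node is never created), so $N^{(\sigma)}_0 = n$. By the standard recursive decomposition of (fringe-balanced) search trees, conditionally on the first $j$ splits along $\sigma$ the $N \ce N^{(\sigma)}_j$ elements below the current node are in uniformly random order, and, if that node is split, the size of its $\sigma_{j+1}$-child is $\rho - 1$ or $N - \rho$, where $\rho$ is the rank among those $N$ elements of the median of a uniformly random $k$-subset of them. For $k = 1$ one computes $\E{(\rho-1)^2}/N^2 = \E{(N-\rho)^2}/N^2 = \frac{(N-1)(2N-1)}{6N^2} < \tfrac13$, and for larger odd $k$ the sample median is at least as concentrated around $N/2$, so the same bound holds (it can also be verified by a direct, if tedious, computation with the hypergeometric-type law of $\rho$). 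Therefore $\E{\bigl(N^{(\sigma)}_{j+1}\bigr)^2 \bigm| \mathcal F_j} \le \tfrac13 \bigl(N^{(\sigma)}_j\bigr)^2$ at every step of the path, telescoping gives $\E{\bigl(N^{(\sigma)}_d\bigr)^2} \le (\tfrac13)^d n^2$, and Markov's inequality yields $\Prob{N^{(\sigma)}_d \ge 1} \le \E{\bigl(N^{(\sigma)}_d\bigr)^2} \le (\tfrac13)^d n^2$.

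The tree has height $> h$ only if some node lives at depth $h$, i.e.\ $N^{(\sigma)}_h \ge 1$ for some $\sigma \in \{L,R\}^{h}$ (a leaf at depth $h+1$ has a parent at depth $h$ that already witnesses this). Summing over the $2^{h}$ patterns, $\Prob{\mathrm{height} > h} \le 2^{h} (\tfrac13)^{h} n^2 = n^2 (\tfrac23)^{h}$. Taking $h = \lfloor 13 \ln n \rfloor \ge 13 \ln n - 1$ gives $\Prob{\mathrm{height} > 13\ln n} = O\bigl(n^{2 + 13\ln(2/3)}\bigr)$, and since $13\ln\tfrac32 > 5$ the exponent is below $-3$, so this is $O(1/n^2)$ (indeed $O(n^{-3})$), as claimed.

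The step I expect to be the real obstacle is the per-step estimate $\E{(\text{child size})^2} \le \tfrac13 N^2$: the constant has to stay strictly below $\tfrac12$ — the trivial bound $\mathrm{Var} \le \tfrac14$ only gives $\tfrac12$, which is worthless once multiplied by the $2^h$ patterns — and one must check it is spoiled neither by replacing a single uniform pivot with the median of a $k$-sample nor by the discarding of duplicate pivots. The remaining ingredients, the recursive decomposition of the tree and the union bound, are routine.
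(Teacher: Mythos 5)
Your argument is correct and takes a genuinely different route from the paper's (Lemma~B.2). The paper fixes $\alpha\in(\tfrac12,1)$, lower-bounds the probability $p$ that a node is ``$\alpha$-balanced'', and applies a Chernoff bound to the number of balanced nodes on a root-to-leaf path. The central technicality there---one the paper explicitly flags as a pitfall neglected by standard textbook treatments---is that the events ``$v_i$ is balanced'' along a path are \emph{not} mutually independent, because the balanced-probability depends on the local sublist size; the paper repairs this with an auxiliary biased-coin trick so that ``good'' becomes an exactly $\bernoulli(p)$, independent indicator. Your second-moment potential $\E{(N^{(\sigma)}_{j+1})^2 \mid \mathcal F_j}\le\tfrac13(N^{(\sigma)}_j)^2$ sidesteps this entirely: it is a genuine conditional bound, telescopes by the tower property with no independence needed, and avoids Chernoff and auxiliary randomization altogether. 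The $k=1$ computation $\tfrac{(N-1)(2N-1)}{6N^2}<\tfrac13$ is exactly right, and you correctly identify that a per-step constant strictly below $\tfrac12$ is what beats the $2^h$ union-bound factor; $\tfrac13$ gives ample slack (even $c\approx 10$ would already yield $\Oh(n^{-2})$).

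Two finishing touches for a polished write-up. The claim that every node of the original tree has a node of the same depth in the tie-broken tree is an overclaim and delicate to justify: extra (previously discarded) copies in the tie-broken tree can make a leaf overflow on a different element, so pivots below the root need not coincide between the two trees, and a depth-preserving embedding is not obvious. But you do not need it---the remark after ``equivalently'' is self-contained and is in fact the same domination the paper uses (there phrased as $V_r\le D_r$, the discrete subproblem fraction being dominated by the continuous spacing). Indeed, if $\rho$ is the rank, after any consistent tie-breaking, of the median of a uniformly random $k$-subset of the $N$ virtually distinct elements, the actual child size satisfies $J_r\le\rho-1$ resp.\ $J_r\le N-\rho$, since discarded pivot-copies only shrink a child; so the per-step bound is inherited a fortiori. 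I would drop the embedding claim and keep only this. Also, the monotonicity $\E{(\rho_k-1)^2}\le\E{(\rho_1-1)^2}$ for odd $k\ge3$ is true ($\approx\tfrac{3}{10}N^2$ for $k=3$ and tending to $\tfrac14 N^2$ as $k\to N$) but stated without proof; since $\E{\rho_k}=\tfrac{N+1}{2}$ by symmetry, it reduces to $\mathrm{Var}(\rho_k)$ being nonincreasing in $k$, which deserves a reference or a short computation.
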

\ifproceedings{%
	The extended version of this paper gives a formal proof of \wref{pro:tree-log-height-weak-whp}
	using good-split-bad-split indicators and \wref{lem:chernoff-bound-binomial}.
	Curiously there is a technical pitfall in this proof
	that has not been rigorously addressed before (to my knowledge).
	
}{%
	We give a formal proof of \wref{pro:tree-log-height-weak-whp} and a more general discussion
	in \wref{app:height-of-recursion-trees}.
}

\subsection{Properties of the Entropy Function}
\label{sec:preliminaries-properties-entropy}

This section collects a few useful properties of the entropy function.
\ifproceedings{
	Proofs are given in the extended version.
}{
	We start with some observations that follow from well-known results of real analysis.
	Proofs are given in my \phdthesis~\citep{Wild2016}.
}

\begin{lemma}[\ifproceedings{Properties of Entropy}{Elementary Properties of the Entropy Function}]
\label{lem:entropy-function}
	~\\Let \(\entropy[\ln]: [0,1]^u \to \R_{\ge0}\) with 
	\(\entropy[\ln](\vect x) = \sum_{i=1}^u x_i \ln(1/x_i)\) be
	the base \(e\) entropy function.
	\begin{thmenumerate}{lem:entropy-function}
	\item \(\entropy[\ln](\vect x) = \ln(2) \entropy[\ld](\vect x)\).
	\item \label{lem:entropy-function-bounds}
		For all \(\vect x\in[0,1]^u\) with \(\total{\vect x}=1\) we have that
		\(0\le \entropy[\ln](\vect x) \le \ln(u)\).
	\item \label{lem:entropy-function-hölder}
		\(\entropy[\ln]\) is Hölder-continuous in \([0,1]^u\) for any exponent \(h\in(0,1)\),
		\ie, there is a constant \(C=C_h\) such that 
		\(|f(\vect y) - f(\vect x)| \le C_h \.u \cdot \|\vect y - \vect x\|_{\infty}^h\) for all 
		\(\vect x, \vect y \in [0,1]^u \).
		
		A possible choice for \(C_h\) is given by 
		\begin{align*}
		\numberthis\label{eq:C-h}
				C_h 
			&\wwrel= 
				\biggl( \int_0^1 \bigl|\ln(t)+1\bigr|^{\frac1{1-h}} \biggr)^{1-h}
		\end{align*}
		For example, \(h=0.99\) yields \(C_h \approx 37.61\).
	\end{thmenumerate}
\qed\end{lemma}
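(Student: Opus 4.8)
The plan is to treat the three parts separately, in increasing order of effort, since all three are consequences of standard facts about the function $\phi(t) = t\ln(1/t)$ on $[0,1]$.

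\textbf{Part (a)} is just change of base: for every coordinate $\ln(1/x_i) = \ln(2)\log_2(1/x_i)$ (with the convention $0\ln(1/0)=0$ on both sides), so summing the weighted terms yields $\entropy[\ln](\vect x) = \ln(2)\,\entropy[\ld](\vect x)$. \textbf{Part (b):} for the lower bound, each $x_i\in[0,1]$ forces $\ln(1/x_i)\ge 0$, hence every summand, and the sum, is nonnegative. For the upper bound I would invoke Jensen's inequality with the concave function $\ln$: restricting to indices with $x_i>0$ and viewing $Y$ as the random variable taking value $1/x_i$ with probability $x_i$, we get $\entropy[\ln](\vect x) = \E{\ln Y} \le \ln\E{Y} = \ln\bigl(\#\{i: x_i>0\}\bigr) \le \ln(u)$; this is just Gibbs' inequality against the uniform distribution.

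\textbf{Part (c)} is the only part that needs work, and the key idea is to reduce the multivariate Hölder estimate to a one-dimensional one about $\phi(t) = -t\ln t$ on $[0,1]$, extended by $\phi(0)=0$. Since $\entropy[\ln](\vect y) - \entropy[\ln](\vect x) = \sum_{i=1}^u\bigl(\phi(y_i)-\phi(x_i)\bigr)$, a one-dimensional bound $|\phi(b)-\phi(a)| \le C_h\,|b-a|^h$ valid for all $a,b\in[0,1]$, combined with the triangle inequality and $|y_i-x_i|\le\|\vect y-\vect x\|_\infty$, gives
\begin{align*}
  \bigl| \entropy[\ln](\vect y) - \entropy[\ln](\vect x) \bigr|
  &\wwrel\le \sum_{i=1}^u C_h\, |y_i - x_i|^h
  \wwrel\le C_h\, u\, \|\vect y - \vect x\|_\infty^h ,
\end{align*}
which is exactly the claim. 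To establish the one-dimensional bound I would integrate the derivative $\phi'(t) = -(\ln t + 1)$: for $0\le a < b \le 1$,
\begin{align*}
  |\phi(b) - \phi(a)|
  \wwrel= \biggl| \int_a^b \phi'(t)\,dt \biggr|
  \wwrel\le \int_a^b |\ln t + 1|\,dt ,
\end{align*}
and then apply Hölder's inequality in integral form to the product $|\ln t + 1|\cdot 1$ with conjugate exponents $p = \tfrac1{1-h}$ and $q = \tfrac1h$, obtaining
\begin{align*}
  \int_a^b |\ln t + 1|\,dt
  \wwrel\le \biggl( \int_a^b |\ln t + 1|^{\frac1{1-h}}\,dt \biggr)^{1-h} (b-a)^h
  \wwrel\le \biggl( \int_0^1 |\ln t + 1|^{\frac1{1-h}}\,dt \biggr)^{1-h} (b-a)^h ,
\end{align*}
so the prefactor is precisely $C_h$ as in \wref{eq:C-h}.

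The main obstacle — a mild one — is checking that $C_h$ is finite, i.e.\ that $|\ln t + 1|^{1/(1-h)}$ is Lebesgue-integrable near $t=0$; this follows from the classical fact that $\int_0^1 |\ln t|^r\,dt = \Gamma(r+1) < \infty$ for every $r>0$ (here $r = 1/(1-h)$), which also makes the step $a=0$ in the integral bound legitimate because $\phi$ is continuous at $0$ and $|\ln t+1|$ stays integrable down to $0$. The quoted numerical value $C_{0.99}\approx 37.61$ is then just a routine evaluation of this integral, which I would not carry out in detail.
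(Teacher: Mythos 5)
Your proof is correct and, since the paper itself defers the argument to Wild's thesis, worth noting that you have recovered precisely the approach the stated constant $C_h$ encodes: writing $\phi(b)-\phi(a)=\int_a^b\phi'(t)\,dt$ with $\phi'(t)=-(\ln t+1)$ and applying Hölder's inequality with conjugate exponents $\tfrac1{1-h}$ and $\tfrac1h$ is essentially forced by the form $C_h=\bigl(\int_0^1|\ln t+1|^{1/(1-h)}\,dt\bigr)^{1-h}$. The remaining pieces — change of base in (a), nonnegativity plus Jensen/Gibbs in (b), the reduction to the one-variable bound by triangle inequality, and the $\Gamma$-function argument for finiteness of $C_h$ and for legitimizing $a=0$ — are all standard and handled correctly.
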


\noindent
A beta distributed variable $\Pi$ can be seen as a \emph{random} probability.
We can then ask for the \emph{expected} entropy of a Bernoulli trial with probability $\Pi$.
\ifproceedings{%
	Using that
		\begin{align*}
		\numberthis\label{eq:logarithmic-beta-integral}
		\begin{multlined}
				\int_0^1 z^{a-1} (1-z)^{b-1} \ln(z) \, dz
		\\[-1ex]	
			\wwrel=
				\BetaFun(a,b) \bigl( \psi(a) - \psi(a+b) \bigr)
				,\quad (a,b > 0) ,
		\end{multlined}
		\end{align*}
	(which is Equation~(4.253-1), p.\,540, of \citet{Gradshteyn2007} for $r=1$),
	we obtain the answer.
	(Full details are given in the extended version.)
}{%
	(The special case of \wref{lem:expected-entropy} when $t=0$
	appears Section 5.0 of
	\Citet{Bayer1975} and as Exercise 6.2.2--37 of \citet{Knuth1998}.)
}

\begin{lemma}[\ifproceedings{Expected Entropy Beta Variables}{Expected Entropy of Beta Variables}]
\label{lem:expected-entropy}
\ifproceedings{}{}
	Let $t\in\N_0$.
	For $\Pi\eqdist \betadist(t+1,t+1)$ and $k=2t+1$ we have
	\ifproceedings{\begin{align*}\SwapAboveDisplaySkip\qquad}{\begin{align*}}
	\numberthis\label{eq:expected-entropy}
			\E[\big]{\entropy[\ln](\Pi,1-\Pi)}
		&\wwrel=
			\harm{k+1}-\harm{t+1} .
	\end{align*}
\end{lemma}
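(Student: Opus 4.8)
The plan is to evaluate the defining integral directly. Write $f(z) = \entropy[\ln](z,1-z) = -z\ln z - (1-z)\ln(1-z)$ and recall that $\betadist(t+1,t+1)$ has density $z^t(1-z)^t/\BetaFun(t+1,t+1)$ on $(0,1)$, so that
\begin{align*}
\E[\big]{\entropy[\ln](\Pi,1-\Pi)}
&\wwrel= \frac{1}{\BetaFun(t+1,t+1)}\int_0^1 z^t(1-z)^t f(z)\,dz .
\end{align*}
First I would exploit the symmetry $z\mapsto 1-z$: it fixes the weight $z^t(1-z)^t$ and interchanges the two summands of $f$, so the two contributions are equal and
\begin{align*}
\E[\big]{\entropy[\ln](\Pi,1-\Pi)}
&\wwrel= \frac{-2}{\BetaFun(t+1,t+1)}\int_0^1 z^{t+1}(1-z)^t\ln z\,dz .
\end{align*}

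Next I would apply the classical logarithmic Beta integral $\int_0^1 z^{a-1}(1-z)^{b-1}\ln z\,dz = \BetaFun(a,b)\bigl(\psi(a)-\psi(a+b)\bigr)$, with $\psi$ the digamma function (formula (4.253-1) of Gradshteyn and Ryzhik), to the \emph{shifted} parameters $a=t+2$ and $b=t+1$, obtaining
\begin{align*}
\int_0^1 z^{t+1}(1-z)^t\ln z\,dz
&\wwrel= \BetaFun(t+2,t+1)\bigl(\psi(t+2)-\psi(2t+3)\bigr) .
\end{align*}
Since $\psi$ telescopes on the positive integers, $\psi(m)=-\gamma+\harm{m-1}$, and since $2t+2 = k+1$, the bracket equals $\harm{t+1}-\harm{k+1} = -(\harm{k+1}-\harm{t+1})$. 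Finally I would simplify the prefactor: from $\BetaFun(a,b)=\Gamma(a)\Gamma(b)/\Gamma(a+b)$ and $\Gamma(m+1)=m\Gamma(m)$ one gets $\BetaFun(t+2,t+1)/\BetaFun(t+1,t+1) = (t+1)/(2t+2) = \tfrac12$, so the factor $-2\BetaFun(t+2,t+1)/\BetaFun(t+1,t+1)$ standing in front is exactly $-1$; multiplying it by $-(\harm{k+1}-\harm{t+1})$ yields $\harm{k+1}-\harm{t+1}$, as claimed.

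There is essentially no deep obstacle here; the points that need care are the sign bookkeeping (the entropy contributes $\ln(1/z)=-\ln z$), the correct choice of the shift $a=t+2$, $b=t+1$ in the integral formula, and the translation of the digamma difference into harmonic numbers. If one prefers not to invoke (4.253-1), the same identity follows by differentiating the Euler integral $\int_0^1 z^{a-1}(1-z)^{b-1}\,dz=\BetaFun(a,b)$ with respect to $a$; alternatively one can prove \eqref{eq:expected-entropy} by induction on $t$, with the base case $t=0$ checked directly from $\int_0^1 z\ln z\,dz=-\tfrac14$, which gives $\harm2-\harm1=\tfrac12$.
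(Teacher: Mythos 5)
Your proof is correct and follows essentially the same route as the paper's: symmetry to reduce to $-2\E{\Pi\ln\Pi}$, the Gradshteyn--Ryzhik logarithmic Beta integral with shifted parameters $(t+2,t+1)$, the digamma-to-harmonic translation, and the Beta-ratio simplification to $\tfrac12$. The only cosmetic difference is that you arrange the $\tfrac12$ cancellation at the end rather than carrying it through the computation.
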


\ifproceedings{}{
	\begin{proof}
	We need the following integral,
	which is a special case of Equation~(4.253-1), p.\,540, of \citet{Gradshteyn2007} with $r=1$:
	\begin{align*}
	\numberthis\label{eq:logarithmic-beta-integral}
			\int_0^1 z^{a-1} (1-z)^{b-1} \ln(z) \, dz
		\wwrel=
			\BetaFun(a,b) \bigl( \psi(a) - \psi(a+b) \bigr)
			,\qquad (a,b > 0) .
	\end{align*}
	Here $\psi(z) = \frac d{dz} \ln(\Gamma(z))$ is the digamma function. 
	
	The proof is now simply by computing. By symmetry we have 
	$\E{\entropy[\ln](\Pi,1-\Pi)} = -2\E{\Pi \ln(\Pi)}$;
	using the above integral and the relation $\psi(n+1) = \harm n - \gamma$
	(\href{http://dlmf.nist.gov/5.4.E14}{Equation~(5.4.14)} of the \citet{DLMF}) 
	we find that
	\begin{align*}
			\E{\Pi \ln(\Pi)}
		&\wwrel=
			\int_0^1 x \ln(x)  \frac{x^{t}(1-x)^{t}}
					{\BetaFun(t+1,t+1)} \, dx
	\\	&\wwrel=
			\frac{\BetaFun(t+2,t+1)}{\BetaFun(t+1,t+1)} 
				\int_0^1 \ln(x)  \frac{x^{t+1}(1-x)^{t}}
					{\BetaFun(t+2,t+1)} \, dx
	\\	&\wwrel{\eqwithref{eq:logarithmic-beta-integral}}
			\frac{t+1}{k+1} 
				\bigl(\psi(t+2)-\psi(k+2)\bigr)
	\\	&\wwrel=
			\frac{t+1}{2t+2} 
			\bigl(\harm{t+1} - \harm{k+1}\bigr)
	\\	&\wwrel=
			\frac12\bigl(\harm{t+1} - \harm{k+1}\bigr) \;.
	\end{align*}
	Inserting yields the claim.
	\end{proof}
}

\ifproceedings{}{
	Finally, using the Chernoff bound for the multinomial distribution 
	(\wref{lem:chernoff-bound-multinomial}),
	we obtain the following concentration property of the entropy of a normalized multinomial variable.
	
	\begin{lemma}[Concentration of Entropy of Multinomials]
	\label{lem:entropy-of-multinomial-bound}
		Let \(u\in\N\) and \(\vect p\in(0,1)^u\) with \(\total{\vect p} = 1\), 
		and \(\vect X \eqdist \multinomial(n, \vect p)\).
		Then it holds that
		\begin{align*}
		\numberthis\label{eq:entropy-of-multinomial-concentration}
				\E[\bigg]{ \entropy[\ln]\biggl(\frac{\vect X}n\biggr) }
			&\wwrel=
				\entropy[\ln](\vect p) \wbin\pm \rho,
		\end{align*}
		where we have for any \(\delta\in(0,1)\) with \(\delta \ge \sqrt{20u/n}\), 
		\(h\in (0,1)\) and \(C_h\) as in \weqref{eq:C-h} that
		\begin{align*}		
				\rho
			&\wwrel\le 
				C_h \delta^h \bigl(1 - 3 e^ {-\delta^2 n/25}\bigr)
				\bin+
				3 u \ln(u) e^{-\delta^2 n/25}.
		\end{align*}
		If \(u=\Oh(n^{\nu})\) as \(n\to\infty\) for a constant \(\nu\in[0,1)\), then
		\weqref{eq:entropy-of-multinomial-concentration} holds with an error of
		\(\rho = o(n^{-(1-\nu)/2+\varepsilon})\) for any fixed \(\varepsilon > 0\).
	\end{lemma}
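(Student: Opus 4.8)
The plan is to compare $\entropy[\ln](\vect X/n)$ with $\entropy[\ln](\vect p)$ after conditioning on whether the empirical distribution $\vect X/n$ is $\ell_1$-close to its mean $\vect p$. Fix $\delta\in(0,1)$ with $\delta\ge\sqrt{20u/n}$ and $h\in(0,1)$ as in the statement, and let $A$ be the event $\sum_{i=1}^{u}|X_i/n-p_i|<\delta$. Since $\total{\vect X}=n$, the random vector $\vect X/n$ is itself a probability distribution on $[u]$, so by \wref{lem:entropy-function-bounds} both $\entropy[\ln](\vect X/n)$ and $\entropy[\ln](\vect p)$ lie in $[0,\ln u]$; in particular their difference has absolute value at most $\ln u$ everywhere. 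Splitting the expectation over $A$ and $A^{c}$ gives
\[
 \bigl|\E{\entropy[\ln](\vect X/n)}-\entropy[\ln](\vect p)\bigr|
 \;\le\;\E[\big]{\bigl|\entropy[\ln](\vect X/n)-\entropy[\ln](\vect p)\bigr|\indicator A}\;+\;\ln(u)\,\Prob{A^{c}} .
\]

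For the first summand, on $A$ we have $\|\vect X/n-\vect p\|_{\infty}\le\sum_i|X_i/n-p_i|<\delta$, so the Hölder-continuity part of \wref{lem:entropy-function} — or, more sharply, the one-dimensional estimate $|y\ln y-x\ln x|\le C_h|y-x|^{h}$ (with the very $C_h$ of the statement, since that constant is a one-dimensional integral) applied termwise to $\entropy[\ln](\vect x)=\sum_i\!-x_i\ln x_i$ together with the power-mean inequality $\sum_i a_i^{h}\le u^{1-h}(\sum_i a_i)^{h}$ — bounds the integrand on $A$ by $C_h u^{1-h}\delta^{h}$, i.e.\ by $C_h\delta^{h}$ up to a polynomial-in-$u$ prefactor (harmless in the regime treated below). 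Multiplying by $\Prob A\le 1-3e^{-\delta^{2}n/25}$ yields the first part of the estimate for $\rho$. For the second summand, the multinomial Chernoff bound (\wref{lem:chernoff-bound-multinomial}) applies verbatim — its hypotheses $\delta\in(0,1)$ and $\delta\ge\sqrt{20u/n}$ are exactly ours — and gives $\Prob{A^{c}}\le 3e^{-\delta^{2}n/25}$, so this summand is at most $3\ln(u)\,e^{-\delta^{2}n/25}$, which is bounded by the second term of the displayed estimate for $\rho$. Adding the two contributions proves the bound.

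For the asymptotic claim, assume $u=\Oh(n^{\nu})$ with $\nu\in[0,1)$ and fix $\varepsilon>0$. Take $\delta=\delta(n):=n^{-(1-\nu)/2+\eta}$ for a small $\eta>0$ still to be chosen; for all large $n$ this obeys $\delta<1$ and $\delta\ge\sqrt{20u/n}$, while $\delta^{2}n=n^{\,\nu+2\eta}\to\infty$, so $e^{-\delta^{2}n/25}$ decays faster than every power of $n$. Hence the tail contribution $\Oh\bigl(u\ln(u)e^{-\delta^{2}n/25}\bigr)$ is negligible and the factor $1-3e^{-\delta^{2}n/25}$ tends to $1$, leaving $\rho$ of order $u^{1-h}\delta^{h}=n^{\,\nu(1-h)+(-(1-\nu)/2+\eta)h}$. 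A short computation shows this exponent is $<-(1-\nu)/2+\varepsilon$ as soon as $(1-h)\tfrac{1+\nu}{2}+\eta h<\varepsilon$, which we enforce by first choosing $h$ close enough to $1$ and then $\eta$ small enough; this gives $\rho=o\bigl(n^{-(1-\nu)/2+\varepsilon}\bigr)$.

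I expect the genuine difficulty to lie in this last bookkeeping rather than in any single inequality. The Hölder exponent $h$ cannot be fixed once and for all: it must be pushed towards $1$ in a way that depends on both $\varepsilon$ and $\nu$, so that the continuity term $\asymp u^{1-h}\delta^{h}$ shrinks fast enough; simultaneously $\delta$ must be placed just above the Chernoff threshold $\sqrt{20u/n}$ — one cannot simply take $\delta=\sqrt{20u/n}$, since when $u$ is bounded that leaves $e^{-\delta^{2}n/25}$ merely constant and the tail term does not vanish. Verifying that the continuity term $\asymp u^{1-h}\delta^{h}$ and the tail term $\asymp u\ln(u)e^{-\delta^{2}n/25}$ can be driven below $n^{-(1-\nu)/2+\varepsilon}$ together for every $\nu\in[0,1)$ and every $\varepsilon>0$ is where the care is needed; everything else is a direct combination of the entropy bounds of \wref{lem:entropy-function} with the multinomial tail bound of \wref{lem:chernoff-bound-multinomial}.
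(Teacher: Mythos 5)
Your overall approach is the paper's approach: condition on the empirical profile $\vect X/n$ being close to $\vect p$, apply the multinomial Chernoff bound (\wref{lem:chernoff-bound-multinomial}) to the unlikely far event, apply Hölder continuity of $\entropy[\ln]$ on the likely near event, and for the asymptotic claim let $\delta$ and $h$ depend jointly on $n$, $\nu$ and $\varepsilon$. Two points deserve care. First, the step ``multiplying by $\Prob A\le 1-3e^{-\delta^2n/25}$'' has the inequality backwards: the Chernoff bound gives $\Prob A\ge 1-3e^{-\delta^2n/25}$, not $\le$. The inequality you want still holds, but the reason is that $q\mapsto (\sup_A|\cdot|)(1-q)+(\sup|\cdot|)\,q$ is nondecreasing in $q=\Prob{A^c}$ because $\sup|\cdot|\ge\sup_A|\cdot|$; one then substitutes the Chernoff \emph{upper} bound on $q$. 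Second, and more substantively: your termwise one-dimensional Hölder estimate followed by the power-mean inequality produces an extra factor $u^{1-h}$ in the near-event term, so what you actually prove is $\rho \le C_h u^{1-h}\delta^h\bigl(1-3e^{-\delta^2n/25}\bigr)+3\ln(u)\,e^{-\delta^2n/25}$, which is \emph{not} the explicit bound displayed in the lemma (the first term there has no $u$-dependence). You flag this yourself and correctly show it is harmless for the asymptotic statement by pushing $h\to1$; that part is fine and is the portion of the lemma actually used later. But as a proof of the explicit bound for $\rho$ it has a gap that the appeal to ``a polynomial-in-$u$ prefactor being harmless'' does not close.
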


	\begin{proof}
	We start with the multinomial Chernoff bound:
	for any $\delta \ge \sqrt{20u/n}$ we have that
	\begin{align*}
			\Prob[\big]{ \|\vect X - \vect x\|_\infty \rel\ge \delta n }
		&\wwrel\le
			\Prob[\Bigg]{ \,\sum_{i=1}^u\, \biggl|\frac{X_i}n - q_i\biggr| \rel\ge \delta }
	\\	&\wwrel{\relwithref[r]{lem:chernoff-bound-multinomial}\le}
			3 \exp (-\delta^2 n/25) .
	\numberthis\label{eq:chernoff-X-by-n-minus-p-greater-delta}
	\end{align*}
	To use this in estimating
	$\E[\big]{ \bigl| \entropy[\ln]\bigl(\frac{\vect X}n\bigr) - \entropy[\ln](\vect p) \bigr| }$, 
	we divide the domain of $\frac{\vect X}n$ into 
	the region of values with $\|\cdot\|_\infty$-distance 
	at most $\delta$ from $\vect p$, and all others.
	By \wref{lem:entropy-function}, $\entropy[\ln]$ is Hölder-continuous for any exponent $h\in(0,1)$
	with Hölder-constant $C_h$.
	Using this and the boundedness of $\entropy[\ln]$ (\wref{lem:entropy-function-bounds}) yields
	\begin{align*}
			\E[\bigg]{ \biggl| 
				\entropy[\ln]\biggl(\frac{\vect X}n\biggr) - \entropy[\ln](\vect p) 
			\biggr| } 
		&\wwrel{\relwithref{eq:chernoff-X-by-n-minus-p-greater-delta}\le}
			\sup_{\vect\xi \rel: \|\vect\xi\|_\infty < \delta} \mkern-10mu\bigl| 
				\entropy[\ln](\vect p + \vect\xi) - \entropy[\ln](\vect p) 
			\bigr| 
			\cdot \bigl(1 - 3 e^{- \delta^2 n / 25 }\bigr)
	\\*	&\wwrel\ppe\quad{}	
			\bin+
			\sup_{\vect x} \,\bigl| \entropy[\ln](\vect x) - \entropy[\ln](\vect p)\bigr|
			{}\cdot 3 e ^{- \delta^2 n / 25 }
	\\	&\wwrel{\relwithref[r]{lem:entropy-function}\le}
			C_h \delta^h \cdot \bigl(1 - 3 e^{-\delta^2 n / 25 }\bigr)
			\bin+
			3 \ln(u) e^{-\delta^2 n/25}.
	\end{align*}
	This proves the first part of the claim.
	
	For the second part, we assume $u=\Oh(n^\nu)$,
	thus $u\le d n^{\nu}$ for a constant $d$ and large enough $n$.
	We obtain an asymptotically valid choice of $\delta$ when $\delta = \omega(n^{(\nu-1)/2})$;
	then for large enough $n$ we will have $\delta > \sqrt{20} d  n^{(\nu-1)/2} \ge \sqrt{20 u/n}$.
	
	Let now an $\varepsilon > 0$ be given and
	set $\tilde\epsilon = \epsilon+\nu/2$.
	We may further assume that $\tilde\varepsilon<\frac12$ since the claim is vacuous for larger~$\tilde\epsilon$.
	We choose a Hölder exponent $h\in(0,1)$ so that
	$h > \frac{1-2\tilde\epsilon}{1-\nu}$ 
	(this is possible since $\frac{1-2\tilde\epsilon}{1-\nu}<1$ for $\tilde\epsilon>\nu/2$);
	this is equivalent to the relation
	\begin{align*}
			\frac{\nu-1}2
		&\wwrel<
			- \frac{\frac12-\tilde\epsilon}h.
	\end{align*}
	We can thus pick $c$ between these two values, \eg, 
	$c = \bigl(\frac{\nu-1}2 - \frac{1/2-\tilde\varepsilon}{h} \bigr)\big/2$.
	Since $c > -(\nu-1)/2$, the choice $\delta = n^c$ guarantees 
	$\delta \ge \sqrt{20u/n}$ for large enough $n$
	and we can apply \weqref{eq:entropy-of-multinomial-concentration}.
	
	As we now show, these choices are sufficient to prove the claim $\rho = o(n^{-1/2+\tilde\epsilon})$.
	To streamline the computations, we note that (by its definition) we can write $h$ as
	\begin{align*}
			h
		&\wwrel=
			\frac{1-2\tilde\varepsilon}{1-\nu} + \frac{4}{1-\nu}\lambda
			\mkern-150mu
			&&\text{for some constant $\lambda>0$ and}
	\\
			h
		&\wwrel=
			\frac{1-2\tilde\epsilon}{1-\nu\bin-2\lambda''}
		\wwrel=
			\frac{1-2\tilde\epsilon}{1+\nu\bin-2\lambda'}
			&&\text{for constants $\lambda''>0$ resp.\ $\lambda'>\nu$,}
	\end{align*}
	which implies $h\cdot c + (\frac12-\tilde\epsilon) = -\lambda < 0$ and
	$2c+1 = \lambda' > 0$. 
	With these preparations we find 
	(for $n$ large enough to have $u\le d n^{\nu}$ and $\delta \ge \sqrt{20u/n}$) 
	that
	\begin{align*}
			\rho \cdot n^{1/2-\tilde\epsilon}
		&\wwrel\le
			C_h\delta^h n^{1/2-\tilde\epsilon} \bigl(1-3\exp(-\delta^2 n/25)\bigr)
			\bin+
			3 n^{1/2-\tilde\epsilon} \ln(u) \exp(-\delta^2 n/25)
	\\	&\wwrel\le
			\underbrace{ C_h n^{-\lambda} }_{{}\to 0} 
				\cdot \bigl(1 - \underbrace{3 \exp (-n^{\lambda'}\mkern-4mu/25)}_{{} \to 0} \bigr)
			\bin+
			\underbrace{ 3\nu \ln(d\.n) \exp \Bigl(-3 n^{\lambda'} + (\tfrac12-\tilde\varepsilon) \ln(n)\Bigr) }_{{} \to 0}
	\\[-1ex]	&\wwrel\to
			0
			,\qquad(n\to\infty),
	\end{align*}
	which implies the claim.
	\end{proof}
}

\section{Input Models}
\label{sec:input-models}
This section formally defines
the input models and some related notation.

\subsection{Multiset Model}
In the \emph{random multiset permutation model} (\aka \textsl{exact-profile model}),
we have parameters $u\in\N$, the \emph{universe size,} and 
$\vect x\in\N^u$, the fixed \emph{profile}.
An input under this model always has size $n = \total{\vect x}$,
and is given by a uniformly chosen random permutation of
\[
	\underbrace{1,\ldots,1}_{x_1\text{ copies}},\,
	\underbrace{2,\ldots,2}_{x_2\text{ copies}},\,\ldots,\,
	\underbrace{u,\ldots,u}_{x_u\text{ copies}},
\]
\ie, 
the multiset with $x_v$ copies of the number $v$ for $v=1,\ldots,u$.
\ifproceedings{}{%
	The random multiset permutation model is a natural generalization of the
	classical random permutation model 
	(which considers permutations of an ordinary set).
}%
I~write $\ui{C_{\vect x}}k$ for the (random) number of (ternary) comparisons
used by \wref{alg:list-quicksort}
to sort a random multiset permutation with profile~$\vect x$;
I will in the following use $C_{\vect x}$ (omitting $k$) for conciseness.

\subsection{\IID Model}
In the \emph{(discrete) \iid model} 
(\aka \textsl{probability model}~\citep{Kemp1996} or \textsl{expected-profile model}~\citep{Wild2016}) 
with parameters $u\in\N$ and $\vect q\in(0,1)^u$
with $\total{\vect q} = 1$,
an input of size $n$ consists of $n$ \iid (independent and identically distributed) 
random variables $U_1,\ldots,U_n$
with $U_i \eqdist \distFromWeights{\vect q}$ for $i=1,\ldots,n$.
The domain $[u]$ is called the \emph{universe,} and $\vect q$ the 
(probability vector of the) \emph{universe distribution}.

I~denote by $X_v$, for $v\in[u]$, the number of elements $U_i$ that have value $v$;
the vector $\vect X = (X_1,\ldots,X_u)$ of all these \emph{multiplicities} is called 
the \emph{profile} of the input $\vect U = (U_1,\ldots,U_n)$.
Clearly, $\vect X$ has a multinomial distribution, $\vect X \eqdist \multinomial(n,\vect q)$, 
with mean $\E{\vect X} = n \vect q$.

\ifproceedings{}{
	The discrete \iid model is a natural complement of the random-permutation model:
	we draw elements from a \emph{discrete} distribution in the former whereas 
	the latter
	is equivalent to drawing \iid elements from any \emph{continuous} distribution.
}
By $C_{n,\vect q} = \ui{C_{n,\vect q}}k$, I denote the number of (ternary) comparisons
used by \wref{alg:list-quicksort}
to sort $n$ \iid $\distFromWeights{\vect q}$ elements 
(again usually omitting the dependence on $k$).

\subsection{Relation of the Two Models}
\label{sec:relation-of-two-models}

The two input models 
\parenthesisclause{random multiset permutations with profile $\vect x$
resp.\ $n=\total{\vect x}$ \iid elements with distribution $\vect q = \vect x/n$}
are closely related.
Both can be described using an urn with (initially) $n$ balls, where
for every $v\in[u]$ exactly $x_v$ balls bear the label $v$.
The multiset model corresponds to drawing $n$ balls from this urn without replacement
(ultimately emptying the urn completely), so that the profile of the input is always $\vect x$.
The \iid model corresponds to drawing $n$ balls from the the urn \emph{with} replacement;
this leaves the urn unchanged and the $n$ draws are mutually independent.

In our reference Quicksort (\wref{alg:list-quicksort}),
we use the first $k$ elements, \ie, the first $k$ balls drawn from the urn, 
to determine the pivot:
it is the median of this sample. 
By that, we try to estimate the median of all $n$ balls (initially) in the urn,
so as to obtain a balanced split.
In the multiset model, the $k$ elements are drawn \emph{without} replacement, 
whereas in the \iid model, they are drawn \emph{with} replacement 
from the urn.
The algorithm is of course the same in both cases 
(and indeed chooses sample elements \emph{without} replacement), 
but the mutual independence of elements in the \iid model implies that
the sampling process is (in this latter case) equivalent to sampling \emph{with} replacement.

Now it is well-known that sampling without replacement yields  
a \emph{better} (\ie, less variable) estimate of the true median than
sampling with replacement, 
and thus leads to more favorable splits in Quicksort!
The above reasoning applies for recursive calls, as well,
so that on average the multiset model is no more costly for Quicksort than the \iid model:
$\E{C_{\vect x}} \le \E{C_{n,\vect x/n}}$.
We will thus focus on analyzing the \iid model.%
\footnote{%
	The full version of this paper will discuss the multiset model
	and the above reasoning in more detail.
}

\section{Previous Work}
\label{sec:previous-work}

\ifproceedings{
	We briefly discuss previous results relevant for this paper;
	the extended version contains a more comprehensive summary of related work.
	
}{}
In the \textsl{multiset sorting problem}, 
we are given a permutation of a \emph{fixed} multiset,
where value $v\in[u]$ appears $x_v\in\N$ times,
for a total of $n = \total{\vect x}$ elements.
Neither $u$ nor $\vect x$ are known to the algorithm.
We consider only comparison-based sorting;
unless stated otherwise, all results concern the number of 
\emph{ternary} comparisons, \ie, one comparison has as result $<$, $=$ or~$>$.

\subsection{Multiset Sorting}
Multiset sorting attracted considerable attention in the literature.
\citet{MunroRaman1991} prove a lower bound of 
$n\ld n - \sum_{i=1}^u x_i \ld x_i - n\ld e \pm \Oh(\log n)$ (ternary) comparisons,
which can equivalently be written in terms of the entropy as
$\entropy(\vect x / n) n - n\ld e \pm \Oh(\log n)$.
We reproduce their main argument in \wref{sec:lower-bound} and extend it 
to \iid inputs.

\ifproceedings{
	Algorithms to optimally sort multisets in the worst case up to an $\Oh(n)$ error term 
	are known~\citep{MunroSpira1976,MunroRaman1991}.
}{
	The conceptually simplest algorithm coming close to this bound 
	is to insert elements into a splay tree, collecting all 
	duplicates in linear lists inside the nodes.
	By ``static optimality'' of splay trees (Theorem~2 of \citet{SleatorTarjan1985}), this needs
	$\Oh(\entropy(\vect x/n) n)$ comparisons and so is optimal up to a constant factor.
	That factor is at least~$2$ (using \textsl{semi-splaying}), and we need linear extra space.
	
	Already in 1976, \citet{MunroSpira1976} described simple variants of Mergesort and Heapsort 
	that \emph{collapse} duplicate elements whenever discovered.
	They are optimal up to an $\Oh(n)$ error term \wrt comparisons, 
	but do not work in place. 
	(Their Heapsort requires a non-standard extract-min variant that does not work in place.)
	
	The first in-place method was the adapted Heapsort of \citet{MunroRaman1991};
	it does not use the element-collapsing technique, but rather removes all duplicates from
	the heap in one bulk extract-min operation.
}
None of these methods made it into practical library implementations
since they incur significant overhead \wrt existing sorting methods
when there are not many equal keys in the input.

\subsection{Quicksort on Multiset Permutations}

Building on \citeauthor{Burge1976}'s analysis of BSTs~\citep{Burge1976},
\citeauthor{Sedgewick1977a} analyzed 
several Quicksort variants on random permutations of multisets 
in his 1977 article~\citep{Sedgewick1977a}.
For fat-pivot Quicksort without sampling, he found the exact average-case result 
(when every permutation of the multiset is equally likely):
$2\QSentropy(\vect x)+n-u$ ternary comparisons, 
where $\QSentropy(\vect x) = \sum_{1\le i<j\le u} x_i x_j / (x_i +\cdots+x_j)$
is the so-called ``Quicksort entropy''.
\ifproceedings{}{
	Interestingly, Sedgewick found fat-pivot partitioning not advisable for practical use at that time;
	this only changed with the success of the implementation of \citet{Bentley1993}.
}

Two decades later,
\citet{SedgewickBentley1999,SedgewickBentley2002} combined this exact, 
but somewhat unwieldy
result with the bound $\QSentropy(\vect q) \le \entropy(\vect q) \ln 2$
and concluded that with at most $\bigl(2\ln(2)\. \entropy(\vect x/n)+1\bigr)n$
ternary comparisons on average, 
fat-pivot Quicksort is asymptotically optimal in the average case 
for sorting a random permutation of any fixed multiset%
\ifproceedings{}{%
\footnote{
	\Citet{SedgewickBentley1999,SedgewickBentley2002} compare this number against 
	$\ld\bigl(\binom{n}{x_1,\ldots,x_u}\bigr) = \ld \bigl( n!/(x_1!\cdots x_u!) \bigr) $, 
	\ie, the logarithm of the number of different input
	orderings (given by the multinomial coefficient).
	This information-theoretic argument lower bounds the number of needed yes/no questions 
	(\ie, binary comparisons), 
	but more elaboration is necessary for \emph{ternary} comparisons.
	The lower bound of \citet{MunroRaman1991} (cf.\ \wref{sec:lower-bound})
	uses a reduction to distinct elements and yields the desired bound for ternary comparisons.
}
}%
\parenthesissign up to the constant factor~$2\ln 2$.
The bound $\QSentropy(\vect q) \le \entropy(\vect q) \ln 2$ was noted
in a seemingly unrelated context by \citet[Theorem~3.2]{AllenMunro1978} that appeared just one year after 
Sedgewick's Quicksort paper~\citep{Sedgewick1977a}.
\Citeauthor{AllenMunro1978} studied the move-to-root heuristic for
self-organizing BSTs, which they found to have the \emph{same} search costs in the long run 
as a BST built by inserting elements drawn \iid according to the access distribution 
until saturation.
We will consider this connection between Quicksort and search trees in detail
in \wref{sec:equals-quicksort-and-search-trees}.

\Citeauthor{KatajainenPasanen1992} considered Quicksort-based approaches for multiset sorting.
\ifproceedings{
	Their analysis only shows that we use at most
	$
			\entropy(\vect x/n) n + (\tfrac2{\ld e}-1) n\ln n \bin\pm \Oh(n)
	$
	comparisons, which is \emph{not} entropy-optimal.
}{
	They argued (indirectly) that a fat-pivot Quicksort
	uses on average $2n\ln(n) - \sum_{v=1}^u x_v \ld(x_v) \wbin\pm \Oh(n)$ comparisons (their Theorem~3),
	since ``Due to the three-way partitions, all redundant comparisons between a pivot and
	elements equal to the pivot are avoided''~\citep{KatajainenPasanen1992}.
	Note however that this only shows that we use at most
	$
			\entropy(\vect x/n) n + (\tfrac2{\ld e}-1) n\ln n \bin\pm \Oh(n)
	$
	comparisons, which is \emph{not} entropy-optimal.
	
	In a companion paper~\citep{KatajainenPasanen1994} they described a stable Quicksort version 
	with exact median selection
	and showed that it needs $\Oh(\entropy(\vect x/n)n)$ comparisons even 
	in the worst case;
	however the constant of proportionality is one plus the constant for deterministic median selection, 
	and thus at least $3$~\citep{DorZwick2001}.
}

\subsection{Fringe-Balanced Trees}

\ifproceedings{}{
	The concept of fringe balancing (see \wref{sec:fringe-balanced-trees}) 
	appears under a handful of other names in the literature:
	\textsl{locally balanced search trees}~\citep{Walker1976},
	\textsl{diminished trees}~\citep{Greene1983}, and
	\textsl{iR / SR trees}~\citep{HuangWong1983,HuangWong1984}.
	I~use the term \emph{fringe-balanced trees}
	since it is the most vivid term and since it is by now widely adopted in the analysis-of-algorithms
	community, see, \eg, the relatively recent monograph~\citep{Drmota2009} by \citeauthor{Drmota2009}.
	The name ``fringe balanced'' itself has its origins in a technique called \textsl{fringe analysis}, 
	which \citet{PobleteMunro1985} applied to BSTs that use
	what they called a ``fringe heuristic''.
	The earliest occurrence of ``fringe balanced'' seems to be in the title
	of a paper by \citet{Devroye1993}; 
	curiously enough, \citeauthor{Devroye1993} did not use this term in the 
	main text of the paper.
	
	Along with the different names come slight variations in the definitions;
	I~remark that our definition (deliberatively) differs a bit in the base cases from usual definitions
	to precisely mimic Quicksort recursion trees.
}

Many parameters like path length, height and profiles of fringe-balanced 
trees have been studied
when the trees are built from a random permutation of $n$ distinct elements,
see, \eg, \citet{Drmota2009}.
The case of equal elements has not been considered except for 
the unbalanced case $k=1$, \ie, ordinary BSTs; see \citet{Kemp1996,Archibald2006}.

\section{New Results}
\label{sec:main-result}

We now state the main claim of this paper.
\ifproceedings{%
	We
}{%
	Note that the error terms of asymptotic approximations in the results mentioned above 
	only involved $n$, so there was no need to specify an explicit relation between 
	the profile of the multiset $\vect x = (x_1,\ldots,x_u)$, the universe size $u$, and the number
	of elements $n$;
	here we are not so fortunate. 
	We hence 
}%
include $n$ as sub- or superscript whenever the dependence on the input size is important.

\begin{theorem}[Main Result]
\label{thm:main-result}
	Let $(u_n)_{n\in\N}$ be a sequence of integers and
	$(\ui{\vect q}n)_{n\in\N}$ be a sequence of vectors with
	$\ui{\vect q}n \in (0,1)^{u_n}$ and $\total{\ui{\vect q}n} = 1$
	for all $n\in\N$.
	Assume further that $(\ui{\vect q}n)$ has ``many duplicates'', \ie, 
	there is a constant $\epsilon>0$ so that 
	$\min_{v\in[u_n]} \ui{q_v}nn = \Omega(n^\epsilon)$ as $n\to\infty$.
	Abbreviate the corresponding (binary) entropy by $\mathcal H_n = \entropy (\ui{\vect q}n)$.
	
	The number $C_{n,\ui{\vect q}n}$ of (ternary) comparisons used by median-of-$k$ Quicksort
	with fat-pivot partitioning to sort $n$ elements drawn \iid according to 
	$\distFromWeights{\ui{\vect q}n}$
	fulfills
	\begin{align*}
			\E{C_{n,\ui{\vect q}n}}
		&\wwrel=
			\alpha_k \. \mathcal H_n \. n
			\wwbin\pm
			\Oh\Bigl( \bigl( \mathcal H_n ^ {1-\delta} +1 \bigr) n \Bigr),
			\ifproceedings{}{\qquad(n\to\infty),}
	\end{align*}
	\ifproceedings{as $n\to\infty$}{}%
	for any constant $\delta < \frac2{k+5}$.
	This number is asymptotically optimal up to the factor $\alpha_k$.
\end{theorem}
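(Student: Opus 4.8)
The plan is to follow the two-step route sketched in the introduction and combine it with a matching lower bound. First I would invoke the correspondence of \wref{sec:quicksort-search-trees} and \wref{sec:fringe-balanced-trees}: the number of ternary comparisons \wref{alg:list-quicksort} makes on an input $\vect U=(U_1,\ldots,U_n)$ equals the internal path length of the recursion tree, which is exactly the $k$-fringe-balanced tree obtained by inserting $U_1,\ldots,U_n$ with \wref{alg:fb-insert}, plus the $\Oh(n)$ comparisons spent on median selections and base-case Insertionsorts. Writing $D_i$ for the number of comparisons \wref{alg:fb-insert} uses to insert $U_i$ into the tree grown from $U_1,\ldots,U_{i-1}$ and using that the $U_i$ are \iid, linearity gives $\E{C_{n,\ui{\vect q}n}}=\sum_{i=1}^{n}\E{D_i}\pm\Oh(n)$, so it remains to control the expected search cost of a fresh \iid element in a partially grown random tree and to show it is asymptotically $\alpha_k$ times the per-element entropy.

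Here the ``many duplicates'' hypothesis enters. Put $m=m(n)=\lceil n^{1-\epsilon/2}\rceil$. Since $\min_v\ui{q_v}n\,n=\Omega(n^{\epsilon})$, every value has $\ui{q_v}n=\Omega(n^{\epsilon-1})=\omega(\log n/n)$, so by \wref{lem:binomial-not-very-small-whp} the first $m$ insertions contain more than $k$ copies of every value $v\in[u_n]$ with probability $1-n^{-\omega(1)}$; on that event the tree is \emph{saturated} (all $u_n$ values appear), and by the stochastic description of saturated trees in \wref{sec:saturated-trees} the search-cost law has essentially stabilized, so that $\E{D_i}=A_k(\ui{\vect q}n)\pm o(1)$ for $i>m$, where $A_k(\vect q)$ is the expected cost of searching one \iid $\distFromWeights{\vect q}$ element in a saturated $k$-fringe-balanced tree. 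The first $m$ insertions contribute at most $m\cdot\Oh(\log n)=o(n)$ by \wref{pro:tree-log-height-weak-whp}, and the complementary events (non-saturation, or height $>13\ln n$) contribute $o(1)$ since their probabilities are $n^{-\omega(1)}$ resp.\ $\Oh(n^{-2})$ while $n$ insertions cost at most $\Oh(n^2)$. Hence $\E{C_{n,\ui{\vect q}n}}=n\,A_k(\ui{\vect q}n)\pm\Oh(n)$, and the theorem reduces to proving, uniformly in $\vect q$,
\begin{align*}
	A_k(\vect q)
	&\wwrel=
	\alpha_k\.\entropy(\vect q)\wwbin\pm\Oh\bigl(\entropy(\vect q)^{1-\delta}+1\bigr).
\end{align*}

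This last step is the heart of the argument. Conditioning on the root pivot $P$ (the median of $k$ \iid $\distFromWeights{\vect q}$ draws) and writing $w_{<P}=\sum_{v<P}q_v$, $w_{>P}=\sum_{v>P}q_v$ and $\vect q_{<P},\vect q_{>P}$ for the corresponding renormalized sub-vectors, the saturated tree satisfies
\begin{align*}
	A_k(\vect q)
	&\wwrel=
	1+\E[\big]{w_{<P}\.A_k(\vect q_{<P})+w_{>P}\.A_k(\vect q_{>P})},
\end{align*}
with $A_k(\vect q)=0$ for a unit vector. The plan is to substitute the ansatz $A_k(\vect q)=\beta_k\entropy[\ln](\vect q)+(\text{error})$ with $\beta_k=1/(\harm{k+1}-\harm{t+1})$, so that $\beta_k\ln2=\alpha_k$ and $\beta_k\entropy[\ln](\vect q)=\alpha_k\entropy(\vect q)$, and to prove matching upper and lower bounds by strong induction on $u$. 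The calculation closes because of the aggregation identity $\entropy[\ln](\vect q)=\entropy[\ln](w_{<P},q_P,w_{>P})+w_{<P}\entropy[\ln](\vect q_{<P})+w_{>P}\entropy[\ln](\vect q_{>P})$: the toll $1$ must be compensated by $\beta_k\E{\entropy[\ln](w_{<P},q_P,w_{>P})}$, and in the limit of infinitesimal pivot weight $\E{\entropy[\ln](w_{<P},q_P,w_{>P})}\to\E{\entropy[\ln](\Pi,1-\Pi)}=\harm{k+1}-\harm{t+1}$ for $\Pi\eqdist\betadist(t+1,t+1)$ by \wref{lem:expected-entropy} --- precisely $1/\beta_k$. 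The discrepancies --- the pivot carries a non-infinitesimal weight $q_P$, sampling within a group is without replacement, and the base cases have $u$ small --- are controlled via the Hölder continuity of the entropy (\wref{lem:entropy-function-hölder}) together with \wref{lem:expected-entropy}, and account for the $\Oh(\entropy(\vect q)^{1-\delta}+1)$ term; keeping this error from accumulating over the $\Oh(\log n)$ levels of the recursion, using that the entropy strictly decreases along it, is what restricts $\delta$ to $\delta<\tfrac2{k+5}$.

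Finally, for the optimality claim I would invoke the \iid lower bound of \wref{sec:lower-bound} --- the adaptation of \citet{MunroRaman1991} to \iid inputs, together with the concentration of $\E{\ld\binom{n}{\vect X}}$ around $\entropy(\ui{\vect q}n)\,n$ from \wref{lem:entropy-of-multinomial-bound} --- which gives $\E{C_{n,\ui{\vect q}n}}\ge\entropy(\ui{\vect q}n)\,n-\Oh(n)$; since $\alpha_k\ge1$, the upper bound from the previous steps is within a factor $\alpha_k$ of this. I expect the main obstacle to be the error bookkeeping in the entropy-asymptotics step: running the induction so that the stated bound holds \emph{uniformly} over all sub-distributions produced by the recursion, even though the recursion is $\Theta(\log n)$ deep and $\entropy$ may shrink to $\Oh(1)$, and in particular separately taming the rare but non-negligible event that the sampled median is a ``heavy'' value with $q_P=\Omega(1)$, where the infinitesimal-pivot heuristic behind the clean constant $\beta_k$ is inaccurate. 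The quantitative saturation estimate in the first step --- how fast, and with what failure probability, every value appears --- is the second delicate point and is exactly what forces the $\Omega(n^{\epsilon})$ assumption.
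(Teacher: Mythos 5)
Your overall architecture matches the paper's: reduce to search costs in saturated fringe-balanced trees via the many-duplicates assumption, obtain matching upper and lower bounds on the expected search cost $\E{A_{\vect q}}$ in terms of $\entropy[\ln](\vect q)$ by induction on the universe size $u$, and compare with a Munro--Raman-style lower bound for \iid inputs. Your decomposition $\E{C_{n,\ui{\vect q}n}} = \sum_{i=1}^n \E{D_i}\pm\Oh(n)$ over per-insertion costs is a valid and somewhat cleaner alternative to the paper's Separation Theorem (which instead factors $\E{\vect\Gamma^T\vect{\hat X}} = \E{\vect\Gamma}^T\E{\vect{\hat X}}$ by splitting the input into a short tree-determining prefix and a long independent tail); since the final error $\Oh((\mathcal H_n^{1-\delta}+1)n)$ already absorbs $\Oh(n)$, your coarser $\Oh(n)$ bookkeeping for this step costs nothing.

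The genuine gap is in the asymptotics of $A_k(\vect q)$, which is the heart of the theorem. You correctly write the recurrence, invoke the aggregation identity, identify the target constant $\tilde H = \harm{k+1}-\harm{t+1}$ via \wref{lem:expected-entropy}, and even flag the problem case $q_P=\Omega(1)$. But your proposed mechanism for closing the argument — Hölder continuity of $\entropy[\ln]$ plus ``error accumulation over the $\Oh(\log n)$ levels of the recursion'' — does not work and is also in tension with your stated plan of strong induction on $u$ (that induction has no ``levels of $n$'' to sum over). When some $q_v\ge\epsilon$ the discrepancy between $\E{\entropy[\ln](V_1,H,V_2)}$ and $\tilde H$ is $\Omega(1)$, not small, so there is no small per-step error to bound and accumulate. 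The paper resolves this with a case distinction on $\lambda=\max_v q_v$ inside the induction, proving a bound of the parametrized form $\E{A_{\vect q}}\le c_\epsilon\entropy[\ln](\vect q)+d_\epsilon$ (and a matching $\ge$ with $-d_\epsilon$): if $\lambda<\epsilon$ then $\E{\entropy[\ln](\vect V,H)}\ge 1/c_\epsilon$ so the toll $1$ is fully absorbed by the leading term and the additive error never grows; if $\lambda\ge\epsilon$ then $\E{H}\ge 1/d_\epsilon$, so the attrition $d_\epsilon(1-\E{\total{\vect V}})=d_\epsilon\E{H}\ge 1$ pays for the toll. Only after this uniform-in-$\vect q$ bound is established does one set $\epsilon=\mathcal H^{-1/(t+3)}$ to balance $c_\epsilon\mathcal H = \mathcal H/\tilde H\pm\Oh(\epsilon\mathcal H)$ (resp.\ $\pm\Oh(\epsilon\log(1/\epsilon)\mathcal H)$) against $d_\epsilon=\Oh(\epsilon^{-t-2})$, which is where the exponent $\tfrac{t+2}{t+3}$, i.e.\ the restriction $\delta<\tfrac1{t+3}=\tfrac2{k+5}$, actually comes from. (Hölder continuity is used elsewhere in the paper, in the lower-bound argument via concentration of $\entropy[\ln](\vect X/n)$; it plays no role in the $A_k(\vect q)$ analysis.) Without the case distinction and the parametrized family $c_\epsilon,d_\epsilon$ your induction cannot close, and the specific exponent in the theorem does not emerge.
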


\paragraph{Previous Approaches And Why They Fail for \boldmath$k>1$}

\label{sec:why-previous-approaches-fail}

Sedgewick's analysis~\citep{Sedgewick1977a}
is based on explicitly solving the recurrence for the expected number of comparisons.
Since it has a \emph{vector,} namely the profile $\vect x$, as parameter, 
tricky differencing operations are required to obtain a telescoping recurrence.
They rely on symmetries that are only present for the most basic version
of Quicksort:
it has now been 40 years since Sedgewick's article appeared, and
not the slightest generalization of the analysis to, say, median-of-3 Quicksort, 
has been found.

Following our approach outlined in the introduction, we can alternatively compute
the expected costs to search each element of the multiset 
in a BST built by inserting the same elements in random order.
The random number of comparisons can thus be written as 
the scalar product $\vect \Gamma^T \vect x$, where $\vect\Gamma$
is the node-depth vector of the BST (cf.\ \wref{sec:equals-quicksort-and-search-trees}).
For an ordinary BST, once an element is present in the tree,
any further insertions of the \emph{same} value are without effect;
so we obtain the same tree no matter how many duplicates of this element
later follow.
This means that the resulting tree has \emph{exactly the same shape} as
when we insert elements drawn \iid according to $\distFromWeights{\vect q}$
with $\vect q= \vect x / n$ until saturation.
\ifproceedings{}{%
	The expected search costs in the latter case are found to be precisely $2\QSentropy(\vect q)+1$
	by a comparatively simple argument%
	\footnote{%
		Assume we search $v\in[u]$.
		We always have one final comparison with outcome $=$;
		the remaining comparisons are on the search path and compare 
		$v$ to some $j\ne v$.
		We compare $v$ to $j$ iff among the values between $v$ and $j$, 
		$j$ was the \emph{first} to be inserted into the tree,
		which happens with probability ${q_j}/({q_v+\cdots+q_j})$.
		Adding up the indicator variables for these events and multiplying by the 
		probability $q_v$ to search that value $v$, we obtain
		\(
			1 + \sum_{v=1}^u q_v \sum_{j\ne v} \frac{q_j}{q_v+\cdots+q_j}
			\wwrel=
			1 + 2 \sum_{1\le i < j \le u} \frac{q_i q_j}{q_i+\cdots+q_j}
			\wwrel=
			2 \QSentropy(\vect x / n) + 1
		\).
	}%
	~\citep{AllenMunro1978};
	multiplying by $n$ gives the Quicksort costs.
}%

For median-of-$k$ Quicksort we obtain $k$-fringe-balanced trees,
and now a certain number of duplicate insertions \emph{do} affect the shape of the tree%
\ifproceedings{%
	.
}{%
	; after all, this is the way the balancing is achieved in first place
	(see \wref{sec:fringe-balanced-trees}).
}%
As the multiset model corresponds to drawing elements without replacement,
the probabilities for the values \emph{change} after each insertion.
Analyzing the search cost then essentially reduces to solving the vector-recurrence
for Quicksort with pivot sampling that has resisted all attempts for 40 years.
One might conclude that the case $k=1$ can be explicitly solved precisely because 
we were effectively working in the \iid model instead of the multiset model.

\paragraph{My Assumption}
The only hope I see to make progress for $k > 1$ is thus to cling to the \iid model
even though it is not equivalent to the multiset model anymore.
We thereby retain independent insertions and the analysis of search costs 
in fringe-balanced trees becomes conceivable.
However, we face a new problem:
How often we search each value $v$ is now a random variable $X_v$
and the random number of comparisons is $\vect\Gamma^T \vect X$,
the \emph{product} of two random variables.
Since $\vect\Gamma$ is the node-depth vector of a tree built by inserting a multiset
with profile $\vect X$ (in random order), 
the two random variables are \emph{not independent}.

My assumption 
\parenthesisclause{$\Omega(n^{\epsilon})$ expected occurrences of each value}
is a simple sufficient condition to circumvent this complication 
(see \wref{sec:equals-separating-n-and-q}).
\ifproceedings{}{
	It is possible to slightly weaken it (at the price of a more clumsy criterion):
	We can tolerate values with minuscule multiplicities in $\Oh(n^{\delta})$
	as long as $\delta < \epsilon$ (so that we have a nonempty \emph{separating range} $(\delta,\epsilon)$),
	and the total number of these rare values is $\Oh(n^{1-\epsilon})$.
}

\section{Quicksort and Search Trees with Duplicates}
\label{sec:equals-quicksort-and-search-trees}

The correspondence discussed in 
\wref[Sections]{sec:quicksort-search-trees}/\ref{sec:fringe-balanced-trees}
extends to inputs with equal keys
if we consider a \emph{weighted} path length in the tree,
where the weight of each node is the multiplicity $X_v$ of 
its key value $v$.
This is best seen in a concrete example.

\paragraph{Example}
\label{sec:equals-quicksort-costs-value-wise}

Let the universe size be $u=5$
with profile $\vect X = (X_1,\ldots,X_5)$.
Assume that $\vect X \ge k$ so that
each of the five values is used as a pivot in \emph{exactly} one partitioning step,
and the leaves in the final tree will be empty.
Assume we obtain the recursion tree shown in \wref{fig:recursion-tree-example}.

\begin{figure}[tbhp]
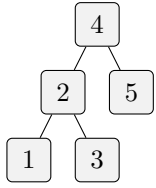

	\parbox{.33\linewidth}{%
	\plaincenter{%
		\externalizedpicture{recursion-tree-example}
	}%
	}%
	\parbox{.66\linewidth}{%
	\caption{%
		Exemplary recursion tree for $u=5$.
		Each node represents a partitioning step, 
		with the given pivot value.
		\protect\ifproceedings{}{
			Child links correspond to child recursive calls.
		}%
		Empty leaves are not shown.
		The node-depths vector for this tree is $\vect\Gamma=(3,2,3,1,2)$.
	}
	\label{fig:recursion-tree-example}
	}%
\end{figure}

The traditional recurrence for Quicksort sums up the 
costs of all partitioning steps.
Each such uses one comparison per element, plus the 
comparisons for selecting the pivot, say, at most $c\cdot k$;
ignoring the latter, we essentially sum up the subproblem sizes of all recursive calls.
\ifproceedings{\par}{}
For our recursion tree this yields
\begin{align*}
	\left\{ 
	\begin{array}{*{12}{@{\hspace{5pt}}l}}
		X_1 & + & X_2 & + & X_3 & + & X_4 & + & X_5 &  & \pm & c\cdot k \\
		X_1 & + & X_2 & + & X_3 &   &     &   &     &  & \pm & c\cdot k \\
		X_1 &   &     &   &     &   &     &   &     &  & \pm & c\cdot k \\
		    &   &     &   & X_3 &   &     &   &     &  & \pm & c\cdot k \\
		    &   &     &   &     &   &     &   & X_5 &  & \pm & c\cdot k
	\end{array}
	\right. 
\end{align*}
comparisons, (each row corresponding to one partitioning step).
For example, the step with pivot $2$ gets as input all elements smaller
than $4$, \ie, $X_1+X_2+X_3$ many, so
the number of comparisons used in this step is $X_1+X_2+X_3 \pm c\cdot k$
(for some constant $c$ depending on the median-selection algorithm).

The key observation is that we can also read the result \emph{column-wise,}
aligning the rows by key values:
up to the given error terms, we find that
\textsl{sorting costs are the cost of searching each input element in the (final) recursion tree!}
\ifproceedings{}{
	For example, searching $3$ in the tree from \wref{fig:recursion-tree-example}, 
	we first go left, then right and then find $3$ as the pivot, so
	the costs are $3$ \parenthesissign which is precisely the coefficient of $X_3$ in the overall costs.
}%
In vector form, we can write the search costs as $\vect\Gamma^T \vect X$, where 
$\vect\Gamma$ is the \emph{node-depths vector} of the recursion tree,
\ie, the vector of depths of nodes sorted by their keys.
In the example
$
		\vect\Gamma
	\wwrel=
		(3,2,3,1,2)
$.
\ifproceedings{}{
	This is nothing else than a weighted path length; 
	(more precisely, 
	a weighted internal path length where we include \emph{both} endpoints 
	to determine the length of a path).
}

\subsection{Recursion Trees}
\label{sec:recursion-trees}

We formalize the result of the above example in two lemmas.
The first one formally captures the correspondence of Quicksort and search trees.

\begin{lemma}[Recursion Trees]
\label{lem:equals-equivalence-recursion-tree-search-tree-exact}
	For any input \(\vect U = (U_1,\ldots,U_n)\) 
	(with or without duplicates)
	the following processes execute the \textit{same set} of (ternary) key comparisons
	and produce the same tree shape:
	\begin{enumerate}[label=(\arabic*)]
	\item 
		sorting \(\vect U\) with \wref{alg:list-quicksort} and storing the recursion tree,
		ignoring any calls to Insertionsort,
	\item 
		inserting \(\vect U\) successively into an initially
		empty $k$-fringe-balanced tree using \wref{alg:fb-insert}.
	\end{enumerate}
\end{lemma}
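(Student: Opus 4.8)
The plan is a strong induction on $n=|\vect U|$, exploiting that both processes decompose recursively about the pivot chosen for the root. For the base case $n\le k-1$: Quicksort immediately delegates to Insertionsort, so (ignoring that call) its recursion tree is the single leaf $\mathit{Leaf}(U_1,\ldots,U_n)$ holding those $n$ elements, and no ternary comparison is performed by \wref{alg:list-quicksort} itself; in the fringe-balanced process each of the $n$ insertions merely appends $U_i$ to the root leaf, whose size never exceeds $k-1$, so the tree is again $\mathit{Leaf}(U_1,\ldots,U_n)$ and no comparison is performed. Hence both agree trivially for $n\le k-1$, and since $n,k$ are integers the remaining case is $n\ge k$.

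For the inductive step assume $n\ge k$. I first claim both processes label the root with $P\coloneqq\proc{Median}(U_1,\ldots,U_k)$ and, in doing so, perform exactly the comparisons of that single \proc{Median} call. For Quicksort this is immediate; in the fringe-balanced process the first $k-1$ insertions fill the root leaf and the $k$-th insertion triggers its split, at which point the leaf buffer is exactly $U_1,\ldots,U_k$ in input order, so the median is taken over the same sample. Quicksort's partitioning round then performs one comparison between $U_i$ and $P$ for every $i\in[n]$ and distributes the elements, in input order, into $L_1$ (the $U_i<P$), $L_2$ (the $U_i=P$) and $L_3$ (the $U_i>P$). In the fringe-balanced process the split of the root leaf performs one comparison between $U_i$ and $P$ for each $i\in[k]$, routing the $U_i<P$ into the left child leaf $C_1$, the $U_i>P$ into the right child leaf $C_2$, and discarding the $U_i=P$; and each later element $U_i$ with $k<i\le n$ is routed at the root by one comparison with $P$, discarded if $U_i=P$, recursively inserted into the left subtree if $U_i<P$, into the right subtree if $U_i>P$. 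So at the root both processes perform exactly the comparisons of $\proc{Median}(U_1,\ldots,U_k)$ together with one comparison of $U_i$ against $P$ for each $i\in[n]$.

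It remains to identify the subtrees. Since $P=U_j$ for some $j\le k$, the part $L_2$ is nonempty, so $|L_1|,|L_3|<n$ and the induction hypothesis applies to $L_1$ and $L_3$. On the Quicksort side the root's children are the recursion trees $\mathcal T^{\mathrm{QS}}(L_1)$ and $\mathcal T^{\mathrm{QS}}(L_3)$. On the fringe-balanced side, once the root node is created its key never changes and its left subtree only ever receives the recursive insertions of the elements $U_i<P$ in input order; moreover, since $|C_1|\le t\le k-1$, the leaf $\mathit{Leaf}(C_1)$ is exactly the state of an initially empty fringe-balanced tree after inserting the elements of $C_1$ in order. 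Hence the final left subtree is the fringe-balanced tree obtained by inserting, in order, the elements of $C_1$ followed by the $U_i<P$ with $i>k$; but that sequence is precisely $L_1$. Symmetrically the right subtree equals $\mathcal T^{\mathrm{FB}}(L_3)$. Applying the induction hypothesis to $L_1$ and to $L_3$ shows the two subtrees coincide and execute the same comparisons, which together with the root analysis completes the induction.

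I expect the only delicate point to be the bookkeeping for the stateful fringe-balanced process: one must verify that the buffer $C_1$ frozen into the left child when the root leaf splits, concatenated with the elements routed left afterwards, reassembles exactly Quicksort's subproblem $L_1$, and that copies of the pivot are discarded consistently both during the split and during subsequent routing past the root. The cleanest way to make this rigorous is to isolate, once and for all, a small ``restriction'' property of \wref{alg:fb-insert} --- namely that inserting a sequence $\vect V$ into $\mathit{Leaf}(C)$ yields the same tree and the same comparisons as inserting $C$ followed by $\vect V$ into the empty tree, provided $|C|\le k-1$ --- after which the decomposition used above, and hence the induction, is routine.
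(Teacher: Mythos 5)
Your proof is correct and follows essentially the same approach as the paper's: induction on $n$, the same base case for $n \le k-1$, and the same analysis of root pivot, root-level comparisons, and subproblem decomposition in the inductive step. The subtlety you flag --- that the buffer $C_1$ deposited in the left child at split time, concatenated with the elements routed left afterwards, reassembles exactly Quicksort's sublist $L_1$, via the ``restriction'' property of the insert routine --- is glossed over in a single sentence of the paper's proof, so your explicit handling of it is a sound refinement of the same argument.
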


\begin{proof}
We prove the equivalence by induction on $n$.
If $n\le k-1$, Quicksort stops 
(it passes control directly to Insertionsort which we ignore),
so the recursion tree consists of one leaf only.
Likewise in the search tree, 
all elements are gathered in the single leaf and no comparisons happen.
So assume the claim holds for inputs with less than $k$ elements.

If now $n\ge k$, Quicksort chooses a pivot $P$ from the first $k$ elements,
compares all elements to $P$,
and divides the input into segments $\ui{\vect U}1$ and $\ui{\vect U}2$ containing
the elements strictly smaller resp. strictly larger than $P$.
All duplicates of $P$ are put in place and vanish from the recursion tree.

Now consider what happens upon inserting the $k$th element in the search tree.
This is the first time a leaf is split and
the key for the inner node (the root of the tree) 
is chosen from the first $k$ inserted elements overall.
We thus choose same value $P$ that was chosen as pivot in Quicksort.
The other elements from the leaf are compared to $P$ and 
inserted into one of the two new leaves (unless they are duplicates of $P$).
Any later insertions must start at the root, 
so each of these elements are also compared to $P$ before
the insertion continues in one of the subtrees, or stops if a duplicate of $P$ is found.

So we execute the same set of comparisons in both processes at the root of the tree.
Towards applying the inductive hypothesis for recursive calls resp.\ subtrees,
we note that the relative order of elements is retained in both processes,
so the elements inserted in the left/right child of the root are exactly 
$\ui{\vect U}1$ resp.\ $\ui{\vect U}2$ in both cases.
The claim thus follows by induction.
\end{proof}

The proof relies on the fact that \wref{alg:list-quicksort} retains
the relative order of elements,
but practical non-stable, in-place implementations
(\eg, those in \citep{Bentley1993,Sedgewick2011algorithms}) 
do not fulfill this requirement.
However, a weaker version of 
\wref{lem:equals-equivalence-recursion-tree-search-tree-exact}
remains valid for any fat-pivot partitioning method:
the two processes always have the same \emph{distribution} 
of the number of comparisons and tree shapes over \emph{randomly ordered} inputs.
\ifproceedings{}{%
	In this general case, sorting the input corresponds
	to inserting a (different, but uniquely determined) permutation of the input
	into a fringe-balanced tree;
	my \phdthesis~\citep{Wild2016} gives some more details on that.
}
Such a distributional version of \wref{lem:equals-equivalence-recursion-tree-search-tree-exact}
is sufficient for all results in this paper, 
so our results apply to practical implementations, as well.

\subsection{Search Costs}

The second lemma relates the costs to build a search tree to
the cost of searching in the final tree.

\begin{lemma}[Search Costs]
\label{lem:insert-costs-approx-search-costs}
	Let \(\vect U = (U_1,\ldots,U_n)\) be an input (with or without duplicates),
	let $\mathcal T$ be built from $\vect U$ by successive insertions
	using \wref{alg:fb-insert} and let $B_{\vect U}$ be the number of comparisons
	done in this process.
	Assume there are $I$ inner nodes in $\mathcal T$.
	Searching each element of \(\vect U\) in $\mathcal T$ 
	\ifproceedings{}{using \wref{alg:fb-search}}
	(ignoring the sequential search in leaves)
	uses $B_{\vect U} \pm c\cdot Ik$ comparisons.
\end{lemma}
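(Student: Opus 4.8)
The plan is to compare, insertion by insertion, the comparisons performed while \emph{building} $\mathcal T$ with the comparisons performed while \emph{searching} each $U_i$ in the \emph{final} tree $\mathcal T$, and to show that these two totals differ by exactly $Ik$ up to the total cost of the $I$ leaf splits (pivot selection plus reclassification), which is only $\Oh(Ik)$ because $k$ is fixed.

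First I would write $B_{\vect U} = B_{\mathrm{trav}} + B_{\mathrm{split}}$, where $B_{\mathrm{trav}}$ collects the comparisons made while descending from the root in \wref{alg:fb-insert} (the ternary comparisons at inner nodes), and $B_{\mathrm{split}}$ collects all comparisons incurred when a leaf overflows: selecting the median of its $k$ buffered elements and reclassifying those elements against the chosen pivot. Exactly one inner node of $\mathcal T$ is created per split, so there are $I$ splits, each costing at most $ck$ for the median selection plus at most $k$ for reclassification; hence $0 \le B_{\mathrm{split}} \le (c+1)\,Ik$. It therefore suffices to prove the identity $B_{\mathrm{trav}} = S_{\vect U} - Ik$, where $S_{\vect U}$ is the number of comparisons to search all of $\vect U$ in $\mathcal T$ (ignoring the sequential search in leaves).

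For this, write $\tau(w)$ for the insertion step during which an inner node $w$ of $\mathcal T$ is created, and let $P_i$ be the set of inner nodes of $\mathcal T$ on the root-to-$U_i$ search path (this path ends at the inner node whose key is $U_i$ if one exists, and at a leaf of $\mathcal T$ otherwise). Since an inner node never changes once created and a split only attaches new structure below a former leaf, the inner nodes of $\mathcal T_{i-1}$ visited while inserting $U_i$ are exactly $\{w \in P_i : \tau(w) < i\}$, whereas searching $U_i$ in $\mathcal T$ visits all of $P_i$. Summing the difference over $i$ yields
\[
 S_{\vect U} - B_{\mathrm{trav}}
 \;=\; \sum_{i=1}^{n} \bigl|\{w \in P_i : \tau(w) \ge i\}\bigr|
 \;=\; \sum_{w}\, \bigl|\{\, i : i \le \tau(w),\ w \in P_i \,\}\bigr| ,
\]
the outer sum ranging over all inner nodes $w$ of $\mathcal T$. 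The key claim is that every summand on the right equals $k$: if $\ell_w$ is the leaf whose overflow creates $w$, then the split fires precisely when $\ell_w$ holds $k$ elements, and I would show that $\{\, i : i\le\tau(w),\ w\in P_i \,\}$ is exactly the index set of those $k$ buffered elements. One inclusion follows from the split rule, since each buffered element is either discarded as a duplicate of the new pivot (so $U_i$ is the key of $w$) or sent to a child of $w$, and in both cases $w\in P_i$. For the reverse inclusion, if $i\le\tau(w)$ and $w\in P_i$, then the sequence of leaves occupied by $U_i$ over time is a prefix of the root-to-final-position path of $U_i$ in $\mathcal T$, this path runs through $w$, and the entire subtree rooted at $w$ is created at step $\tau(w)$ or later; hence just before the split $U_i$ must occupy the deepest already-existing node of that path, which is exactly $\ell_w$. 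This gives $S_{\vect U} - B_{\mathrm{trav}} = Ik$, and combining with the bound on $B_{\mathrm{split}}$ we obtain $S_{\vect U} = B_{\vect U} + (Ik - B_{\mathrm{split}})$ with $|Ik - B_{\mathrm{split}}| \le \max(1,c+1)\,Ik$, which is the assertion.

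I expect the structural claim of the last paragraph to be the main obstacle, in particular the reverse inclusion: one must argue carefully that an element inserted before $w$ exists but routed through $w$ really sits in the soon-to-split leaf $\ell_w$ at the moment of the split, and one must handle the bookkeeping for insertions whose value equals an inner-node key (duplicates discarded either at an existing inner node or during the very split that creates it). Everything else is routine counting; a sanity check for $k=1$, where $\mathcal T$ is an ordinary BST and $\tau(w)$ is the first occurrence of the key of $w$, confirms the identity $B_{\mathrm{trav}} = S_{\vect U} - I$.
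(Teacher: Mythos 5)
Your argument is correct, and it reaches the same conclusion as the paper's proof but by a noticeably more precise route. The paper's proof is an informal charging argument: it notes that elements landing in leaves incur the same traversal comparisons during building (traversal plus reclassification at each split they participate in) as during searching, and then coarsely bounds the slack introduced by the at-most-$k$ elements that get absorbed into each of the $I$ inner nodes and by the $\Oh(k)$ median-selection comparisons per split. Your proof instead decomposes $B_{\vect U}=B_{\mathrm{trav}}+B_{\mathrm{split}}$ and proves the \emph{exact} identity $S_{\vect U}-B_{\mathrm{trav}}=Ik$ by a double count: swapping the order of summation over $(i,w)$ pairs reduces the difference to $\sum_w |\{i : i\le\tau(w),\ w\in P_i\}|$, and you identify each inner sum with the $k$ elements in $\ell_w$'s buffer at the moment of the split that creates $w$. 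That exact identity is a genuine refinement not present in the paper and makes the sign of each part of the error transparent (searching exceeds $B_{\mathrm{trav}}$ by exactly $Ik$; $B_{\mathrm{split}}$ contributes at most $(c+1)Ik$); the looser constant $\max(1,c+1)$ versus the paper's $c$ is immaterial since the lemma leaves $c$ unspecified. Two small remarks. First, in the reverse inclusion you write that the sequence of leaves occupied by $U_i$ over time is a ``prefix'' of the root-to-destination path of $U_i$ in $\mathcal T$; it is actually a \emph{suffix} of that path, starting at the position of the first leaf $U_i$ was inserted into, but the conclusion you draw from it (just before $\ell_w$ splits, $U_i$ must occupy the deepest already-existing node on that path, namely $\ell_w$) is correct and is exactly the structural fact needed. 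Second, your sanity check for $k=1$ is reassuring but note that there $\ell_w$ holds a single element, so ``$\tau(w)$ is the first occurrence of the key of $w$'' is only right when $\vect U$ has no earlier duplicate routed to $w$ as an already-existing inner node; this is handled by your Case-1-style argument and not an issue, just worth keeping in mind.
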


\begin{proof}
	The elements in the leaves of $\mathcal T$ require the exact same comparisons 
	for insert and search to find the path to their leaf.
	(We ignore the sequential search within the leaf).
	So consider the key $v\in[u]$ of an inner node.
	The first occurrences of $v$ in $\vect U$ are simply added to the leaf
	that later becomes the inner node with label $v$.
	Each of these entails one comparison more when searching for it in $\mathcal T$
	as we paid when inserting it
	(namely the last comparison with $v$ that identifies them as equal).
	However, there are be at most $k$ such elements before the leaf overflows,
	and for the remaining duplicate insertions of $v$, we \emph{do} the last comparison
	(same as in the search). So searching pays up to $I\cdot k$ comparisons more.
	On the other hand,
	whenever a leaf overflows, we need a certain number of comparisons to select the
	median, say at most $c\cdot k$ for some constant $c$.
	So insertion pays up to $I\cdot ck$ comparisons more than insertion.
\end{proof}

In general, $I$ can be as large as $n/2$, but it is certainly bounded 
by the number $u$ of distinct keys in the input.
Together this yields the formal generalization of our example.

\begin{corollary}[Quicksort and Search Costs]
\label{cor:quicksort-costs-search-costs}
\ifproceedings{\mbox{}\\}{}
	Let \(\vect U = (U_1,\ldots,U_n)\) be an input over universe $[u]$.
	Let $\mathcal T$ be built from $\vect U$ by successive insertions
	using \wref{alg:fb-insert} and denote by $\vect\Gamma$ its node-depth vector.
	The cost of \wref{alg:list-quicksort} on $\vect U$ (ignoring Insertionsort)
	is $\vect\Gamma^T \vect X \pm c uk$ for $\vect X$ the profile of $\vect U$.
\qed\end{corollary}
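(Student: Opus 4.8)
The plan is to compose the two lemmas just proved. Fix $\vect U=(U_1,\dots,U_n)$ and let $\mathcal T$ be the $k$-fringe-balanced tree obtained by inserting $U_1,\dots,U_n$ with \wref{alg:fb-insert}; write $I$ for its number of inner nodes, noting $I\le u$ since no value is ever used as a pivot twice. By \wref{lem:equals-equivalence-recursion-tree-search-tree-exact}, running \wref{alg:list-quicksort} on $\vect U$ and discarding the Insertionsort calls executes the very same ternary comparisons as this insertion process, so the Quicksort cost in question equals $B_{\vect U}$, the number of comparisons spent building $\mathcal T$. By \wref{lem:insert-costs-approx-search-costs}, $B_{\vect U}$ agrees, up to an additive $c\.Ik$, with the total cost $S$ of searching every element of $\vect U$ in the final tree $\mathcal T$ (ignoring the sequential scan in leaves); hence the Quicksort cost is $S\pm c\.uk$.

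It then remains to see $S=\vect\Gamma^{T}\vect X$, which I would check value by value. If $v\in[u]$ labels an inner node of $\mathcal T$, each of its $X_v$ occurrences is searched along the root-to-$v$ path and stops there at cost exactly $\Gamma_v$, under the convention of \wref{fig:recursion-tree-example} that the root has depth $1$ and the final $=$-comparison is counted. Summing over all $n$ elements and grouping by value gives $S=\sum_{v}X_v\Gamma_v=\vect\Gamma^{T}\vect X$. In every application of this corollary we have $\vect X\ge k$ componentwise, so every value indeed labels an inner node, all leaves of $\mathcal T$ are empty, and this identity is exact; more generally the at most $\Oh(uk)$ elements that survive only inside a leaf (there are $\Oh(u)$ leaves, each holding $\le k-1$ elements) can be absorbed into the error term by letting $\vect\Gamma$ record the depth of their leaf. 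Either way, combining with $B_{\vect U}=S\pm c\.uk$ and ``Quicksort cost $=B_{\vect U}$'' finishes the proof.

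I expect the only delicate point to be this last bookkeeping: aligning the node-depth convention behind $\vect\Gamma$ with the number of comparisons a search actually makes (the off-by-ones from the terminating $=$-comparison and from reaching versus entering a leaf), and confirming that all the genuine slack --- the pivot-selection comparisons at the $\le u$ splits and the insert-versus-search mismatch --- really is $\Oh(uk)$, so that it fits inside the stated $\pm c\.uk$ with the same constant $c$ as in \wref{lem:insert-costs-approx-search-costs}.
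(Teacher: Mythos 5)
Your proposal is correct and follows the same route as the paper: combine \wref{lem:equals-equivalence-recursion-tree-search-tree-exact} (Quicksort cost $=$ tree-building cost) with \wref{lem:insert-costs-approx-search-costs} (building cost $=$ search cost $\pm c\.Ik$), then use $I\le u$ to replace $I$ by $u$, which is exactly the paper's one-line reasoning preceding the \qed. The bookkeeping point you flag about leaf-only values is real but harmless under the intended reading of $\vect\Gamma$ (depth of whichever node or leaf holds $v$), and it is vacuous in every application since $\vect X\ge k$ forces a saturated tree.
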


\ifproceedings{}{
	Up to an error term of $\Oh(u)$, we can thus consider search costs in fringe-balanced
	trees instead of sorting costs in Quicksort.
}%
For a special class of such trees,
we will be able to determine the search costs:
the \emph{saturated fringe-balanced trees}.
They are the subject of the next section.

\section{Saturated Fringe-Balanced Trees}
\label{sec:saturated-trees}

Consider a $k$-fringe-balanced tree $\mathcal T$ built by successively inserting elements drawn 
\iid $\distFromWeights{\vect q}$ (for a fixed universe distribution $\vect q$)
into an initially empty tree.
How does this tree evolve if we continue inserting indefinitely?
Since the universe $[u]$ is finite and duplicate insertions do not alter $\mathcal T$,
the process reaches a stationary state almost surely.
We call the trees corresponding to such stationary states \emph{saturated trees} 
(\wrt the given, fixed universe).
The expected search costs in saturated trees
will play a key role in the analysis of Quicksort.

We start by developing a stochastic description of the shape
of a random saturated tree $\mathcal T$ and set up a recurrence equation
for the expected search costs.

\subsection{Stochastic Description}
\label{sec:stochastic-model}

Let $\vect q\in (0,1)^u$ with $\total{\vect q} = 1$ be given.
The distribution function of the universe distribution $U\eqdist \distFromWeights{\vect q}$ is 
$F_U(v) = \Prob{U\le v} = \smash{\sum_{i=1}^{\lfloor v\rfloor} q_i} $ for $v\in[0,u+1)$,
and I~denote its (generalized) inverse by
$F^{-1}_U: (0,1) \to [1..u]$ with 
$F^{-1}_U(x) = \inf\{ v\in[1..u]  : F_U(v) \ge x \}$.

Let $P$ be the label of the root of $\mathcal T$;
the distribution of $P$ is a key ingredient to (recursively) describe saturated trees.
\ifproceedings{}{
	($P$ is also the pivot chosen in the first partitioning step of median-of-$k$ Quicksort.)
}%
When the first leaf overflows, $P$ is chosen as the median of the first $k=2t+1$ inserted values,
which are \iid $\distFromWeights{\vect q}$ distributed,
so it is given by
\begin{align*}
\numberthis\label{eq:dist-P}
		P
	\wwrel\eqdist
		f_U^{-1}(\Pi),
\end{align*}
where $\Pi$ has a $\betadist(t+1,t+1)$ distribution,
\ie, it has density 
$f_\Pi(z) = z^t(1-z)^t / \BetaFun(t+1,t+1)$.
$\BetaFun(a,b) = \Gamma(a)\Gamma(b) / \Gamma(a+b)$
is the \textsl{beta function}.
This is the generalized \textsl{inversion method of random sampling,} 
see \citet[Sec.\,V.3.4]{Devroye1986}, 
illustrated in our \wref{fig:equals-stochastic-model-sampling},
which is based on the fact that $\betadist(t+1,t+1)$ is the distribution of the 
median of $2t+1$ \iid uniformly in $(0,1)$ distributed random variables 
(see, \eg, \citep[Sec.\,I.4.3]{Devroye1986}).
For convenient notation, we write $\vect D = (D_1,D_2) = (\Pi,1-\Pi)$ for the induced 
\textsl{spacings}; see \wref{fig:equals-stochastic-models-probs}.

\ifproceedings{\begin{figure}[t]}{\begin{figure}[t]}
	\begin{captionbeside}{%
		Illustration of pivot sampling in the \iid model%
		\protect\ifproceedings{.}{with $u=6$.}
		$\Pi$~is the $x$-coordinate of a point uniformly chosen in the gray area (the area under the curve),
		and $P$ is the index of the interval this point lies in.%
		\protect\ifproceedings{\vspace*{-2ex}}{\vspace*{3ex}}%
	}
	\plaincenter{%
	\externalizedpicture{stochastic-model-sampling}
%
%
%
%
%
%
%
%
%
%
	}
	\ifproceedings{\label{fig:equals-stochastic-model-sampling}}{}
	\end{captionbeside}
	\ifproceedings{}{\label{fig:equals-stochastic-model-sampling}}
\end{figure}

\ifproceedings{\begin{figure}[t]}{\begin{figure}[t]}
	\ifproceedings{}{\vspace*{-2ex}}%
	\begin{captionbeside}{%
		Relation of the different quantities in the stochastic description.%
		\protect\ifproceedings{\vspace*{-2ex}}{\vspace{3ex}}%
	}
	\plaincenter{%
	\externalizedpicture{stochastic-model-probs}
%
%
%
%
%
%
%
%
	}
	\ifproceedings{\label{fig:equals-stochastic-models-probs}}{}
	\end{captionbeside}
	\ifproceedings{}{\label{fig:equals-stochastic-models-probs}}
\end{figure}

We further denote by $V_1$ and $V_2$ the probability that a random element 
$U\eqdist\distFromWeights{\vect q}$ belongs to the left resp.\ right subtree of the root,
and by $H = \Prob{U=P}$ the probability to ``hit'' the root's value.
These quantities are fully determined by $P$ (see also \wref{fig:equals-stochastic-models-probs}):
\begin{galign}[\label{eq:def-V1-V2-H}]
		V_1
	&\wwrel=
		q_1 +\cdots+ q_{P-1},
\\
		V_2
	&\wwrel=
		q_{P+1} +\cdots+ q_{u},
\\
		H
	&\wwrel=
		q_P.
\end{galign}
(In the boundary case $P=1$, we have $V_1 = 0$, and similarly $V_2=0$ for $P=u$.)
Finally, we denote by $\vect{Z_1}$ and $\vect{Z_2}$ the ``zoomed-in'' 
universe distributions in the left resp.\ right subtree:
\begin{galign}[\label{eq:def-Z}]
		\vect{Z_1}
	&\wwrel=
		\left(\frac{q_1}{V_1} ,\ldots, \frac{q_{P-1}}{V_1}\right),
\\
		\vect{Z_2}
	&\wwrel=
		\left(\frac{q_{P+1}}{V_2} ,\ldots, \frac{q_{u}}{V_2}\right).
\end{galign}
$\vect{Z_1}$ is not well-defined for $P=1$; we set it to the \emph{empty} vector $\vect{Z_1} = ()$ in this case.
Similarly $\vect{Z_2}=()$ for $P=u$.

\subsection{Search Costs}
\label{sec:equals-stwl-path-length-distributional}

Let $\mathcal T$ be a random $k$-fringe-balanced tree resulting from
inserting \iid $\distFromWeights{\vect q}$ elements until saturation.
Each value $v\in[u]$ then appears as the key of one inner node of $\mathcal T$;
let $\Gamma_v$ denote its depth, \ie, the (random) number of nodes on the path (including endpoints)
from the root to the node containing $v$ in the (random) tree $\mathcal T$.
The vector
$
		\vect\Gamma
	\wwrel=
		(\Gamma_1,\ldots,\Gamma_u)
$
is called the (random) \emph{node-depths vector} of $\mathcal T$
(see \wpref{fig:recursion-tree-example} for an example).
Finally, we write $A_{\vect q} = \vect\Gamma^T \vect q$.
This is the average depth of a node drawn according to $\distFromWeights{\vect q}$ in
the (random) tree~$\mathcal T$;
note that we \emph{average} the costs over the searched \emph{key}, but consider the \emph{tree} fixed;
so $A_{\vect q}$ is a random variable since $\mathcal T$ remains random: 
the (weighted) average node depth in a random saturated $k$-fringe-balanced tree.

The expected node depth, or equivalently, the expected search cost in the tree,
can be described recursively:
The root contributes one comparison to any searched element, 
and with probability $H$ the search stops there.
Otherwise, the sought element is in the left or right subtree with probabilities $V_1$ resp.\ $V_2$, 
and the expected search costs in the subtrees are given recursively by $A_{\vect{Z_1}}$ and $A_{\vect{Z_2}}$.
With the notation from above, 
this yields a \textsl{distributional recurrence} for~$A_{\vect q}$:
\begin{galign}[\label{eq:equals-A-q-recurrence}]
\label{eq:equals-A-q-recurrence-u-ge-1}
		A_{\vect q}
	&\wwrel\eqdist
		1 \bin+ V_1 \ui{A_{\vect{Z_1}}}1 + V_2 \ui{A_{\vect{Z_2}}}2 \,,
	\qquad(u\ge1),
\\
\label{eq:equals-A-q-recurrence-initial-condition}
		A_{()}
	&\wwrel=
		0,
\end{galign}
where
$(\ui{A_{\vect q}}1)$ and $(\ui{A_{\vect q}}2)$ are independent copies of $(A_{\vect q})$, 
which are also independent of $(V_1,V_2,\allowbreak \vect{Z_1},\vect{Z_2})$.

\section{Quicksort With Many Duplicates}
\label{sec:equals-separating-n-and-q}

\ifproceedings{
	We consider 
}{
	As discussed in \wref{sec:why-previous-approaches-fail},
	the attempts to analyze median-of-$k$ Quicksort 
	in full generality have failed;
	I therefore confine myself to
}%
the following restricted \iid model.

\begin{definition}[Many Duplicates]
\label{def:equals-expected-profile-model-many-duplicates}
\ifproceedings{\mbox{}\\}{}
	Let \((\ui{\vect q}n)_{n\in\N}\) be a sequence of stochastic vectors, where
	\(\ui{\vect q}n\) has \(u_n\) entries, \ie, \(\ui{\vect q}n \in (0,1)^{u_n}\) and
	\(\total{\ui{\vect q}{n}} = 1\), for all \(n\in\N\).
	An input of size \(n\in\N\) under the\/ \textit{\iid model} for \((\ui{\vect q}n)_{n\in\N}\)
	consists of the \(n\) \iid \(\distFromWeights{\ui{\vect q}n}\) 
	distributed random variables.
\ifproceedings{\par}{}	
	$(\ui{\vect q}n)_{n\in\N}$ is said to have \textit{many duplicates} if
	there is a constant \(\varepsilon>0\) so that
	\(\mu_n= \Omega(n^{-1+\varepsilon})\) as $n\to\infty$ where
	\(\mu_n \ce \min_r \ui{q_r}{n}\) is the smallest probability.
\end{definition}

This condition ensures that every value occurs $\Omega(n^{\epsilon})$ times
in expectation.
\ifproceedings{}{
	(It might hence be more appropriate to say many duplicates \emph{of each kind,} 
	but I~refrain from doing so for conciseness.)
}%
With many duplicates, we expect few \emph{degenerate} inputs
in the following sense.

\begin{definition}[Profile-Degenerate Inputs]
\label{def:equals-degenerate-profile}
\ifproceedings{\mbox{}\\}{}
	Let \(\nu\in[0,1)\) and \(k\in\N\).
	An input vector \(\vect U = (U_1,\allowbreak\ldots,\allowbreak U_n) \in [u]^n\) of size \(n\) is 
	called \textit{\((\nu,k)\)-profile-degenerate} if 
	not all \(u\) elements of the universe appear at least \(k\) times 
	in the first \(n_T = \lceil n^\nu\rceil\) elements \(U_1,\ldots,U_{n_T}\) of \(\vect U\).
	If the parameters are clear from the context or are not important, we call \(\vect U\) 
	simply \textit{profile-degenerate}.
\end{definition}

For non-degenerate inputs, the recursion tree will depend only on the first $n_T$ elements,
and the profile of the remaining $n-n_T$ elements is \emph{independent of this tree.}
By choosing $n_T$ large enough to have non-degenerate inputs \whp, but 
small enough to keep the contribution of the first $n_T$ elements to the search costs negligible,
we obtain the following theorem.

\begin{theorem}[Separation Theorem]
\label{thm:separating-n-and-q}
\ifproceedings{~\\}{}
	Consider median-of-$k$ Quicksort with fat-pivot partitioning 
	under a discrete \iid model with many duplicates.
	The expected number of comparisons fulfills
	\begin{align*}
	\numberthis\label{eq:equals-expected-profile-stateless-costs-expectation}
			\E{C_{n,\ui{\vect q}n}}
		&\wwrel=
			\E{A_{\ui{\vect q}n}} \cdot n \bin\pm \Oh(n^{1-\varepsilon}),
			\ifproceedings{}{\qquad(n\to\infty),}
	\end{align*}
	for a constant \(\varepsilon>0\)
	\ifproceedings{as \(n\to\infty\)}{};
	more precisely,
	we need \(\varepsilon\in(0,\tilde\epsilon)\) when
	\(\mu_n = \min_r \ui{q_r}n = \Omega(n^{-1+\tilde\epsilon})\).
\end{theorem}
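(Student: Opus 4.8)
The plan is to exploit the notion of profile-degeneracy (\wref{def:equals-degenerate-profile}) with a carefully tuned exponent $\nu\in(0,1)$, so that $n_T=\lceil n^\nu\rceil$: with overwhelming probability the recursion tree is already in its final, \emph{saturated} shape after the first $n_T$ insertions, and that shape is then statistically independent of the profile of the remaining $n-n_T$ elements, which carry the bulk of the search cost. The first ingredient is the deterministic observation that whenever a value $v$ occurs at least $k$ times among $U_1,\dots,U_{n_T}$, it becomes the key of an inner node of the $k$-fringe-balanced tree built from $U_1,\dots,U_{n_T}$ via \wref{alg:fb-insert}: as long as $v$ has not been selected as a pivot, all copies of $v$ inserted so far lie in the single leaf on $v$'s search path and none are ever dropped, so when the $k$-th copy arrives that leaf already holds exactly $k-1$ copies of $v$ and nothing else; it therefore overflows with $k$ equal elements and splits with median $v$. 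Hence a non-$(\nu,k)$-profile-degenerate input yields a tree that is saturated after $n_T$ steps, and a saturated tree is stationary under further insertions (cf.\ \wref{sec:saturated-trees}). Writing $D$ for the event ``$\vect U$ is $(\nu,k)$-profile-degenerate'', on $\bar D$ the final recursion tree equals $\mathcal T_{n_T}$, its node-depth vector $\vect\Gamma$ is precisely that of the random saturated tree of \wref{sec:saturated-trees}, and both $\vect\Gamma$ and $\indicator{\bar D}$ are measurable functions of $U_1,\dots,U_{n_T}$ alone.

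Next I would show that $\Prob{D}$ is super-polynomially small. By the observation above, $\Prob{D}\le\sum_v\Prob{\binomial(n_T,\ui{q_v}n)<k}$, and each count has mean $n_T\ui{q_v}n\ge n_T\mu_n=\Omega(n^{\,\nu-1+\tilde\epsilon})\to\infty$ as soon as $\nu>1-\tilde\epsilon$. For the at most one value with $\ui{q_v}n>\tfrac12$ a plain Chernoff estimate (\wref{lem:chernoff-bound-binomial}) applies; for all others $\ui{q_v}n$ is bounded away from $1$, and \wref{lem:binomial-not-very-small-whp} gives $\Prob{\binomial(n_T,\ui{q_v}n)\le k}=o(n^{-c})$ for every constant $c$. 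Since $u_n\le 1/\mu_n=\Oh(n^{1-\tilde\epsilon})$, a union bound yields $\Prob{D}=o(n^{-c})$ for every $c$.

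For the assembly, \wref{cor:quicksort-costs-search-costs} gives $C_{n,\ui{\vect q}n}=\vect\Gamma^T\vect X\pm\Oh(u_nk)=\vect\Gamma^T\vect X^{(1)}+\vect\Gamma^T\vect X^{(2)}\pm\Oh(n^{1-\tilde\epsilon})$, where $\vect X^{(1)}$ and $\vect X^{(2)}$ are the independent profiles of $U_1,\dots,U_{n_T}$ and of $U_{n_T+1},\dots,U_n$, the latter with $\vect X^{(2)}\eqdist\multinomial(n-n_T,\ui{\vect q}n)$. On $D$ the crude bound $C_{n,\ui{\vect q}n}\le (u_n+1)n=\Oh(n^{2-\tilde\epsilon})$ together with the previous step gives $\E{C_{n,\ui{\vect q}n}\indicator{D}}=o(n^{-c})$ for every $c$. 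On $\bar D$, I would bound the first summand by $\vect\Gamma^T\vect X^{(1)}\le\operatorname{height}(\mathcal T_{n_T})\,n_T$; since $\operatorname{height}(\mathcal T_{n_T})\le u_n+1$ always and $\le 13\ln n_T$ with probability $1-\Oh(n_T^{-2})$ by \wref{pro:tree-log-height-weak-whp}, while $u_n n_T^{-2}=\Oh(n^{1-\tilde\epsilon-2\nu})=o(1)$ for $\nu>1-\tilde\epsilon$, we get $\E{\operatorname{height}(\mathcal T_{n_T})}=\Oh(\log n)$, hence $\E{\vect\Gamma^T\vect X^{(1)}\indicator{\bar D}}=\Oh(n^\nu\log n)$. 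For the second summand the independence noted above yields $\E{\vect\Gamma^T\vect X^{(2)}\indicator{\bar D}}=(n-n_T)\E{\vect\Gamma^T\ui{\vect q}n\indicator{\bar D}}=(n-n_T)\E{A_{\ui{\vect q}n}\indicator{\bar D}}$, and since $A_{\ui{\vect q}n}\le\operatorname{height}(\mathcal T_{n_T})\le u_n+1$ on $\bar D$ (so $\E{A_{\ui{\vect q}n}\indicator{\bar D}}=\E{A_{\ui{\vect q}n}}+o(n^{-c})$, with $\E{A_{\ui{\vect q}n}}=\Oh(\log n)$ again by \wref{pro:tree-log-height-weak-whp}), this equals $n\E{A_{\ui{\vect q}n}}-\Oh(n^\nu\log n)+o(1)$. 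Adding up, $\E{C_{n,\ui{\vect q}n}}=n\E{A_{\ui{\vect q}n}}\pm\Oh(n^\nu\log n+n^{1-\tilde\epsilon})$, which is $n\E{A_{\ui{\vect q}n}}\pm\Oh(n^{1-\varepsilon})$ whenever $\nu<1-\varepsilon$ and $\varepsilon<\tilde\epsilon$; combined with the constraint $\nu>1-\tilde\epsilon$ needed above, such a $\nu$ exists exactly when $\varepsilon<\tilde\epsilon$, which is the stated range.

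The hard part is not any individual estimate but the tightrope walk in choosing $\nu$: it must be large enough that $\Prob{D}$ beats every polynomial (so that the pessimistic worst-case bound $\Oh(u_n n)=\Oh(n^{2-\tilde\epsilon})$ for degenerate inputs is negligible) and simultaneously small enough that the unavoidable $\Oh(n^\nu\log n)$ cost of searching the ``warm-up'' prefix $U_1,\dots,U_{n_T}$ stays below $n^{1-\varepsilon}$. Reconciling these two demands is precisely what forces $\varepsilon<\tilde\epsilon$ and is the point where the many-duplicates hypothesis genuinely enters. A secondary subtlety lies in the first step: one must verify that non-degeneracy forces the \emph{truly saturated} tree — not just a tree whose shape has happened to stabilise — so that $\E{A_{\ui{\vect q}n}}$ from \wref{sec:saturated-trees} is literally the quantity appearing in the statement.
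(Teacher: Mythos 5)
Your proof is correct and follows essentially the same route as the paper: choose $\nu\in(1-\tilde\epsilon,1-\epsilon)$ so that $n_T=\lceil n^\nu\rceil$ insertions saturate the tree with overwhelming probability, invoke \wref{cor:quicksort-costs-search-costs}, exploit independence of $\vect\Gamma\indicator{\bar D}$ from the profile of the tail of the input, and absorb the $\Oh(n^\nu\log n+u_nk)$ slack into $\Oh(n^{1-\epsilon})$. The only cosmetic difference is in the decomposition: you keep the split $\vect X=\vect X^{(1)}+\vect X^{(2)}$ with $\vect X^{(2)}\eqdist\multinomial(n-n_T,\ui{\vect q}n)$ and recover $n\E{A_{\ui{\vect q}n}}$ by adding back the $n_T\E{A_{\ui{\vect q}n}}=\Oh(n^\nu\log n)$ defect, whereas the paper replaces $\vect{\tilde X}$ by a full $\multinomial(n,\ui{\vect q}n)$ variable $\vect{\hat X}$ at the cost of an extra $n_T\|\vect\Gamma\|_\infty$ term; both introduce identically sized errors. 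Your more explicit argument that non-degeneracy forces genuine saturation (tracking that the $k-1$ undropped copies of $v$ fill a leaf exactly) is a welcome elaboration of a step the paper only asserts, and your separate treatment of the at-most-one value with $\ui{q_v}n>\tfrac12$ is a valid alternative to the paper's use of stochastic monotonicity of $\Prob{\binomial(n_T,p)<k}$ in $p$.
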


\ifproceedings{}{
	Recall that \(\E{A_{\ui{\vect q}n}}\) is the expected search cost in
	a \emph{saturated} $k$-fringe-balanced tree built from $\distFromWeights{\ui{\vect q}n}$.
	It depends only on the universe distribution \(\ui{\vect q}n\) (and \(k\)), 
	but \emph{not} (directly) on \(n\).
	We have therefore separated the influence of $n$ and $\vect q$ (for inputs with many duplicates),
	and can investigate $\E{A_{\vect q}}$ in isolation in the next section.
	The remainder of this section is devoted to the proof of the separation theorem.
}

\ifproceedings{\begin{proof}}{\begin{proof}[\wref{thm:separating-n-and-q}]}
By \wref{cor:quicksort-costs-search-costs} we can study search costs 
in fringe-balanced trees;
the main challenge is that this tree is built form the same
input that is used for searching.
In other words, $\vect\Gamma$ and $\vect X$ are not independent;
for non-degenerate inputs, they are however almost so.

We start noting the following basic fact that we will use many times:
	In any discrete \iid model, \(u_n \le \frac1{\mu_n}\) for all \(n\).
	In particular, an \iid model with many duplicates has
	\(u_n = \Oh(n^{1-\varepsilon})\).
\ifproceedings{}{%
	This is easy to see:
	Since $\mu_n$ is the smallest entry of $\ui{\vect q}n$ we have 
	$1=\total{\ui{\vect q}n} \ge u_n \mu_n$, so $u_n \le 1/\mu_n$.
	The second part follows directly from \wref{def:equals-expected-profile-model-many-duplicates}.
}

\paragraph{Probability of degenerate profiles}

\ifproceedings{%
	Since no element of the universe is very unlikely,
	we can bound the probability of degenerate inputs.
}{%
	We can bound the probability of degenerate inputs
	using Chernoff bounds.
	An elementary far-end lower-tail bound (\wref{lem:binomial-not-very-small-whp}) 
	actually yields the following slightly stronger asymptotic result.
}

\begin{lemma}[Non-Degenerate \whp]
\label{lem:equals-bound-for-degenerate-profile-probability}
	Assume an \iid model with 
	\(\mu_n = \min_v \ui{q_v}{n} = \Omega(n^{-\rho})\) for \(\rho\in[0,1)\)
	and let \(k\in\N\) and \(\rho<\nu<1\).
	Then the probability of an input of size \(n\) to be
	\((\nu,k)\)-profile-degenerate is in \(o(n^{-c})\) as $n\to\infty$ 
	for any constant \(c\).
\end{lemma}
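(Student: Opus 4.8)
The statement is a classical union-bound-plus-tail-estimate argument, and the plan is to reduce it to the far-end left-tail bound of \wref{lem:binomial-not-very-small-whp}. Fix an arbitrary constant \(c\); we must show that the probability of being \((\nu,k)\)-profile-degenerate is \(o(n^{-c})\). Recall (\wref{def:equals-degenerate-profile}) that this happens exactly when, among the first \(n_T=\lceil n^\nu\rceil\) independent draws \(U_1,\dots,U_{n_T}\), some universe value \(v\) occurs at most \(k-1\) times; the number of occurrences of a fixed \(v\) is \(\binomial(n_T,\ui{q_v}{n})\). A union bound over the \(u_n\) values, followed by stochastic monotonicity of the binomial in its success probability (using \(\ui{q_v}{n}\ge\mu_n\) for every \(v\)) and the elementary fact \(u_n\le 1/\mu_n\) (since \(1=\total{\ui{\vect q}{n}}\ge u_n\mu_n\), already noted in the proof of \wref{thm:separating-n-and-q}), gives
\begin{align*}
		\Prob[\big]{\vect U \text{ is profile-degenerate}}
	&\wwrel\le
		\sum_{v=1}^{u_n} \Prob[\big]{ \binomial(n_T, \ui{q_v}{n}) \le k-1 }
	\\	&\wwrel\le
		\frac1{\mu_n} \cdot \Prob[\big]{ \binomial(n_T, \mu_n) \le k-1 } .
\end{align*}

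\textbf{Bounding the surviving binomial probability.} If \(u_n=1\) then \(\mu_n=1\), the count is \(n_T\ge k\) for all large \(n\), and the probability is \(0\); so I may assume \(u_n\ge 2\), whence \(\mu_n\le 1/u_n\le 1/2\) is bounded away from \(1\). I would then apply \wref{lem:binomial-not-very-small-whp} with \(n_T\) in the role of the index and \(\mu_n\) in the role of \(p^{(n)}\) (since the lemma's proof is a direct computation, it applies verbatim to any divergent number of trials with the stated properties). Its remaining hypothesis, \(\mu_n=\omega\bigl(\tfrac{\log n_T}{n_T}\bigr)\), is precisely where the slack \(\nu>\rho\) enters: \(\mu_n=\Omega(n^{-\rho})\) while \(\tfrac{\log n_T}{n_T}=O(n^{-\nu}\log n)\), and \(n^{-\rho}\big/(n^{-\nu}\log n)=n^{\nu-\rho}/\log n\to\infty\). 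The lemma then yields \(\Prob{\binomial(n_T,\mu_n)\le k-1}=o(n_T^{-c'})\) for every constant \(c'\), hence \(o(n^{-\nu c'})\) because \(n_T\ge n^\nu\). Combining with the display, \(\Prob{\vect U \text{ is profile-degenerate}} = O(n^\rho)\cdot o(n^{-\nu c'}) = o(n^{\rho-\nu c'})\) for every \(c'\); choosing \(c'=(c+\rho)/\nu\) gives the claimed \(o(n^{-c})\), and since \(c\) was arbitrary this proves the lemma.

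\textbf{What needs care.} There is no genuine obstacle here; the only thing to watch is the accounting in the last step. The union bound costs a polynomial factor \(u_n\le 1/\mu_n=O(n^\rho)\), and the per-value failure probability must be small enough to absorb it. This is exactly what the strict inequality \(\nu>\rho\) buys: it makes \(\mu_n n_T=\Omega(n^{\nu-\rho})\) grow polynomially in \(n\), so the per-value probability is of order \(\exp(-\Omega(n^{\nu-\rho}))\), which is super-polynomially small and dominates \(O(n^\rho)\) by an arbitrary polynomial margin. (A mild bookkeeping point, should one want to invoke \wref{lem:binomial-not-very-small-whp} in its literal form, is the reindexing from \(n\) to the — non-injective — sequence \(n\mapsto n_T=\lceil n^\nu\rceil\); this is harmless, since one may instead work with \(\psi_m:=\inf\{\mu_n:\lceil n^\nu\rceil=m\}\ge\Omega(m^{-\rho/\nu})\), which still satisfies the lemma's hypotheses as \(\rho/\nu<1\).) Everything else is the routine computation already carried out in the proof of \wref{lem:binomial-not-very-small-whp}.
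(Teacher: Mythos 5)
Your proposal is correct and follows essentially the same route as the paper: a union bound over universe values, stochastic monotonicity to reduce to a single $\binomial(n_T,\mu_n)$ tail, and then the far-end left-tail bound of \wref{lem:binomial-not-very-small-whp}, whose hypothesis $\mu_n = \omega(\log n_T/n_T)$ is exactly where $\nu>\rho$ enters. The only cosmetic difference is the bookkeeping: you bound the union-bound factor by $u_n \le 1/\mu_n = O(n^\rho)$ and compensate with $c' = (c+\rho)/\nu$, while the paper uses the coarser $u_n = o(n)$ and $c' = (c+1)/\nu$; both give the claim, and your aside about reindexing via $n_T$ is a careful but harmless point the paper also leaves implicit.
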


\ifproceedings{
	The lemma follows from a standard application of the union bound and
	tail inequalities; in this case \wref{lem:binomial-not-very-small-whp}
	yields a stronger results than classical Chernoff bounds.
	The detailed proof is in the extended version.
}{
	\begin{proof}
	Let $\rho\in[0,1)$, $k$ and $\nu\in (\rho,1)$ be given.
	Set $\varepsilon = \nu - \rho > 0$ and
	denote by $\vect Y = \ui{\vect Y}n$ the 
	profile of the first $n_T=\lceil n^{\nu}\rceil$ elements of the input. 
	Clearly \(\ui{\vect Y}n \eqdist \multinomial(n_T;\ui{\vect q}n)\).
	Assume \withoutlossofgenerality that the minimal probability is always $\ui{q_1}n = \mu_n$.
	A standard application of the \textsl{union bound} yields
	\begin{align*}
			\Prob[\big]{\neg \ui{\vect Y}n \ge k}
		&\wwrel=
			\Prob[\Bigg]{\bigvee_{v=1}^{u_n} \ui{Y_v}n < k}
	\\	&\wwrel\le
			\sum_{v=1}^{u_n}\Prob[\big]{\ui{Y_v}n < k}
	\\	&\wwrel\le
			u_n \cdot \Prob{\ui{Y_1}n < k} .
	\numberthis\label{eq:prob-nondegenerate-le-u-times-more-than-k-ones}
	\end{align*}
	Now $\ui{Y_1}n \eqdist \binomial(n_T,\mu_n)$
	with $\mu_n = \Omega(n^{-\rho}) = \Omega(n_T^{-\rho/\nu})=\omega\bigl(\frac{\log n_T}{n_T}\bigr)$,
	and we always have $\mu_n \le \frac12 < 1$,
	so we can apply \wref{lem:binomial-not-very-small-whp}: 
	for any given constant~$c$, we have
	$\Prob{\ui{Y_1}n < k} = o\bigl(n_T^{-(c+1)/\nu}\bigr)$.
	Since $u_n = o(\frac n{\log n}) = o(n)$ we find that
	\begin{align*}
			n^c \cdot \Prob[\big]{\neg \ui{\vect Y}n \ge k}
		&\wwrel{\relwithref{eq:prob-nondegenerate-le-u-times-more-than-k-ones}\le}
			n^c \, u_n \, \Prob{\ui{Y_1}n < k}
	\\	&\wwrel=
			o\bigl(n^{c+1}\bigr) \cdot o\Bigl(n_T^{-\frac{c+1}\nu}\Bigr)
	\\	&\wwrel=
			o(1),
	\end{align*}
	since $n_T\sim n^\nu$.
	So the input is $(\nu,k)$-degenerate with high probability.
	\end{proof}
}

In \wref{thm:separating-n-and-q}, we assume an \iid model with many duplicates, \ie,
\(\mu_n = \Omega(n^{-1+\tilde\epsilon})\) with $\tilde\epsilon\in(0,1]$,
and an $\epsilon \in (0,\tilde\epsilon)$ is given.
We set
\begin{align*}
\ifproceedings{\let\wwrel\rel}{}
		\nu 
	\wwrel\ce 
		\frac{(1-\tilde\epsilon) + (1-\epsilon)}2
	\wwrel=
		1 - \frac{\tilde\epsilon + \epsilon}2
	\wwrel\in 
		(1-\tilde\epsilon,1-\epsilon).
\end{align*}
Then, by \wref{lem:equals-bound-for-degenerate-profile-probability}, an input 
of size $n$ is $(\nu,k)$-degenerate with probability in $o(n^{-c})$ for all~$c$.
This also means that the overall cost contribution of degenerate inputs to expected costs is in 
$o(n^{-c})$ for all $c$,
and hence covered by the error term in \weqref{eq:equals-expected-profile-stateless-costs-expectation},
since costs for any input are at most quadratic in $n$.

We will thus, for the remainder of this proof, assume that the input is not 
$(\nu,k)$-degenerate,
\ie, each of the values of the universe appears at least $k$ times among the first
$n_T = \lceil n^\nu \rceil$ elements.

\paragraph{Independence of Profiles and Trees}
\ifproceedings{}{
	We now turn to the distribution of the recursion trees.
}%
The shape of the recursion tree is determined by at most $u\cdot k$ elements:
we have at most $u$ partitioning rounds
since each of the $u$ elements of the universe becomes a pivot in at most one partitioning step,
and each partitioning step inspects $k$ elements for choosing its pivot.

Also, for each of the $u$ values in the universe, at most the first $k$ occurrences
in the input, reading from left to right, can influence the tree:
if a value $v\in[u]$ is already contained in an inner node, 
all further duplicates of $v$ are ignored.
Otherwise, all occurrences of $v$ must appear in a single leaf, which can hold up to $k-1$
values, so there are never more than $k-1$ copies of $v$ in the tree.
The leaf will overflow at the latest upon inserting the $k$th occurrence of $v$, 
and then a new internal node with pivot $v$ is created.

In a non-degenerate input, the first $k$ duplicates appear among the first $n_T$ 
elements $U_1,\ldots,U_{n_T}$ of the input, so all pivots are chosen based on these elements only.
Moreover, after these $n_T$ insertions, all $u$ values appear as labels of inner nodes.
\ifproceedings{}{
	All leaves are empty then and remain so for good:
	the recursion tree has reached a saturated state.
}

We denote by $\vect\Gamma = \vect\Gamma(\ui{\vect q}n)$ the node-depths vector 
for the final recursion tree and 
by $\vect{\tilde X}$ the profile of $U_{n_T+1},\ldots,U_n$.
Since they are derived from disjoint ranges of the \iid input,
$\vect\Gamma$ and $\vect{\tilde X}$ are stochastically independent.

\paragraph{Overall Result}
\ifproceedings{}{%
	We now have all ingredients to compute the overall costs of Quicksort.
}%
Recall that $u_n = \Oh(n^{1-\tilde\epsilon})$ and $n_T\sim n^\nu$ 
with $1-\tilde\epsilon < \nu < 1-\epsilon$.
\ifproceedings{
	By \wref{pro:tree-log-height-weak-whp}, our trees have 
	height $>13\ln n$ with probability in $\Oh(n^{-2})$, \ie,
	for most inputs, $\Gamma_v = \Oh(\log n)$ for all $v$.
}{
	Since a recursion tree cannot have a path longer than $u$,
	we always have $\vect\Gamma \le c u$ for a fixed constant $c$ that 
	depends only on the cost measure, \ie, $\Gamma_v = \Oh(n^{1-\tilde\epsilon})$ for all $v\in[u]$.
	However, this estimate is very pessimistic; we
	know that randomly grown trees have logarithmic height \whp
	(\wref{sec:preliminaries-log-height-whp} resp.\ \wref{app:height-of-recursion-trees}),
	so $\Gamma_v = \Oh(\log n)$ for all $v$ with high probability.
	To be concrete, the height is $>13\ln n$ with probability in $\Oh(n^{-2})$
	by \wref{pro:tree-log-height-weak-whp}.
}

We therefore further split the set of non-profile-degenerate inputs into 
\emph{``height-degenerate''} ones
where the height 
\ifproceedings{}{
	of the resulting recursion tree
}%
is $> 13\ln n$ and all other ones.
This gives the following stochastic representation conditional on the input
being not $(\nu,k)$-profile-degenerate.
\begin{align*}
		C_{n,\ui{\vect q}n}
	&\wwrel=
		\vect\Gamma^T \vect{X} \wbin\pm \Oh(u)
\\	&\wwrel=
		\vect\Gamma^T \vect{\tilde X} \wwbin\pm
		n_T \|\vect\Gamma\|_{\infty}
		\wbin\pm\Oh(u)
\\	&\wwrel\eqdist
		\vect\Gamma^T \vect{\hat X} \wwbin\pm 
		2 n_T \|\vect\Gamma\|_{\infty}
		\wbin\pm\Oh(u)
\\	&\wwrel=
		\vect\Gamma^T \vect{\hat X} \wwbin\pm 
		\Oh\Bigl(
			n_T\bigl(
				\indicator{\text{not height-deg.}}\cdot \log n
\ifproceedings{\\*&\wwrel\ppe\quad{}}{}
				+ 
				\indicator{\text{height-deg.}} \cdot u
			\bigr)
		\Bigr)
\\	&\wwrel=
		\vect\Gamma^T \vect{\hat X} \wwbin\pm 
		\Oh\Bigl(
			n^\nu \bigl(
				\indicator{\text{not height-deg.}}\cdot \log n
\ifproceedings{\\*&\wwrel\ppe\quad{}}{}
				+ 
				\indicator{\text{height-deg.}} \cdot n^{1-\tilde\epsilon}
			\bigr)
		\Bigr)
,
\end{align*}
where $\vect{\hat X}$ is independent of $\vect \Gamma$ and 
$\vect{\hat X} \eqdist \multinomial(n,\ui{\vect q}n)$.
\ifproceedings{\par}{}
We thus find that
\begin{align*}
\begin{multlined}
		C_{n,\ui{\vect q}n}
	\wwrel\eqdist
		\vect\Gamma^T \vect{\hat X} \wwbin\pm 
		\Oh(n^{1-\epsilon}),
\ifproceedings{\\}
		\qquad(\text{input neither profile- nor height-degenerate}).
\end{multlined}
\numberthis\label{eq:equals-stochastic-representation-non-degenerate}
\end{align*}
Taking expectations over all non-degenerate inputs in 
\weqref{eq:equals-stochastic-representation-non-degenerate},
exploiting independence,
and inserting $\Prob{\text{height-deg.}} = \Oh(1/n^2)$
(\wref{pro:tree-log-height-weak-whp})
yields
\begin{align*}
		\E{C_{n,\ui{\vect q}n}}
	&\wwrel=
		\E[\big]{\vect\Gamma^T \vect{\hat X}} \wwbin\pm 
		\Oh\bigl( n^{1-\epsilon} \bigr)
\\	&\wwrel=
		\E{\vect \Gamma}^T \cdot \E{\vect{\hat X}} \wwbin\pm 
		\Oh\bigl( n^{1-\epsilon} \bigr)
\\	&\wwrel=
		\underbrace{\bigl(\E{\vect \Gamma}^T \cdot \ui{\vect q}n \bigr) }_{\E{A_{\ui{\vect q}n}}} 
		\cdot\, n \wwbin\pm \Oh(n^{1-\varepsilon}),
\numberthis\label{eq:equals-Cnq-linear-in-n}
\end{align*}
with $A_{\ui{\vect q}n}$ as given in \wpeqref{eq:equals-A-q-recurrence}.
As argued above, 
the contribution of profile-degenerate inputs is in $o(n^{-c})$ 
for any $c$ and thus covered by $\Oh(n^{1-\varepsilon})$,
so \weqref{eq:equals-Cnq-linear-in-n} holds also for the unconditional expectation.
This concludes the proof of \wref{thm:separating-n-and-q}.
\end{proof}

\ifproceedings{}{\needspace{10\baselineskip}}

\section{Expected Search Costs in Saturated Trees}
\label{sec:expected-node-depth-asymptotic}

The remaining step of the analysis consists in computing $\E{A_{\vect q}}$,
the expected search costs in saturated $k$-fringe-balanced trees
built from $\iid$ $\distFromWeights{\vect q}$ elements.
\ifproceedings{}{%
	Previous work only covers the unbalanced BST case ($k=1$), where
	the result is known \emph{exactly}:
	$\E{A_{\vect q}} = 2 \QSentropy(\vect q) + 1 \le 2\entropy[\ln](\vect q) + 1$ 
	where $\QSentropy(\vect q) = \sum_{1\le i<j\le u} q_i q_j / (q_i +\cdots+q_j)$%
	~\citep[Theorems~3.1 and~3.4]{AllenMunro1978}.
}

Since the recurrence equation for the expected search costs 
(\wpeqref{eq:equals-A-q-recurrence}) seems hard to solve,
we try to relate it to a quantity that we already know: 
the \emph{entropy of the universe distribution.}
The entropy function satisfies an \textsl{aggregation property:}
intuitively speaking, we do not change the entropy of the final outcomes
if we \emph{delay} some decisions in a random experiment
by first deciding among whole groups of outcomes and then continuing within the chosen group.
\ifproceedings{}{%
	Indeed, this property was Shannon's third fundamental requirement when he introduced 
	his entropy function in 1948~\citep[p.\,393]{Shannon1948}.
}

The aggregation property can nicely be formalized in terms of trees, 
see Lemma~6.2.2E of \citet[p.\,444]{Knuth1998},
and we can use it in a search tree to express the entropy of 
node access probabilities as the entropy of the split corresponding to the root
plus the entropies of its subtrees, weighted by their respective total probabilities.
More specifically in our setting, we have 
$\vect q\in[0,1]^u$ with $\total{\vect q} = 1$ and 
we condition on the value $P\in[u]$ in the root.
Using the notation from \wref{sec:stochastic-model}
\ifproceedings{}{
	\parenthesisclause{%
	the zoomed-in distributions of the subtrees $\vect Z$, 
	the probability to go into these subtrees $\vect V$, 
	and the probability to access the root $H$%
	}
}%
we find that
\begin{align*}
\numberthis\label{eq:entropy-aggregation-P}
		\entropy(\vect q)
	&\wwrel=
		\entropy(V_1, H, V_2)
		\bin+ \sum_{r=1}^2 V_r \entropy(\vect{Z_r}) \,.
\end{align*}
(By linearity, $\entropy$ can be \wrt any base; we will use it with $\entropy[\ln]$ below.)

If we likewise condition on the root value $P$ in the recursive description of the search costs,
we find for the expected search costs
that
\begin{align*}
\numberthis\label{eq:equals-E-A-q-recurrence}
		\E{A_{\vect q}}
	&\wwrel=
		1 \bin+ \sum_{r=1}^2 V_r \E{A_{\vect{Z_r}}} \,.
\end{align*}
$\entropy(\vect q)$ and $\E{A_{\vect q}}$ fulfill the \emph{same form} of recurrence, 
only with a different toll function: $1$ instead of $\entropy(V_1, H, V_2)$!
We thus try to relate these two toll functions by obtaining bounds on $\entropy(V_1, H, V_2)$,
and then extend this relation inductively to $\entropy(\vect q)$ and $\E{A_{\vect q}}$.
The technical difficulties in doing so are that 
	$\entropy(V_1, H, V_2)$ is very sensitive to $\vect q$, 
	so we have to do a case distinction to obtain bounds on it.
	We hence cannot give matching upper and lower bounds for $\entropy(V_1, H, V_2)$,
	which necessitates the introduction of second order terms to account for the slack
	(cf.\ the constant $d$ below).
Separately deriving upper and lower bounds on $\E{A_{\vect q}}$ from that
in terms of $\entropy(\vect q)$, and making them match asymptotically 
in the leading term, we obtain the following result.

\begin{theorem}[Expected Search Costs]
\label{thm:A-q-entropy-bounds}
\ifproceedings{\mbox{}\\}{}
	Let a sequence of universe distributions $(\ui{\vect q}n)_{n\in\N}$
	be given for which $\mathcal H_n \ce \entropy[\ld](\ui{\vect q}n) \to \infty$ as $n\to\infty$.
	The expected search costs of a saturated $k$-fringe-balanced tree (with $k=2t+1$)
	built from \iid $\distFromWeights{ \ui{\vect q}n }$ keys is given by
	\begin{align*}
	\begin{multlined}
			\E{A_{\ui{\vect q}n}}
		\wwrel=
			\alpha_k \entropy[\ld](\ui{\vect q}n)
			\bin\pm \Oh\Bigl(
				\mathcal H_n{}\bigr.^{\frac{t+2}{t+3}}
				\log\bigl( \mathcal H_n \bigr)
			\Bigr),
	\ifproceedings{\\*[-.5ex]}{}
			\qquad (n\to\infty).
	\end{multlined}
	\end{align*}
\end{theorem}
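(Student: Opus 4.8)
The plan is a two-sided induction on the universe size $u$, built on the observation that $\entropy[\ln](\vect q)$ and $\E{A_{\vect q}}$ satisfy the \emph{same} linear recurrence, differing only in the toll function. Write $\ell\ce\entropy[\ln]$ and $\beta_k\ce\harm{k+1}-\harm{t+1}$, so that $\alpha_k=\ln(2)/\beta_k$ (recall $(k+1)/2=t+1$). Taking the expectation over the random root value $P$ in the aggregation identity \weqref{eq:entropy-aggregation-P} and comparing it with \weqref{eq:equals-E-A-q-recurrence}, both $\ell(\vect q)$ and $\E{A_{\vect q}}$ are of the form $\Phi(\vect q)=\tau(\vect q)+\mathbb{E}_{P}\bigl[V_1\Phi(\vect{Z_1})+V_2\Phi(\vect{Z_2})\bigr]$, with toll $\tau_\ell(\vect q)=T(\vect q)\ce\mathbb{E}_{P}\bigl[\ell(V_1,H,V_2)\bigr]$ for the entropy and the constant toll $\tau_A\equiv 1$ for the search cost. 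Consequently the deficiency $\Delta(\vect q)\ce\E{A_{\vect q}}-\tfrac1{\beta_k}\ell(\vect q)$ obeys
\[
	\Delta(\vect q) \wwrel= \frac{\beta_k-T(\vect q)}{\beta_k} \bin+ \mathbb{E}_{P}\bigl[ V_1\Delta(\vect{Z_1}) + V_2\Delta(\vect{Z_2}) \bigr],
\]
so the whole theorem reduces to bounding this accumulated ``toll mismatch''. Concretely I would prove by strong induction on $u$ that $|\Delta(\vect q)|\le R(\ell(\vect q))$ for the remainder function $R(x)=d\bigl((x+e)^{(t+2)/(t+3)}\log(x+e)+1\bigr)$ and a sufficiently large constant $d=d_k$, establishing the upper and lower bounds on $\E{A_{\vect q}}$ separately; the base cases $u\in\{0,1\}$ are immediate since then $\ell(\vect q)=0$ and $\E{A_{\vect q}}\in\{0,1\}$.

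The technical core is an estimate of $|T(\vect q)-\beta_k|$, which I would obtain in three moves. First, the aggregation identity together with the fact that $x\mapsto x(1-\ln x)$ is increasing on $(0,1]$ gives the sandwich $\ell(V_1,1-V_1)\le\ell(V_1,H,V_2)\le\ell(V_1,1-V_1)+H\ln(1/H)+H$ (and symmetrically in $V_2$). Second, under the inversion coupling of \weqref{eq:dist-P} one has $V_1<\Pi\le V_1+H$ for $\Pi\eqdist\betadist(t+1,t+1)$, so Hölder-continuity of the binary entropy (\wref{lem:entropy-function-hölder}) yields $|\ell(V_1,1-V_1)-\ell(\Pi,1-\Pi)|\le C_h H^h$ for any $h\in(0,1)$. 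Third, $\E{\ell(\Pi,1-\Pi)}=\beta_k$ \emph{exactly}, by \wref{lem:expected-entropy}. Combining these, $|T(\vect q)-\beta_k|\le C_h\,\mathbb{E}_{P}[H^h]+\mathbb{E}_{P}[H\ln(1/H)]+\mathbb{E}_{P}[H]$; and since the density of the $\betadist(t+1,t+1)$-median is bounded by a constant $M_k$, we have $\Prob{P=v}\le M_k q_v$, so each of these three terms is at most a constant times a power sum $\sum_v q_v^{1+h}$, $\sum_v q_v^2\ln(1/q_v)$, $\sum_v q_v^2$. The upshot is a toll-slack bound $|T(\vect q)-\beta_k|\le S(\vect q)$ that is small when $\vect q$ carries no large atom but only $\Oh(1)$ when it does.

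For the inductive step I would insert the hypothesis $\E{A_{\vect{Z_r}}}\le\tfrac1{\beta_k}\ell(\vect{Z_r})+R(\ell(\vect{Z_r}))$ into the recurrence and use $\mathbb{E}_{P}[V_1\ell(\vect{Z_1})+V_2\ell(\vect{Z_2})]=\ell(\vect q)-T(\vect q)$ (again \weqref{eq:entropy-aggregation-P}) to reduce the claim to the inequality
\[
	\frac1{\beta_k}\bigl(\beta_k-T(\vect q)\bigr) \bin+ \mathbb{E}_{P}\bigl[V_1 R(\ell(\vect{Z_1})) + V_2 R(\ell(\vect{Z_2}))\bigr] \wwrel\le R(\ell(\vect q)),
\]
together with the mirror inequality for the lower bound. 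Since $R$ is increasing and, for large argument, concave (for subtrees of small entropy $R$ is $\Oh(1)$ and, being weighted by $V_r\le1$, contributes negligibly), Jensen applied to the pair $(V_1,V_2)$, whose masses sum to $1-H$, bounds the middle term by roughly $R(\ell(\vect q)-T(\vect q))+H\cdot R(\Oh(1))$, so the slack $R(\ell(\vect q))-R(\ell(\vect q)-T(\vect q))\approx\beta_k R'(\ell(\vect q))$ must absorb $S(\vect q)/\beta_k+H\cdot R(\Oh(1))$. This forces a case distinction on the largest atom $\gamma_{\vect q}\ce\max_v q_v$: when $\gamma_{\vect q}$ is below a threshold $\gamma=\gamma(\ell(\vect q))$, the slack $S(\vect q)$ is $\Oh(\gamma\ln(1/\gamma))$ and $H$ is small, while $R'(\ell(\vect q))=\Theta\bigl(\ell(\vect q)^{(t+2)/(t+3)-1}\log\ell(\vect q)\bigr)$ is large enough; when $\gamma_{\vect q}$ exceeds the threshold a constant fraction of probability terminates at this pivot, which makes the affected subtrees carry only mass $1-\gamma_{\vect q}$, and unrolling one checks that the reach-probabilities of all such ``heavy-atom'' nodes sum to only $\Oh(1/\gamma)$ — using here the logarithmic height / absence of very unbalanced splits of fringe-balanced trees (\wref{pro:tree-log-height-weak-whp}), with the low-probability bad event covered by the quadratic worst-case bound. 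Balancing $\gamma\ln(1/\gamma)\,\ell(\vect q)$ against $1/\gamma$ — the exact trade-off, including how $M_k$ and the admissible ``unbalancedness'' grow with $k$, is what pins down the exponent $\tfrac{t+2}{t+3}$ — closes the induction, and translating back to base-$2$ entropy (using $\ell=\ln(2)\,\entropy[\ld]$ and $\ell(\ui{\vect q}n)=\Theta(\mathcal H_n)$) gives the stated form.

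The hard part will be the last paragraph. Because the per-node toll mismatch $|\beta_k-T(\vect Z^{(v)})|$ is genuinely $\Theta(1)$ whenever a pivot value has constant relative mass, it cannot be absorbed as a vanishing error per node; one must instead show that the reach-probabilities of the heavy-atom nodes sum to $\Oh(1/\gamma)$, and this needs a \emph{pathwise} (not merely height-based) quantitative form of the non-degeneracy of fringe-balanced trees, plus a choice of $\gamma$ and of the Hölder exponent $h$ that makes the two competing error contributions meet at exactly the claimed exponent. A secondary nuisance is making the remainder function $R$ (with its second-order ``$+d$'' term) robust enough that the induction also closes in the small-entropy regime, where $R$ is not concave and the leading term carries no slack.
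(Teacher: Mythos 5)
Your high-level diagnosis is the same as the paper's: $\entropy[\ln](\vect q)$ and $\E{A_{\vect q}}$ satisfy recurrences of identical shape (via the aggregation identity \weqref{eq:entropy-aggregation-P} and \weqref{eq:equals-E-A-q-recurrence}), differing only in the toll, and the whole problem is to control the accumulated toll mismatch. But the execution diverges, and the divergence is precisely where your proposal has a genuine gap.

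You try to close the induction directly with a \emph{sublinear} remainder $R(x)=d\bigl((x+e)^{(t+2)/(t+3)}\log(x+e)+1\bigr)$. The difficulty you flag yourself is real and, I believe, fatal to the plan as stated. When some $q_v$ is a constant, the per-node toll mismatch $\beta_k-T(\vect q)$ is $\Theta(1)$, yet the middle term $\E_P\bigl[\Sigma_r V_r R(\ell(\vect{Z_r}))\bigr]$ cannot be bounded by $(1-\E{H})R(\ell(\vect q))$ via Jensen: for a concave sublinear $R$, Jensen gives $(1-\E{H})\,R\bigl((\ell(\vect q)-T(\vect q))/(1-\E{H})\bigr)$, whose argument is \emph{inflated} by the factor $1/(1-\E{H})$, so the factor $1-\E{H}$ in front does not produce the $\Theta(1)$ slack you need. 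Your proposed escape — a pathwise argument that the reach-probabilities of heavy-atom nodes sum to $\Oh(1/\gamma)$ — is no longer an induction on $u$ but a global unrolling over the tree, and you correctly observe it would need a quantitative pathwise non-degeneracy statement that neither \wref{pro:tree-log-height-weak-whp} nor anything else in the paper supplies. That missing lemma is the gap.

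The paper's proof avoids this entirely by replacing your single sublinear ansatz with an $\epsilon$-indexed \emph{family of affine} bounds: for each $\epsilon\in(0,1)$ it proves by the same induction on $u$ (with the same case split on $\lambda=\max_v q_v$) that $\E{A_{\vect q}}\le c_\epsilon\entropy[\ln](\vect q)+d_\epsilon$ and $\E{A_{\vect q}}\ge c'_\epsilon\entropy[\ln](\vect q)-d'_\epsilon$, with explicit $c_\epsilon=1/\tilde H\pm\Oh(\epsilon\log(1/\epsilon))$ and $d_\epsilon=\Oh(\epsilon^{-t-2})$. The key point is that the affine ansatz makes the heavy-atom case \emph{trivial}: the constant offset $d$ is inherited with weight $\E{\total{\vect V}}=1-\E{H}$, so the defect to absorb is exactly $1-d\,\E{H}$ (after dropping the nonnegative $c\cdot\E{\entropy[\ln](\vect V,H)}$ term), and showing $\E{H}\ge 1/d$ closes it — no pathwise accounting across nodes is needed. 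The exponent $\tfrac{t+2}{t+3}$ and the $\log$ factor then come out of a one-shot optimization at the end, choosing $\epsilon_n=\mathcal H_n^{-1/(t+3)}$ to balance $\Oh(\epsilon\log(1/\epsilon))\cdot\mathcal H_n$ against $\Oh(\epsilon^{-t-2})$. Your toll-mismatch estimate via the inversion coupling $V_1<\Pi\le V_1+H$, Hölder continuity, and the exact identity $\E{\entropy[\ln](\Pi,1-\Pi)}=\beta_k$ (\wref{lem:expected-entropy}) is sound and close in spirit to what the paper does in the small-$\lambda$ case, but you should abandon the sublinear-ansatz induction in favor of the affine family plus after-the-fact optimization.
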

\ifproceedings{\vspace{-2ex}}{}

\ifproceedings{}{\needspace{5\baselineskip}}
\label{sec:proof-entropy-bounds}

\begin{proof}%
We start with the upper bound.
We actually derive a whole class of upper bounds characterized by a parameter $\epsilon$.

\begin{lemma}[Upper Bound]
\label{lem:equals-entropy-upper-bound-eps}
	Let $\E{A_{\vect q}}$ satisfy \weqref{eq:equals-E-A-q-recurrence},
	and let $\epsilon\in(0,1)$ be given. Define
	\begin{align*}
			c \wrel= c_\epsilon
		&\wwrel=
			\frac1{\tilde H
			\wbin- 4\epsilon \tilde h}
			\,,
	\\
			d \wrel= d_\epsilon
		&\wwrel=
			\frac {(t+1) \, \BetaFun(t+1,t+1) }
				{ \epsilon^{t+2}(1-\epsilon)^{t} } \,,
	\\ 	\text{where}\quad
			\tilde H
		&\wwrel=
			\harm{k+1} - \harm{t+1} 
	\\ 	\text{and}\quad
			\tilde h
		&\wwrel=
			\harm k - \harm{t} \;.
	\end{align*}
	If $c\ge 0$, we have that 
	$
			\E{A_{\vect q}}
		\wwrel\le
			c \cdot \entropy[\ln](\vect q) \bin+ d
	$
	for all stochastic vectors $\vect q$.
\end{lemma}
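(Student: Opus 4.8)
The plan is to prove the bound by strong induction on the universe size $u$, exploiting the fact that $\E{A_{\vect q}}$ and $\entropy[\ln](\vect q)$ obey the \emph{same shape} of recurrence, so that the inductive step collapses to a single scalar inequality between the two ``toll'' terms. For $u\le 1$ the claim is immediate: there $\E{A_{\vect q}}\le 1$ while $\entropy[\ln](\vect q)=0$, so it suffices that $d=d_\epsilon\ge 1$, which is easily checked from the definition for every $t\in\N_0$ and $\epsilon\in(0,1)$; an empty subtree contributes $\E{A_{()}}=0$ and entropy $0$ and is covered by $d\ge 0$ as well.

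For the inductive step, fix $\vect q\in(0,1)^u$ with $u\ge 2$ and condition on the root label $P$, so that $V_1,H,V_2$ and the zoomed-in subtree distributions $\vect{Z_1},\vect{Z_2}$ are determined (each having fewer than $u$ entries) and $\Pi-H\le V_1<\Pi$. By the induction hypothesis, $\E{A_{\vect{Z_r}}}\le c\,\entropy[\ln](\vect{Z_r})+d$ for $r\in\{1,2\}$. Inserting this into the recurrence \weqref{eq:equals-E-A-q-recurrence}, multiplying by $V_r\ge 0$, using $V_1+V_2=1-H$, rewriting $V_1\entropy[\ln](\vect{Z_1})+V_2\entropy[\ln](\vect{Z_2})=\entropy[\ln](\vect q)-\entropy[\ln](V_1,H,V_2)$ via the aggregation identity \weqref{eq:entropy-aggregation-P} (valid for each realization of $P$, here with base $e$), and finally averaging over $P$, the target inequality $\E{A_{\vect q}}\le c\,\entropy[\ln](\vect q)+d$ turns out to be \emph{equivalent} to the toll inequality
\[
	c\cdot\E{\entropy[\ln](V_1,H,V_2)} + d\cdot\E{H} \;\ge\; 1 ,
\]
where all expectations are over the random pivot $P$. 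Since $c=c_\epsilon\ge 0$ forces $\tilde H-4\epsilon\tilde h>0$, I may clear the denominator in $c_\epsilon$ and substitute $\tilde H=\E{\entropy[\ln](\Pi,1-\Pi)}$ from \wref{lem:expected-entropy}; the toll inequality then becomes
\[
	\E{\entropy[\ln](\Pi,1-\Pi)-\entropy[\ln](V_1,H,V_2)}
	\;\le\; 4\epsilon\tilde h + \bigl(\tilde H-4\epsilon\tilde h\bigr)\,d\cdot\E{H} .
\]

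Proving this last inequality is the main obstacle, and it is exactly what forces the case distinction. The difference $\entropy[\ln](\Pi,1-\Pi)-\entropy[\ln](V_1,H,V_2)$ can be large only in two regimes: when the sample median is extreme ($\Pi$, hence $V_1$, within $\epsilon$ of $0$ or $1$), or when the pivot hits a heavy value ($H=q_P$ not small). I would split the expectation according to whether $\Pi\in[\epsilon,1-\epsilon]$. On the central event I would use $\entropy[\ln](V_1,H,V_2)\ge\entropy[\ln](V_1,1-V_1)$ (aggregation again, dropping a nonnegative term), together with $\Pi-H\le V_1<\Pi$ and the concavity and monotonicity of the binary entropy $z\mapsto -z\ln z-(1-z)\ln(1-z)$, to bound the deficit against $\entropy[\ln](\Pi,1-\Pi)$ either by $\Oh(\epsilon)$ or by a quantity already compensated by the $-H\ln H$ term (which in that subregime is itself bounded below); integrating against the density of $\Pi$ yields the $4\epsilon\tilde h$ contribution, with $\tilde h=\harm k-\harm t$ capturing the steepest slope of the entropy near the ends of $[\epsilon,1-\epsilon]$. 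The complementary extreme event carries little mass, because the $\betadist(t+1,t+1)$ density vanishes to order $t$ at $0$ and $1$, and its whole contribution is to be dominated by $\bigl(\tilde H-4\epsilon\tilde h\bigr)\,d\cdot\E{H}$; this is precisely why $d_\epsilon$ is taken proportional to $\BetaFun(t+1,t+1)\big/\bigl(\epsilon^{t+2}(1-\epsilon)^{t}\bigr)$, so that $d_\epsilon\,\E{H}$ out-weighs the ratio of the endpoint $\Pi$-mass to $\E{H}$. The delicate point will be keeping all these estimates simultaneously compatible with the stated constants — in particular recovering the factor $4$ in front of $\tilde h$ and the exact exponents in $d_\epsilon$; everything else is routine bookkeeping.
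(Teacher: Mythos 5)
Your setup is correct and matches the paper's: induction on $u$, inserting the inductive hypothesis into the recurrence, the base-$e$ aggregation identity, and reducing the step to the toll inequality $c\,\E{\entropy[\ln](V_1,H,V_2)}+d\,\E{H}\ge1$; your rewritten ``deficit'' form is an equivalent reformulation.

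The gap is in the proposed proof of that toll inequality. You split the expectation on the \emph{realization} of the pivot fraction $\Pi$, but the two estimates you invoke do not cover their assigned regions. On the central event $\Pi\in[\epsilon,1-\epsilon]$ the deficit is \emph{not} $\Oh(\epsilon)$ in general: take $\vect q$ with one dominant atom $q_1 = 1-\gamma$. Then for every $\Pi$ in $[\epsilon,1-\epsilon]$ one has $P=1$, so $V_1=0$, $H=1-\gamma$, $V_2=\gamma$, and $\entropy[\ln](V_1,H,V_2)\to0$ while $\entropy[\ln](\Pi,1-\Pi)$ can be as large as $\ln 2$; the pointwise deficit is $\Theta(1)$, and the $-H\ln H$ term cannot compensate because it also vanishes as $H\to1$. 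The only thing available to absorb this is $d\,\E{H}$, which your decomposition reserves exclusively for the extreme event. Symmetrically, when $\vect q$ has no heavy atom (e.g.\ uniform on a large universe), $\E{H}$ can be arbitrarily small, so $d\,\E{H}$ cannot pay for the extreme-event contribution either. The paper avoids all of this by splitting \emph{deterministically} on $\lambda=\max_v q_v$: if $\lambda<\epsilon$ then $H<\epsilon$ for every realization, so $V_r\in(D_r-2\epsilon,D_r)$ uniformly in $\Pi$, and integrating $(D_r-2\epsilon)\ln(1/D_r)$ against the $\betadist(t+1,t+1)$ density gives $\E{\entropy[\ln](V_1,H,V_2)}\ge\tilde H-4\epsilon\tilde h=1/c$ outright, with $d\,\E{H}$ unused; if $\lambda\ge\epsilon$ the entropy term is dropped entirely and $\E{H}\ge q_v\Prob{P=v}\ge1/d$ for the heavy value $v$ closes the step. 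Each case uses exactly one summand of the toll inequality, never a blend, which is what makes the stated constants work. Redo your argument with a case split on $\max_v q_v$ rather than on $\Pi$, and the two regimes separate cleanly.
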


\begin{proof}
Let $\epsilon$ with $c=c_\epsilon\ge 0$ be given. 
Note that $d=d_\epsilon\ge 0$ holds for all $\epsilon$.
The proof is by induction on $u$, the size of the universe.
If $u=0$, \ie, $\vect q = ()$, we have $\E{A_{\vect q}} = 0$,
see \weqref{eq:equals-A-q-recurrence-initial-condition}.
Since $d\ge 0$ and here $\entropy[\ln](\vect q) = 0$, the claim holds.

Now assume that $u\ge1$ and the claim holds for all (strictly) smaller universe sizes.
We start by taking expectations in \weqref{eq:equals-A-q-recurrence} and
conditioning on the pivot value $P$:
\ifproceedings{\begin{align*}\SwapAboveDisplaySkip}{\begin{align*}}
		\E{A_{\vect q}}
	&\wwrel=
		1 \bin+
		\Eover* {P} {\sum_{r=1}^2 \E{V_r A_{\vect{Z_r}}\given P}}
\\	&\wwrel=
		1 \bin+
		\Eover* {P} {\sum_{r=1}^2 V_r \,\E{ A_{\vect{Z_r}}\given P}}\
\shortintertext{using the inductive hypothesis}
	&\wwrel\le
		1 \bin+
		\Eover* {P} {\sum_{r=1}^2 V_r \,(c \entropy[\ln](\vect{Z_r})+d)}
\\	&\wwrel=
		1 \bin+
		c\cdot \E* {\sum_{r=1}^2 V_r \,\entropy[\ln](\vect{Z_r})}
		\bin+ d \cdot \E{\total{\vect V}}
\\	&\wwrel{\eqwithref{eq:entropy-aggregation-P}}
		c\cdot \entropy[\ln](\vect q) 
\ifproceedings{\\* &\wwrel\ppe{}}{}
		\bin+
		\underbrace{
		1 \bin-
		c\cdot \E[\big] {\entropy[\ln](\vect V, H)}
		\bin+ d \cdot \E{\total{\vect V}}
		} _ {\varpi}
	\numberthis\label{eq:equals-def-varpi}
\end{align*}
It remains to show that $\varpi \le d$.
We consider two cases depending on the maximal probability 
$\lambda = \max_{1\le v\le u} q_v$.
\begin{enumerate}%
\item Case $\lambda < \epsilon$:\\
	In this case, all individual probabilities are smaller than $\epsilon$,
	so it is plausible that we can bound the 
	expected entropy of partitioning
	$\E{\entropy[\ln](\vect V, H)}$
	from below.
	The subuniverse probabilities $V_r$ are quite close to the continuous
	spacings $D_r$:
	by definition (see also \wtpref{fig:equals-stochastic-models-probs})
	we have in interval-arithmetic notation
	\begin{align*}
	\numberthis\label{eq:equals-Dr-Vr-all-qr-le-eps}
			V_r
		&\wwrel=
			D_r \bin+ (-2\epsilon,0)
			,\qquad (r=1,2).
	\end{align*}
	For the expected partitioning entropy, this means
	\begin{align*}
			\E{\entropy[\ln](\vect V\mkern-5mu,\mkern -1mu H)}
		&\wwrel\ge
			\sum_{r=1}^2 \E[\big]{V_r \ln(1/V_r)}
	\intertext{%
		using \weqref{eq:equals-Dr-Vr-all-qr-le-eps} and
		$x\log(1/x) \ge (x-\epsilon)\log(1/(x+\epsilon'))$ 
		for $\epsilon,\epsilon'\ge0$ and $x\in[0,1]$, this is
	}
	\ifproceedings{\\*[-2\baselineskip]}{}
		&\wwrel\ge
			\sum_{r=1}^2 \E[\big]{(D_r-2\epsilon) \ln(1/D_r)}
	\\[-1ex]	&\wwrel=
			\E[\big]{\entropy[\ln](\vect D)} 
			\wbin+ 2\epsilon \sum_{r=1}^2 \E[\big]{\ln(D_r)}
	\\[-1ex]	&\wwrel{\relwithtext{%
					\llap{\wref{lem:expected-entropy}, }%
					\wref{eq:logarithmic-beta-integral}}=%
				}
			\tilde H
			\ifproceedings{\bin+}{\wbin+}
			2\epsilon \sum_{r=1}^2 \bigl(\psi(t+1)-\psi(k+1)\bigr)
	\\ &\wwrel=
			\tilde H
			\ifproceedings{\bin-}{\wbin-} 
			4\epsilon  (\harm k - \harm{t}) 
	\\ &\wwrel=
			\tilde H
			\ifproceedings{\bin-}{\wbin-}  
			4\epsilon \tilde h 
	\\ &\wwrel=
			1/c
			\;.
	\end{align*}
	Hence $c$ satisfies 
	$c \ge 1 \big/ \E{\entropy[\ln](\vect V, H)} \ge 0$,
	which implies
	\begin{align*}
			\varpi 
		&\wwrel\le 
			1 \bin-
			\frac{1}{\E{\entropy[\ln](\vect V, H)}} \cdot 
				\E[\big] {\entropy[\ln](\vect V, H)}
	\ifproceedings{\\* &\wwrel\ppe{}}{}
			\bin+ d \cdot \underbrace{\E{\total{\vect V}}}_{\le 1}
	\\[-2ex]	&\wwrel\le
			d.
	\end{align*}
	The inductive step is proven in this case.
	
\item Case $\lambda \ge \epsilon$: \\
	In the second case, 
	there is a \emph{likely} value $v\in[u]$ with $q_v \ge \epsilon$.
	We will show a lower bound for having this value as label of the root.

	We have 
	$\Pi \eqdist \betadist(t+1,t+1)$ and
	hence 
	$f_{\Pi}(z) = \frac{z^{t}(1-z)^{t}}{\BetaFun(t+1,t+1)}$.
	Recall that $P = f_U^{-1}(\Pi)$ (\wref{fig:equals-stochastic-model-sampling}),
	so we can bound the probability to draw $v$ from below by the smallest value of the
	integral over any $\epsilon$-wide strip of the density:
	\ifproceedings{\vspace{-2ex}}{}
	\begin{align*}
			\Prob{P = v}
		&\wwrel\ge
			\min_{0\le\zeta\le 1-\varepsilon} \int_{z=\zeta}^{\zeta+\varepsilon} f_{\Pi_r}(z) \, dz
	\\	&\wwrel=
			\min_{0\le\zeta\le 1-\varepsilon} \int_{z=\zeta}^{\zeta+\varepsilon} 		
					\frac{z^{t}(1-z)^{t}}{\BetaFun(t+1,t+1)} \, dz
	\\	&\wwrel=
			\int_{z=0}^{\varepsilon} \frac{z^{t}(1-z)^{t}}{\BetaFun(t+1,t+1)} \, dz
	\\	&\wwrel\ge
			\int_0^\epsilon 
				\frac{z^{t}(1-\epsilon)^{t}}{\BetaFun(t+1,t+1)} \, dz
	\\	&\wwrel=
			\frac{\epsilon^{t+1}(1-\epsilon)^{t}}
				{(t+1)\,\BetaFun(t+1,t+1)}
			\;.
	\numberthis\label{eq:equals-prob-Pr-eq-v-when-qv-ge-eps}
	\end{align*}
	For the expected hitting probability, we thus
	have for any $\vect q$ with a $q_v \ge \epsilon$ that
	\begin{align*}
			\E{H}
		&\wwrel\ge
			q_v \cdot \Prob{P = v}
		\ifproceedings{\\&}{}
		\wwrel{\relwithref{eq:equals-prob-Pr-eq-v-when-qv-ge-eps}\ge}
			\frac { \epsilon^{t+2}(1-\epsilon)^{t} } 
				{(t+1) \, \BetaFun(t+1,t+1) }
		\ifproceedings{\\&}{}
		\wwrel= 1/d,
	\numberthis\label{eq:equals-E-H-ge}
	\end{align*}
	so
	$d \ge 1 \big/ \E{H}$.
	This implies
	\begin{align*}
			\varpi - d 
		&\wwrel\le 
			1 -
			c\cdot \E[\big] {\entropy[\ln](\vect V, H)}
			+ d \cdot \E{\total{\vect V}} - d
	\\	&\wwrel\le
			1 - (1-\E{\total{\vect V}}) \cdot \frac{1}{\E{H}}
	\\[-.5ex]	&\wwrel=
			0.
	\end{align*}
	\ifproceedings{This concludes the inductive step in case 2.}{This concludes the inductive step for the second case.}
\end{enumerate}
The inductive step is thus done in both cases, and the claim holds for
all stochastic vectors $\vect q$ by induction.
\end{proof}

\noindent
The lower bound on $\E{A_{\vect q}}$ uses basically the same techniques;
only a few details differ.

\begin{lemma}[Lower Bound]
\label{lem:equals-entropy-lower-bound-eps}
	Let $\E{A_{\vect q}}$ satisfy \weqref{eq:equals-E-A-q-recurrence},
	and let $\epsilon\in(0,1/e)$ be given. Define
	\begin{align*}
			c \wrel= c_\epsilon
		&\wwrel=
			\frac1{\tilde H
			\wbin+ 4\epsilon \bin+\epsilon\ln(1/\epsilon)}
			\,,
	\\
			d \wrel= d_\epsilon
		&\wwrel=
			\bigl(c_\epsilon \ln(3) \bin- 1 \bigr)
				\frac {(t+1) \, \BetaFun(t+1,t+1) }
					{ \epsilon^{t+2}(1-\epsilon)^{t} } \,,
	\\ 	\text{where}\quad
			\tilde H
		&\wwrel=
			\harm{k+1} - \harm{t+1}\;.
	\end{align*}
	If $d\ge 0$, we have that
	$
			\E{A_{\vect q}}
		\wwrel\ge
			c \cdot \entropy[\ln](\vect q) \bin- d
	$
	all stochastic vectors $\vect q$.
\end{lemma}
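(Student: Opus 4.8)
The plan is to prove \wref{lem:equals-entropy-lower-bound-eps} by induction on the universe size $u$, running the argument of \wref{lem:equals-entropy-upper-bound-eps} with the inequalities concerning the toll term reversed. For the base case $u=0$, where $\vect q=()$, we have $\E{A_{\vect q}}=0=\entropy[\ln](\vect q)$ by \weqref{eq:equals-A-q-recurrence-initial-condition}, so the claimed bound reduces to $0\ge-d$, which holds since $d\ge0$ is assumed. For the inductive step with $u\ge1$, I would take expectations in \weqref{eq:equals-A-q-recurrence}, condition on the root label $P$, apply the inductive hypothesis $\E{A_{\vect{Z_r}}\given P}\ge c\cdot\entropy[\ln](\vect{Z_r})-d$, and then use the entropy aggregation identity \weqref{eq:entropy-aggregation-P}, which yields
\begin{align*}
		\E{A_{\vect q}}
	&\wwrel\ge
		c\cdot\entropy[\ln](\vect q)
		\bin+ \underbrace{
			1 \bin- c\cdot\E[\big]{\entropy[\ln](\vect V, H)} \bin- d\cdot\E{\total{\vect V}}
		}_{\varpi'} \,.
\end{align*}
Since $1-\total{\vect V}=H$, completing the induction amounts to showing $\varpi'\ge-d$, equivalently $1-c\cdot\E{\entropy[\ln](\vect V, H)}+d\cdot\E{H}\ge0$; this I would establish by the same case distinction on $\lambda=\max_v q_v$ as in the upper bound.

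In the case $\lambda<\epsilon$, all probabilities — in particular $H=q_P$ — lie below $\epsilon<1/e$, so the goal becomes an \emph{upper} bound on $\E{\entropy[\ln](\vect V, H)}$. Starting from $V_r=D_r+(-2\epsilon,0)$ as in \weqref{eq:equals-Dr-Vr-all-qr-le-eps}, concavity of $x\mapsto x\ln(1/x)$ together with the bound $|1+\ln x|\le1$ on $(1/e,1)$ give $V_r\ln(1/V_r)\le D_r\ln(1/D_r)+2\epsilon$, while $H\ln(1/H)\le\epsilon\ln(1/\epsilon)$ because $x\ln(1/x)$ is increasing on $(0,1/e)$; combined with $\E{\entropy[\ln](\vect D)}=\harm{k+1}-\harm{t+1}=\tilde H$ from \wref{lem:expected-entropy} this gives $\E{\entropy[\ln](\vect V, H)}\le\tilde H+4\epsilon+\epsilon\ln(1/\epsilon)=1/c$. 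Hence $1-c\cdot\E{\entropy[\ln](\vect V, H)}\ge0$, and since $\E{\total{\vect V}}\le1$ and $d\ge0$ we get $\varpi'\ge-d\cdot\E{\total{\vect V}}\ge-d$.

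In the case $\lambda\ge\epsilon$ there is a likely value $v$ with $q_v\ge\epsilon$; the $\epsilon$-strip integral estimate behind \weqref{eq:equals-prob-Pr-eq-v-when-qv-ge-eps} gives $\E{H}\ge q_v\cdot\Prob{P=v}\ge\frac{\epsilon^{t+2}(1-\epsilon)^t}{(t+1)\,\BetaFun(t+1,t+1)}$, and writing $D$ for the reciprocal of this quantity, the lemma defines $d$ exactly as $(c\ln(3)-1)D$. Bounding the ternary partitioning entropy trivially by $\entropy[\ln](\vect V, H)\le\ln3$ and using $d\ge0$, I would then conclude $1-c\cdot\E{\entropy[\ln](\vect V, H)}+d\cdot\E{H}\ge1-c\ln3+d/D=1-c\ln3+(c\ln3-1)=0$, so again $\varpi'\ge-d$ and the induction closes.

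The step I expect to be the real obstacle is not any single estimate but the asymmetry forced by the sensitivity of $\entropy[\ln](\vect V, H)$ to $\vect q$: we can bound the expected partitioning entropy from above \emph{and} from below, but not by the same quantity, so a purely multiplicative bound $\E{A_{\vect q}}\ge c\cdot\entropy[\ln](\vect q)$ is impossible and the additive slack $d=d_\epsilon$ is unavoidable. The payoff requires tracking how $c_\epsilon\to\alpha_k/\ln2$ and how fast $d_\epsilon$ grows as $\epsilon\to0$, so that when $\epsilon$ is tuned against $\entropy[\ln](\vect q)$ in the proof of \wref{thm:A-q-entropy-bounds}, the $d_\epsilon$-term stays of lower order; this is where the exponent $\tfrac{t+2}{t+3}$ in that theorem ultimately comes from.
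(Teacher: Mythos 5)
Your proposal is correct and matches the paper's proof almost exactly: same induction on $u$, same use of entropy aggregation, same case split on $\lambda=\max_v q_v$, and the same bounds in each case (with the observation $1-\total{\vect V}=H$ made explicit where the paper leaves it implicit). The only cosmetic difference is that you justify $V_r\ln(1/V_r)\le D_r\ln(1/D_r)+2\epsilon$ via concavity of $x\mapsto x\ln(1/x)$ and the derivative bound on $(1/e,1)$, whereas the paper asserts the elementary inequality directly; your justification is sound once one notes that when $D_r\le1/e$ the concavity bound is already nonpositive.
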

\ifproceedings{
	The proof is similar to the one for the upper bound and omitted;
	it appears in full in the extended version of this paper.
}{
	\begin{proof}
	Let $\epsilon\in(0,1/e)$ with $d=d_\epsilon\ge 0$ be given;
	$c=c_\epsilon\ge 0$ holds for any $\epsilon$.
	The proof is similar to that of \wref{lem:equals-entropy-upper-bound-eps},
	so we emphasize the differences and skip identical parts.
	If $u=0$, \ie, $\vect q = ()$, the claim holds since $d\ge 0$.
	
	Now assume $u\ge1$ and that the claim holds for all (strictly) smaller universe sizes.
	As for the upper bound, we find from the recurrence using the inductive hypothesis
	\begin{align*}
			\E{A_{\vect q}}
		&\wwrel\ge
			1 \bin+
			\Eover* {P} {\sum_{r=1}^2 V_r \,(c \entropy[\ln](\vect{Z_r})-d)}
	\\[1ex]	&\wwrel{\eqwithref{eq:entropy-aggregation-P}}
			c\cdot \entropy[\ln](\vect q) 
			\bin+
			\underbrace{
				1 \bin-
				c\cdot \E[\big] {\entropy[\ln](\vect V, H)}
				\bin- d \cdot \E{\total{\vect V}}
			} _ {\varpi},
		\numberthis\label{eq:equals-def-varpi-lower-bound}
	\end{align*}
	and it remains to show that $\varpi \ge -d$.
	We consider the same two cases for $\lambda = \max_{1\le v\le u} q_v$.
	\begin{enumerate}%
	\item Case $\lambda < \epsilon$: \\
		In this case, we bound the expected entropy of partitioning,
		$\E{\entropy[\ln](\vect V, H)}$,
		from \emph{above}.
		Similar to the computation for the upper bound, we find
		\begin{align*}
				\E{\entropy[\ln](\vect V, H)}
			&\wwrel=
				\sum_{r=1}^2 \E[\big]{V_r \ln(1/V_r)}
				\bin+
				\E[\big]{H \ln(1/H)}
		\intertext{%
			using \weqref{eq:equals-Dr-Vr-all-qr-le-eps} and 
			$(x-\epsilon)\log(1/(x-\epsilon)) \le x \ln(1/x) + \epsilon'$ 
			for $0 \le \epsilon\le \epsilon'$ and $x\in[0,1]$, and 
			that $x\ln(1/x)$ is increasing for $x\in[0,1/e]$,
			this is
		}
			&\wwrel\le
				\sum_{r=1}^2 \E[\big]{D_r \ln(1/D_r) + 2\epsilon}
				\wbin+
				\epsilon \ln(1/\epsilon)
		\\	&\wwrel{\eqwithref[r]{eq:expected-entropy}}
				\tilde H
				\wbin+ 4\epsilon
				\bin+ \epsilon \ln(1/\epsilon)
		\end{align*}
		So $c$ satisfies
		$c \le 1 \big/ \E{\entropy[\ln](\vect V, H)}$,
		which implies
		\begin{align*}
				\varpi 
			&\wwrel\ge 
				1 \bin-
				\frac{1}{\E{\entropy[\ln](\vect V, H)}} \cdot 
					\E[\big] {\entropy[\ln](\vect V, H)}
				\bin- d \cdot \underbrace{\E{\total{\vect V}}}_{\le 1}
		\\[-1ex]	&\wwrel\ge
				-d.
		\end{align*}
		The inductive step is proven in this case.
		
	\item Case $\lambda \ge \epsilon$: \\
		In the second case, 
		there is a \emph{likely} value $v\in[u]$ with $q_v \ge \epsilon$.
		By the same arguments as in the proof of \wref{lem:equals-entropy-upper-bound-eps},
		we find
		\begin{align*}
				\E{H}
			&\wwrel{\relwithref{eq:equals-E-H-ge}\ge}
				\frac { \epsilon^{t+2}(1-\epsilon)^{t} } 
					{(t+1) \, \BetaFun(t+1,t+1) },
		\end{align*}
		so that
		$d \ge \bigl(c \ln(3) \bin- 1 \bigr) \big/ \E{H}$.
		This implies
		\begin{align*}
				\varpi + d 
			&\wwrel= 
				1 \bin-
				c\cdot \E[\big] {\underbrace{\entropy[\ln](\vect V, H) }_{\le \ln(3)}}
				\bin+ d \bigl( 1 - \E{\total{\vect V}}\bigr)
		\\	&\wwrel\ge
				1 \bin-
				c\cdot \ln(3)
				\bin+ \frac{c \ln(3) \bin- 1}{\E{H}} 
				\cdot \E{H}
		\\	&\wwrel=
				0.
		\end{align*}
		This concludes the inductive step also in the second case,
		so the claim holds for all stochastic vectors $\vect q$ by induction.
	\end{enumerate}
	\end{proof}
}%
We observe that $c_\epsilon$ converges to $1/\tilde H$ for both bounds as $\epsilon\to0$,
so there is hope to show $\E{A_{\vect q}} \sim \entropy[\ln](\vect q) / \tilde H$.
We have to be precise about the limiting process and error terms, though.

Let $(\ui{\vect q}i)_{i\in\N}$ be a sequence of universe distributions
for which $\mathcal H_i \ce \entropy[\ln](\ui{\vect q}i) \to \infty$ as $i\to\infty$.
Now, consider $c_\epsilon$ and $d_\epsilon$ from \wref{lem:equals-entropy-upper-bound-eps}.
As functions in $\epsilon$, they satisfy for $\epsilon\to0$
\begin{align*}
\numberthis\label{eq:c-d-asymptotic-upper-bound}
		c_\epsilon
	&\wwrel=
		\frac{1}{\tilde H} \bin\pm \Oh(\epsilon)\,,
&
		d_\epsilon
	&\wwrel=
		\Oh(\epsilon^{-t-2})
\end{align*}
Since the bounds above hold simultaneously for all feasible values of $\epsilon$,
we can let $\epsilon$ depend on $\mathcal H_i$.
If we set
\ifproceedings{\begin{align*}\SwapAboveDisplaySkip}{\begin{align*}}
\numberthis\label{eq:choice-of-epsilon}
		\epsilon \wrel= \epsilon_i 
	&\wwrel= 
		\mathcal H_i{}^{-\frac{1}{t+3}} 
\end{align*}
we have $\epsilon_i \to 0$ as $i\to\infty$ and so $c_{\epsilon_i} > 0$
for large enough~$i$. 
Then we have by \wref{lem:equals-entropy-upper-bound-eps}
\begin{align*}
		\E{A_{\ui{\vect q}i}}
	&\wwrel\le
		c_{\epsilon_i} \mathcal H_i \bin+ d_{\epsilon_i}
	\wwrel{\eqwithref{eq:c-d-asymptotic-upper-bound}}
		\frac{\mathcal H_i}{\tilde H}
		\wrel\pm \Oh\Bigl(
			\mathcal H_i{}^{\frac{t+2}{t+3}}
		\Bigr)
		\ifproceedings{}{,\qquad(i\to\infty).}
\numberthis\label{eq:equals-entropy-upper-bound-H-to-inf}
\end{align*}
\ifproceedings{as $i\to\infty$.}{}

Now consider the lower bound.
For $c_\epsilon$ and $d_\epsilon$ from 
\wref{lem:equals-entropy-lower-bound-eps} we similarly find as $\epsilon\to0$ that
\begin{align*}
\numberthis\label{eq:c-d-asymptotic-lower-bound}
		c_\epsilon
	&\wwrel=
		\frac{1}{\tilde H} \bin\pm \Oh(\epsilon \log \epsilon)\,,
&
		d_\epsilon
	&\wwrel=
		\Oh(\epsilon^{-t-2}) .
\end{align*}
With the same $\epsilon_i$ as above (\weqref{eq:choice-of-epsilon})
we have $d_{\epsilon_i} \ge 0$ for large enough $i$, so by \wref{lem:equals-entropy-lower-bound-eps}
it holds that
\begin{align*}
		\E{A_{\ui{\vect q}i}}
	&\wwrel\ge
		\frac{\mathcal H_i}{\tilde H} 
		\wrel\pm \Oh\Bigl(
			\mathcal H_i{}^{\frac{t+2}{t+3}}
			\log \mathcal H_i
		\Bigr)
		,\qquad(i\to\infty).
\numberthis\label{eq:equals-entropy-lower-bound-H-to-inf}
\end{align*}
Together with \weqref{eq:equals-entropy-upper-bound-H-to-inf},
this concludes the proof of \wref{thm:A-q-entropy-bounds}.
\end{proof}

\section{Lower Bound For \IID Sorting}
\label{sec:lower-bound}

We follow the elegant argument of \citet{MunroRaman1991} for multiset sorting
to obtain a lower bound for the discrete \iid model.
\ifproceedings{
	Since profiles are concentrated around the mean
	and the entropy function is smooth,
}{
	By averaging over the profiles, 
}%
we obtain essentially the result of their Theorem~4,
but with a weaker error term.

\begin{theorem}[Lower Bound]
\label{thm:lower-bound}
	Let $u = \Oh(n^\nu)$ for a constant $\nu\in[0,1)$
	and $\vect q \in (0,1)^u$ with $\total{\vect q} = 1$.
	For any constant $\varepsilon>0$,
	$
			\entropy[\ld](\vect q) n
			- n/\ln(2) 
			\wbin\pm o\bigl(n^{(1+\nu)/2+\varepsilon}\bigr)
	$
	ternary comparisons 
	are necessary in expectation as $n\to\infty$ 
	to sort $n$ \iid $\distFromWeights{\vect q}$ elements
	by any comparison-based algorithm.
\end{theorem}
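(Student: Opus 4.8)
The plan is to derive the \iid lower bound from the per-profile multiset lower bound of \citet{MunroRaman1991}: first condition on the (random) profile, then average, controlling the resulting average of the entropy by the concentration estimate already prepared in \wref{lem:entropy-of-multinomial-bound}.

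The first step is to reproduce the information-theoretic argument of \citet{MunroRaman1991} (essentially their Theorem~4) in the form: for every fixed profile $\vect x\in\N^u$ with $\total{\vect x}=n$, any comparison-based algorithm performs, in expectation over a uniformly random permutation of the multiset with profile $\vect x$, at least $\entropy[\ld](\vect x/n)\,n - n\ld e \pm \Oh(u\log n)$ ternary comparisons. Their argument restricts the ternary comparison tree to the $\binom{n}{x_1,\ldots,x_u}$ arrangements of the multiset --- which must reach pairwise distinct leaves, since the sorted sequence of a multiset is unique and hence distinct arrangements admit no common sorting permutation --- and refines the naive $\log_3(\cdot)$ leaf-counting bound by accounting for the equality outcomes separately: for a non-redundant algorithm each ``$=$'' merges two hitherto unlinked positions and thereby collapses the effective universe, so the effectively binary part of the tree still has to separate all $\binom{n}{x_1,\ldots,x_u}$ arrangements; Stirling's formula then rewrites $\ld\binom{n}{x_1,\ldots,x_u}$ as $\entropy[\ld](\vect x/n)\,n$ and produces the $-n\ld e$ correction. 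I expect this to be the main obstacle: it is a faithful reconstruction of published work, but one has to pin down an explicit error term that is \emph{uniform over all profiles}, in particular for profiles in which some multiplicities are tiny or vanishing (there the bound simply refers to the smaller effective universe), so that the averaging in the next step is legitimate. (Randomized algorithms are covered by Yao's principle, so it suffices to treat deterministic comparison trees.)

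For the averaging, note that under $\distFromWeights{\vect q}$ the profile is $\vect X\eqdist\multinomial(n,\vect q)$ and that, conditionally on $\vect X=\vect x$, the input is a uniformly random permutation of the multiset with profile $\vect x$. Hence by the per-profile bound the expected number of comparisons, conditioned on $\vect X$, is at least $\entropy[\ld](\vect X/n)\,n - n\ld e - \Oh(u\log n)$ almost surely, and averaging over $\vect X$ shows that the unconditional expectation is at least $n\,\E[\big]{\entropy[\ld](\vect X/n)} - n\ld e - \Oh(u\log n)$. Now \wref{lem:entropy-of-multinomial-bound} (divided by $\ln2$), together with the hypothesis $u=\Oh(n^{\nu})$, $\nu<1$, gives $\E[\big]{\entropy[\ld](\vect X/n)} = \entropy[\ld](\vect q) \pm o\bigl(n^{-(1-\nu)/2+\varepsilon}\bigr)$ for any fixed $\varepsilon>0$, so $n\,\E[\big]{\entropy[\ld](\vect X/n)} = \entropy[\ld](\vect q)\,n \pm o\bigl(n^{(1+\nu)/2+\varepsilon}\bigr)$. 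Since $\ld e = 1/\ln2$, the leading correction is exactly $-n/\ln(2)$; the two residual errors --- the $o\bigl(n^{(1+\nu)/2+\varepsilon}\bigr)$ just obtained and the $\Oh(u\log n)=\Oh(n^{\nu}\log n)$ from the per-profile error, which is $o\bigl(n^{(1+\nu)/2+\varepsilon}\bigr)$ because $(1+\nu)/2>\nu$ for $\nu<1$ --- both fold into $o\bigl(n^{(1+\nu)/2+\varepsilon}\bigr)$. This yields $\entropy[\ld](\vect q)\,n - n/\ln(2) \pm o\bigl(n^{(1+\nu)/2+\varepsilon}\bigr)$ comparisons in expectation, as claimed; both averaging steps are routine once \wref{lem:entropy-of-multinomial-bound} is in hand, and we recover essentially the result of \citet{MunroRaman1991}'s Theorem~4, only with the weaker error term inherited from the concentration estimate.
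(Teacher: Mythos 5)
Your proposal is correct and rests on the same two pillars as the paper's proof — a Munro--Raman-style reduction to distinct-element sorting, and the entropy-concentration estimate \wref{lem:entropy-of-multinomial-bound} — but organizes them differently, and your sketch of the first pillar is not quite the argument Munro and Raman actually give. You first formulate a per-profile multiset lower bound, then condition on $\vect X$ and average via the tower rule, whereas the paper never conditions: it applies the reduction directly to the iid input, embedding each position as a distinct pair $(U_i,V_i)$ (with $\vect V$ a fresh random permutation of $[n]$), observes that sorting these pairs lexicographically requires $n\ld n - n\ld e \pm \Oh(\log n)$ comparisons in expectation, and that this can be achieved by first sorting on the $U_i$'s (cost $\E{C_{n,\vect q}}$) and then Mergesorting inside each of the $u$ equivalence classes (cost $\sum_v \E{X_v\ld X_v}$), which the concentration lemma then rewrites in terms of $\entropy[\ld](\vect q)$. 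That reduction is exactly what Munro and Raman use and is what the paper reproduces; your description of their proof as a ternary-leaf-counting argument with ``$=$'' outcomes collapsing the universe is not faithful to it, and the ``main obstacle'' you anticipate (pinning down a uniform-in-$\vect x$ error term) dissolves under the reduction, since the $\Oh(\log n)$ error comes solely from the distinct-element sorting lower bound and is manifestly independent of $\vect x$. Your averaging step itself is clean: conditioning on the multinomial profile does yield a uniformly random permutation of the conditioned multiset, the tower rule applies, and the error-term arithmetic ($u\log n = \Oh(n^\nu\log n) = o(n^{(1+\nu)/2+\varepsilon})$ since $\nu<(1+\nu)/2$ for $\nu<1$, plus the $o(n^{(1+\nu)/2+\varepsilon})$ from the concentration lemma) checks out. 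Net effect: logically equivalent; the paper's version is more self-contained in that it does not need the per-profile statement as a separate black-box lemma, while yours makes the conditioning structure explicit at the cost of having to verify that black box.
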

\ifproceedings{}{%
	We remark that one might expect to require at least $\entropy[\ld](\vect q) n$ comparisons 
	since this is the entropy of a vector of $n$ \iid $\distFromWeights{\vect q}$ elements;
	but such entropy arguments must be taken with care:
	for $u$ much larger than $n$, $u\gg n$, we might have $\entropy(\vect q) \gg \ld n$; then
	$n \entropy(\vect q)\gg n\ld n$ is certainly \emph{not} a lower bound
	for sorting. 
	(\wref{thm:lower-bound} does not make a statement for such a case since the 
	error bound dominates then.)
}

\ifproceedings{
	The proof is omitted; it is given in the extended version of this paper.
}{
	
	\begin{proof}[\wref{thm:lower-bound}]
	Let $U_1,\ldots,U_n$ be $n$ \iid $\distFromWeights{\vect q}$ numbers
	and $V_1,\ldots,V_n$ be a random permutation of $[n]$.
	The $n$ vectors $(U_i,V_i)$ are then all distinct, and
	all $n!$ relative rankings \wrt lexicographic order are equally likely.
	
	Assume we can sort $U_1,\ldots,U_n$ with $\E{C_{n,\vect q}}$ 
	ternary comparisons on average.
	We use this method to partially sort the $n$ vectors $(U_1,V_1),\ldots,(U_n,V_n)$
	according to the first component only.
	We can then complete the sorting using
	Mergesort (separately) on each of the $u$ classes of elements with same first component.
	(Any sorting method must already have determined the borders between these classes 
	while sorting according to $U_1,\ldots,U_n$.)
	The total number of comparisons we use is then no more than
	\begin{align*}
			\E{C_{n,\vect q}} \bin+ \sum_{v=1}^u \E[\big]{ X_v \ld(X_v) }
		&\wwrel=
			\E{C_{n,\vect q}} \bin+ n \E[\bigg]{ 
					{\underbrace{ \sum_{v=1}^u \frac{X_v}n \ld\left(\frac{X_v}n\right) } _ {=\entropy[\ld](\vect X/n)}} 
				}
			\bin+ n\ld(n)
		\shortintertext{%
			with $\rho$ as in \wref{lem:entropy-of-multinomial-bound}
		}
		&\wwrel=
			n\ld(n) + \E{C_{n,\vect q}} 
			+ n \entropy[\ld](\vect q)
			\wbin\pm n \, \frac{\rho}{\ln(2)}
	\\	&\wwrel\ge
			n\ld(n) - n/\ln(2) \bin\pm \Oh(\log n)
	\end{align*}
	since the latter is the well-known lower bound on the average number of 
	(ternary or binary) comparisons for 
	sorting a random permutation of $n$ distinct elements, see, \eg, 
	Equation~5.3.1--(37) of \citet{Knuth1998}.
	It follows that 
	\begin{align*}
			\E{C_{n,\vect q}}
		&\wwrel\ge
			\entropy[\ld](\vect q) n
			- \frac{n}{\ln(2)}
			\wbin\pm n \, \frac{\rho}{\ln(2)}.
	\end{align*}
	For $u=\Oh(n^\nu)$ with $\nu\in[0,1)$
	we get asymptotically for any $\varepsilon>0$
	\begin{align*}
			\E{C_{n,\vect q}}
		&\wwrel\ge
			\entropy[\ld](\vect q) n
			- \frac{n}{\ln(2)}
			\wbin\pm o\bigl( n^{\frac{1+\nu}2+\varepsilon} \bigr).
	\end{align*}
	\end{proof}
}

\ifproceedings{}{\needspace{10\baselineskip}}
\section{Proof of Main Result}
\label{sec:proof-main-result}

With all these preparations done,
the proof of our main result reduces to properly combining the ingredients 
developed above.

\begin{proof}[\wref{thm:main-result}]
Let the sequence $(\ui{\vect q}n)_{n\in\N}$ be given.
By assumption the model has many duplicates, \ie, $\mu_n = \Omega(n^{-1+\epsilon})$.
We thus obtain from \wref{thm:separating-n-and-q} that
	\begin{align*}
			\E{C_{n,\ui{\vect q}n}}
		&\wwrel=
			\E{A_{\ui{\vect q}n}} \cdot n \bin\pm \Oh(n^{1-\delta'})
			,\qquad(n\to\infty),
	\end{align*}
for any $\delta' \in (0,\epsilon)$.
If $\mathcal H_n = \entropy(\ui{\vect q}n)\to\infty$ as $n\to\infty$,
we can continue with \wref{thm:A-q-entropy-bounds} right away.
To cover the case that
$(\mathcal H_n)_{n\in\N}$ contains an infinite subsequence that is bounded, 
we add an error bound of $\Oh(n)$; 
this dominates $\E{A_{\ui{\vect q}n}} \cdot n$
(making the claim is essentially vacuous) for such inputs.
So in any case we have that
	\begin{align*}
			\E{C_{n,\ui{\vect q}n}}
		&\wwrel=
			\alpha_k \mathcal H_n \cdot n \bin\pm \Oh(\mathcal H_n^{1-\delta} n + n + n^{1-\delta'})
	\\	&\wwrel=
			\alpha_k \mathcal H_n \cdot n \bin\pm \Oh(\mathcal H_n^{1-\delta} n + n)
			\ifproceedings{}{,\qquad(n\to\infty)}
	\end{align*}
\ifproceedings{as $n\to\infty$}{}%
for any $\delta \in (0,\frac1{t+3}) = (0,\frac2{k+5})$.
The optimality claim follows directly by comparing with the lower bound in \wref{thm:lower-bound}.
\end{proof}

\section{Conclusion}
\label{sec:conclusion}

\FloatBarrier

Computer scientists are so accustomed to the random-permutation model 
that the possibility of duplicate keys in sorting is easily overlooked.
The following formulation (which is equivalent to random permutations)
makes the presence of an
underlying restriction more obvious:
we sort $n$ numbers drawn independently from the same \emph{continuous}
probability distribution.
The assumption of \iid samples is as natural as declaring all orderings equally likely,
and certainly adequate when no specific knowledge is available;
even more so for Quicksort where randomly shuffling the input prior to sorting is best practice.
The use of a \emph{continuous} distribution, though, is a restriction worth questioning; 
what happens if we use a discrete distribution instead?

The most striking difference certainly is that with a discrete distribution,
we expect to see equal elements appear in the input.
There are more differences, though.
If the distribution is (absolutely) continuous (\ie,
attains values in a real interval and has a continuous density), 
almost surely all elements are different and the ranks form a random permutation~\citep{Mahmoud2000sorting},
no matter how the continuous distribution itself looks like.
By contrast, different discrete universe distributions each yield a \emph{different} input model;
in particular the universe size $u$ can have a great influence.

In this paper, I~presented the first analysis of median-of-$k$ Quicksort
under the model of independently and identically distributed numbers from a
\emph{discrete} universe distribution.
I~showed that it uses asymptotically a factor 
$\alpha_k=\ln(2) / \bigl(\harm{k+1}-\harm{\smash{(k+1)/2}} \bigr)$ more comparisons 
than necessary in expectation.
Analytical complications necessitated the restriction to the case where 
every element is expected to appear $\Omega(n^{\epsilon})$ times
for some $\epsilon>0$, but I~conjecture that the result generalizes
beyond the limitations of the current techniques (see the comments below).

The very same statement 
\parenthesisclause{asymptotically a factor $\alpha_k$ over the lower bound} 
holds in the \iid model with a continuous distribution,
so, apart from the above restriction, we can say that median-of-$k$ Quicksort is 
asymptotically optimal to within the constant factor $\alpha_k$ for \emph{any}
randomly ordered input.
It is reassuring (and somewhat remarkable given the big differences in the derivations that lead to it) 
that we obtain the same constant in both cases: 
the relative benefit of pivot sampling is independent of the presence or absence of equal keys.
\wref{tab:alpha-k} shows the first values of $\alpha_k$;
most savings happen for small~$k$.

\begin{table}[tbhp]
	\begin{captionbeside}{%
		A few values of $\alpha_k$ and the relative improvement
		over the case without sampling.%
	}
	\qquad\qquad
	\plaincenter{%
	\small
	\begin{tabular}{clc}
	\toprule
		$k$ & $\alpha_k$ & saving over $k=1$\\
	\midrule
		$1     $ & $1.38629 $ & ---\\
		$3     $ & $1.18825 $ & $14.3\,\%$ \\
		$5     $ & $1.12402 $ & $18.9\,\%$ \\
		$7     $ & $1.09239 $ & $21.2\,\%$ \\
		$9     $ & $1.07359 $ & $22.6\,\%$ \\[.5ex]
		$\infty$ & $1       $ & $27.9\,\%$ \\
	\bottomrule
	\end{tabular}%
	}
	\qquad\qquad
	\ifproceedings{\label{tab:alpha-k}}{}
	\end{captionbeside}
	\ifproceedings{}{\label{tab:alpha-k}}
\end{table}

\ifproceedings{}{
	Exploiting the correspondence of Quicksort and search trees,
	we can also express the result in terms of trees:
	The expected search costs in search trees built from \emph{continuous} \iid 
	(or randomly permuted and \emph{distinct}) 
	data is of order $\log n$,
	whereas for \emph{discrete} \iid data (with many duplicates), 
	it is of order $\entropy(\vect q)$.
	Fringe-balancing allows to lower the constant of proportionality, $\alpha_k$,
	and has the \emph{same relative effect} in both cases.

	A particularly simple example of distributions covered by our analysis is
	the uniform distribution, $\vect q = (\frac1u,\ldots,\frac1u)$, 
	where $u = u_n = \Oh(n^{1-\epsilon})$.
	This model coincides with the random $u$-ary files studied by \citet{Sedgewick1977a}
	in his 1977 article.
	Since the entropy of the discrete uniform distribution is simply $\entropy(\vect q) = \ld (u)$, 
	we obtain $\E{C_{n,\vect q}} \sim \alpha_k n \ld (u)$
	\parenthesissign the same form as for random permutation only with $\ld n$ replaced by $\ld u$.
	For the special case of the uniform distribution, we can also strengthen
	the error bound of \wref{thm:A-q-entropy-bounds} for $\E{A_{\vect q}}$ 
	to $\Oh(1)$ using different techniques; 
	see my \phdthesis~\citep[Sec.\,8.7.7]{Wild2016} for details.
}

I surveyed previous results on multiset sorting, which uses a different input model.
We cannot directly relate the result,
but we have seen (in \wref{sec:relation-of-two-models}) that the
\iid model yields at least an upper bound.
(A more fine-grained discussion of the relation of the two models
is deferred to a full version of this article.)

Since
$\harm{k+1} - \harm{(k+1)/2} \sim \ln(k+1) - \ln((k+1)/2) = \ln(2)$
as $k\to\infty$, we have $\alpha_k \to 1$,
so we have confirmed the conjecture of Sedgewick and Bentley for inputs with many duplicates:
by increasing $k$, we obtain sorting methods that are arbitrarily close to the 
asymptotically optimal number of comparisons for randomly ordered inputs, and
in particular for random permutations of a multiset.

\paragraph{Extensions and Future Directions}

The methods used in this paper can readily be generalized to
analyze skewed sampling,
\ie, when we pick an order statistic other than the median of the sample.
It is known that the number of comparisons is minimal for the median~\citep{Martinez2001},
so this might not be very interesting in its own right,
but it can be used to analyze \textsl{ninther} sampling
and similar schemes~\citep[Sec.\,6.4]{Wild2016}.

\ifproceedings{}{
	I conjecture that $\alpha_k \.\entropy(\vect x/n) \.n$ is the correct leading term
	for any profile $\vect x \in \N^u$.
	If we have, however, a discrete \iid model with $q_v \ll n^{-1}$ so that $v$ is unlikely to
	be present in the input \emph{at all}, a modified ``entropy'' must appear instead of $\entropy(\vect q)$.
	(This case cannot occur in the multiset model since $x_v \ge 1$.)
	In the limit when all individual $q_i$ are small, this modified entropy would have to equal $\ld(n)$.
	Identifying such a unified expression is an interesting challenge.
}

Another line of future research is to extend the present analysis to
multiway Quicksort, \eg, the Yaroslavskiy-Bentley-Bloch dual-pivot Quicksort used in Java.
The uniform case is known~\citep{Wild2016}, but the techniques of the present paper 
do not carry over:
the partitioning costs for such methods also depend on $\vect q$, 
which means that we do not get matching lower and upper bounds for $\E{A_{\vect q}}$ 
using the method of \wref{sec:proof-entropy-bounds} any more.

\section*{Acknowledgments}
I am very grateful for inspiring discussions
with Conrado Martínez, Markus Nebel, Martin Aumüller and Martin Dietzfelbinger,
which laid the ground for the present paper.
Moreover I thank Sándor Fekete for many helpful comments that improved 
the presentation and framing of the results.
Finally, I thank my anonymous referees for their scrutiny and thorough reviews;
their comments helped to clarify the presentation significantly.

\ifproceedings{}{

	\appendix
	\part*{Appendix}
	\pdfbookmark[0]{Appendix}{}
	
	\ifproceedings{}{%
		\manualmark%
		\markleft{\mytitle}%
		\automark*[section]{}%
	}

	\numberwithin{algorithm}{section}
	\numberwithin{table}{section}
	\numberwithin{figure}{section}
\section{Index of Notation}
\label{app:notations}

In this appendix, I collect the notations used in this work.

\newlength\notationwidth
\setlength\notationwidth{9em}

\subsection{Generic Mathematical Notation}
\begin{notations}[\notationwidth]
\notation{$\N$, $\N_0$, $\Z$, $\Q$, $\R$, $\C$}
	natural numbers $\N = \{1,2,3,\ldots\}$, 
	$\N_0 = \N \cup \{0\}$,
	integers $\Z = \{\ldots,-2,-1,0,1,2,\ldots\}$,
	rational numbers $\Q = \{p/q : p\in\Z \wedge q\in\N  \}$,
	real numbers $\R$, and complex numbers $\C$.
\notation{$\R_{>1}$, $\N_{\ge3}$ etc.}
	restricted sets $X_\mathrm{pred} = \{x\in X : x \text{ fulfills } \mathrm{pred} \}$.
\notation{$\ln(n)$, $\ld(n)$}
	natural and binary logarithm; $\ln(n) = \log_e(n)$, $\ld(n) = \log_2(n)$.
\notation{$\vect x$}
	to emphasize that $\vect x$ is a vector, it is written in \textbf{bold;}\\
	components of the vector are not written in bold: $\vect x = (x_1,\ldots,x_d)$;\\
	unless stated otherwise, all vectors are column vectors.
\notation{$X$}
	to emphasize that $X$ is a random variable it is Capitalized.
\notation{$[a,b)$}
	real intervals, the end points with round parentheses are excluded, 
	those with square brackets are included.
\notation{$[m..n]$, $[n]$}
	integer intervals, $[m..n] = \{m,m+1,\ldots,n\}$;
	$[n] = [1..n]$.
\notation{$\|\vect x\|_p$}
	$p$-norm;
	for $\vect x\in\R^d$ and $p\in\R_{\ge1}$ we have 
	$\|\vect x\|_p = \bigl(\sum_{r=1}^{d} |x_r|\bigr)^{1/p}$.
\notation{$\|\vect x\|_\infty$}
	$\infty$-norm or maximum-norm; 
	for $\vect x\in\R^d$ we have $\|\vect x\|_\infty = \max_{r=1,\ldots,d} |x_r|$.
\notation{$\vect x+1$, $2^{\vect x}$, $f(\vect x)$}
	element-wise application on vectors; $(x_1,\ldots,x_d) + 1  = (x_1+1,\ldots,x_d+1)$ and 
	$2^{\vect x} = (2^{x_1},\ldots,2^{x_d})$; 
	for any function $f:\C\to\C$ write $f(\vect x) = (f(x_1),\ldots,f(x_d))$ etc.
\notation{$\total{\vect x}$}
	``total'' of a vector; for $\vect x = (x_1,\ldots,x_d)$, 
	we have $\total{\vect x} = \sum_{i=1}^d x_i$.
\notation{$\vect x^T$, $\vect x^T \vect y$}
	``transpose'' of vector/matrix $\vect x$; 
	for $\vect x, \vect y \in \R^n$, I write
	$\vect x^T \vect y = \sum_{i=1}^n x_i y_i$.
\notation{$\harm n$}
	$n$th harmonic number; $\harm n = \sum_{i=1}^n 1/i$.
\notation{$\Oh(f(n))$, $\pm\Oh(f(n))$, $\Omega$, $\Theta$, $\sim$}
	asymptotic notation as defined, \eg, by \citet[Section A.2]{Flajolet2009};
	$f=g\pm\Oh(h)$ is equivalent to $|f-g| \in \Oh(|h|)$.
\notation{$x \pm y$}
	$x$ with absolute error $|y|$; formally the interval $x \pm y = [x-|y|,x+|y|]$;
	as with $\Oh$-terms, I use ``one-way equalities'': $z=x\pm y$ instead of $z \in x \pm y$.
\notation{$\Gamma(z)$}
	the gamma function, $\Gamma(z) = \int_0^\infty t^{z-1}e^{-t} \, dt$.
\notation{$\psi(z)$}
	the digamma function, $\psi(z) = \frac d{dz} \ln(\Gamma(z))$.
\notation{$\BetaFun(\alpha,\beta)$}
	beta function; 
	$\BetaFun(\alpha,\beta) = \int_0^1 z^{\alpha-1}(1-z)^{\beta-1} \,dz = \Gamma(\alpha)\Gamma(\beta) / \Gamma(\alpha+\beta)$.
\notation{$I_{a,b}(\alpha,\beta)$}
	incomplete regularized beta function;
	$I_{a,b}(\alpha,\beta) = \int_a^b z^{\alpha-1}(1-z)^{\beta-1} \,dz \,\big/\, \BetaFun(\alpha,\beta)$.
\end{notations}

\subsection{Stochastics-related Notation}
\begin{notations}[\notationwidth]
\notation{$\Prob{E}$, $\Prob{X=x}$}
	probability of an event $E$ resp.\ probability for random variable $X$ to
	attain value $x$.
\notation{$\E{X}$}
	expected value of $X$; I write $\E{X\given Y}$ for the conditional expectation
	of $X$ given $Y$, and $\Eover X{f(X)}$ to emphasize that expectation is taken 
	\wrt random variable~$X$.
\notation{$X\eqdist Y$}
	equality in distribution; $X$ and $Y$ have the same distribution.
\notation{$\indicatornobraces{E}$, $\indicator{X\le 5}$}
	indicator variable for event $E$, \ie, $\indicatornobraces{E}$ is $1$ if $E$
	occurs and $0$ otherwise;
	$\{X\le 5\}$ denotes the event induced by the expression $X \le 5$.
\notation{$\bernoulli(p)$}
	Bernoulli distributed random variable;
	$p\in[0,1]$.
\notation{$\distFromWeights{\vect p}$}
	discrete random variable with weights $\vect p$; for $\vect p\in[0,1]^d$,
	for $I\eqdist \distFromWeights{\vect p}$, we have $I \in [1..d]$ and
	$\Prob{I=i} = p_i$ for $i\in[d]$ and $0$ otherwise.
\notation{$\betadist(\alpha,\beta)$}
	beta distributed random variable with shape parameters $\alpha\in\R_{>0}$ and $\beta\in\R_{>0}$.
\notation{$\binomial(n,p)$}
	binomial distributed random variable with $n\in\N_0$ trials and success probability $p\in[0,1]$;
	$X\eqdist \binomial(n,p)$ is equivalent to $(X,n-X)\eqdist\multinomial(n;p,1-p)$.
\notation{$\multinomial(n,\vect p)$}
	multinomially distributed random variable; 
	$n\in\N_0$ and $\vect p \in [0,1]^d$ with $\total{\vect p} = 1$.
\notation{$\hypergeometric(k,r,n)$}
	hypergeometrically distributed random variable;
	$n\in\N$, $k,r,\in\{1,\ldots,n\}$.  
\notation{$\entropy(\vect p)$, $\entropy[\ld](\vect p)$, $\entropy[\ln](\vect p)$}
	Shannon entropy of information theory; 
	$\entropy[\ld](p_1,\ldots,p_d) = \sum_{r=1}^d p_r \ld (1/p_r)$;
	similarly $\entropy[\ln]$ is the base-$e$ entropy.
	$\entropy[\ln](p_1,\ldots,p_d) = \sum_{r=1}^d p_r \ln (1/p_r)$;\\
	I write $\entropy$ for $\entropy[\ld]$.
\notation{stochastic vector}
	A vector $\vect p$ is called stochastic if $0\le \vect p\le 1$ and $\total{\vect p} = 1$.
\notation{\whp (event)}
	Let $E=E(n)$ be an event that depends on a parameter $n$. 
	I say ``$E$ occurs \whp'' if $\Prob{E(n)} = 1 \pm \Oh(n^{-c})$ as $n\to\infty$ for \emph{any} constant $c$.
\notation{\whp (bound)}
	Let $X_1,X_2,\ldots$ be a sequence of real random variables.
	I say ``$X_n = \Oh(g(n))$ \whp'' if for every constant $c$ there is a constant $d$ so that
	$\Prob{|X_n| \le d |g(n)|} = 1 \pm \Oh(n^{-c})$ as $n\to\infty$.
\end{notations}

\subsection{Notation for the Algorithm}
\begin{notations}[\notationwidth]
\notation{$n$}
	length of the input array, \ie, the input size.
\notation{$u$}
	universe size $u\in\N$.
\notation{$\vect x$}
	fixed profile in the multiset model; 
	$\vect x \in \N^u$, $\total{\vect x} = n$
\notation{$\vect q$}
	probability weights of the (discrete) \iid model;
	$\vect q\in(0,1)^u$, $\total{\vect q} = 1$.
\notation{$U_i$}
	$i$th element of the input; 
	in the \iid model, $U_i \eqdist \distFromWeights{\vect q}$,
	for all $i$ and $(U_1,\ldots,U_n)$ are (mutually) independent.
\notation{$k$, $t$}
	sample size $k = 2t+1$ for $t\in \N_{0}$;
\notation{ternary comparison}
	operation to compare two elements, outcome is either $<$, $=$ or $>$.
\notation{$\vect X$}
	profile of the input; $X_v$ is the number of occurrences of value $v$
	in $U_1,\ldots,U_n$;
	$\vect X \eqdist \multinomial(n,\vect q)$.
\end{notations}

\subsection{Notation for the Analysis}
\label{sec:notation-analysis-equals}
\begin{notations}[\notationwidth]
\notation{$C_{\vect x}$}
	(random) number of (ternary) comparisons needed by \wref{alg:list-quicksort}
	to sort a random permutation of a multiset with profile $\vect x$;
	the dependence on $k$ is left implicit.
\notation{$C_{n,\vect q}$}
	(random) number of (ternary) comparisons needed by \wref{alg:list-quicksort}
	to sort $n$ \iid $\distFromWeights{\vect q}$ numbers;
	the dependence on $k$ is left implicit.
\notation{$\Pi$}
	continuous pivot value; $\Pi\eqdist\betadist(t+1,t+1)$
\notation{$\vect D = (D_1,D_2)$}
	continuous spacings of the unit interval $(0,1)$ induced by $\Pi$,
	\ie, $\vect D = (\Pi,1-\Pi)$.
\notation{$P$}
	(random) value of chosen pivot in the first partitioning step;
	$
			P
		\wwrel=
			F^{-1}_U(\Pi)
	$, for $F_U$ the cumulative distribution function
	of $\distFromWeights{\vect q}$.
\notation{$\vect V = (V_1,V_2)$}
	non-pivot class probabilities, see \weqref{eq:def-V1-V2-H}.
\notation{$H$}
	hitting probability, \weqref{eq:def-V1-V2-H}.
\notation{$\vect Z = (\vect{Z_1},\vect{Z_2})$}
	zoomed-in distributions; see \weqref{eq:def-Z}.
\notation{$\vect\Gamma$}
	$\vect\Gamma\in\N_{0}^u$;
	node-depths vector in search tree; 
	$\Gamma_v$ is the random cost of searching value $v\in[u]$.
\notation{$A_{\vect q}$}
	random $\vect q$-weighted average node depth of a $k$-fringe-balanced tree 
	built from inserting $\distFromWeights{\vect q}$ elements till saturation, 
	$A_{\vect q} = \vect\Gamma^T \vect q$;
	see \weqref{eq:equals-A-q-recurrence}.
\end{notations} 

	\FloatBarrier

\section{Height of Recursion Trees}
\label{app:height-of-recursion-trees}

In this appendix, we recapitulate the well-known folklore result that
Quicksort recursion trees
\parenthesisclause{or equivalently binary search trees}
have logarithmic height with high probability.
We will show that the same is true for fringe-balanced trees 
(and thus median-of-$k$ Quicksort)
and inputs with equal keys;
this is not surprising as the latter are more balanced than ordinary BSTs,
but a formal proof is in order for two reasons:
we 
(a) employ a stricter notion of ``\whp'' (cf.\ \wref{sec:notation}) 
than often used, and 
(b) there is a pitfall in the folklore proof using good-split-bad-split indicators
that has not been rigorously addressed anywhere (to my knowledge).

We start with a result of \citet{Mahmoud1992evolution} (page 101) 
for ordinary binary search trees
that is a prototypical for the types of results we are after in this section.
(We will confine ourselves to slightly less accurate statements, though.)
\begin{lemma}[BSTs Have Log-Height with High Probability]
\label{lem:bst-height-hosam}
	\mbox{}\\
	For any $\epsilon>0$ it holds:
	The probability that a binary search tree built from a random permutation of
	$n$ distinct elements has height $\ge (\alpha+\epsilon) \ln n$ is 
	at most $K c_{\alpha+\epsilon} n^{-\eta_{\alpha+\epsilon}} = 
	\Oh(n^{-\eta_{\alpha+\epsilon}})$ where
	$K>0$ is some constant (independent of $\epsilon$), 
	$c_x = 1 / \bigl( \Gamma(x)(1-2/x)\bigr)$,
	$\eta_x = -\bigl(x\ln(2/x) +x -1 \bigr)$
	and $\alpha\approx4.31107$ is the unique root of $\eta_x$ for $x\in(4,5)$.
\qed\end{lemma}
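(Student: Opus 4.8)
The plan is to prove the bound by the first-moment method applied to the depth profile, with a generating-function evaluation precise enough to recover the constant $\alpha$. Put the root at depth $0$ and let $N_k^{(n)}$ denote the number of internal nodes at depth exactly $k$ in a binary search tree built from a uniformly random permutation of $n$ distinct keys, so that $H_n \ge k$ holds iff $N_k^{(n)} \ge 1$. By Markov's inequality $\Prob{H_n \ge k} = \Prob{N_k^{(n)} \ge 1} \le \E{N_k^{(n)}}$, so it suffices to bound $\E{N_k^{(n)}}$ for $k = \lceil (\alpha+\epsilon)\ln n\rceil$.

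First I would set up the recurrence for the expected profile. Conditioning on the rank $R$ of the root, which is uniform on $[n]$, the two subtrees are independent random BSTs of sizes $R-1$ and $n-R$, and a node at depth $k\ge1$ of the whole tree is a node at depth $k-1$ of one of them; hence $\E{N_k^{(n)}} = \frac{2}{n}\sum_{j=0}^{n-1}\E{N_{k-1}^{(j)}}$ with $\E{N_0^{(n)}} = 1$ for $n\ge1$ and $\E{N_k^{(0)}} = 0$. Introducing the expected profile polynomial $w_n(z) = \sum_{k\ge0}\E{N_k^{(n)}}z^k$ turns this into the first-order recurrence $n\,w_n(z) = (n-1+2z)\,w_{n-1}(z) + 1$, whose solution is a finite $\Gamma$-function product plus a variation-of-parameters correction. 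By Stirling's formula one gets $w_n(z) = P(z)\,n^{\,2z-1}(1+o(1))$ uniformly for $z$ in compact subsets of the half-plane $\{\Re z > \tfrac12\}$, for an explicit analytic $P$ carrying a factor $1/\Gamma(2z)$. (Note the $n^{\,2z-1}$ rather than $n^{\,2z}$: this single extra power of $n^{-1}$ is exactly what sharpens the crude height bound $2e\ln n$ to $\alpha\ln n$.)

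Finally I would extract the coefficient, avoiding a delicate saddle-point analysis by using that all Taylor coefficients of $w_n$ are nonnegative: $\E{N_k^{(n)}} = [z^k]w_n(z) \le \inf_{\rho>0}\rho^{-k}\,w_n(\rho)$. Plugging in the asymptotics of $w_n$ and minimising $2\rho\ln n - k\ln\rho$ over $\rho$ gives the optimal $\rho^\star = k/(2\ln n)$; writing $k = x\ln n$ this yields $\E{N_k^{(n)}} = O\bigl(P(x/2)\,n^{\,x-1+x\ln(2/x)}\bigr) = O\bigl(c_x\,n^{-\eta_x}\bigr)$, since by definition $\eta_x = -(x\ln(2/x)+x-1)$ and the prefactor $P(x/2)$ works out to the stated $c_x = 1/(\Gamma(x)(1-2/x))$ up to a universal constant $K$ absorbing the $1+o(1)$ and the minimisation slack. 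A short calculus check — $\eta'_x = \ln(x/2) > 0$ for $x>2$, and $\eta_4 = 4\ln 2 - 3 < 0 < \eta_5$ — shows $\eta$ is strictly increasing past $x = 2$ with a unique root $\alpha\approx 4.31107$ there, so $\eta_{\alpha+\epsilon} > 0$ for every $\epsilon>0$ and taking $x = \alpha+\epsilon$ gives the claim. The main obstacle is the bookkeeping of the last step: establishing $w_n(z) = P(z)n^{2z-1}(1+o(1))$ rigorously (in particular controlling the variation-of-parameters sum as $\Re z$ approaches $\tfrac12$ from the right) and pinning down the prefactor exactly as $c_x$, which really needs the tightness of the infimum bound. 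If one is content with the qualitative statement $\Prob{H_n \ge (\alpha+\epsilon)\ln n} = O(n^{-\eta_{\alpha+\epsilon}})$, a crude upper estimate of $w_n(\rho)$ already suffices in the infimum step and the prefactor question disappears.
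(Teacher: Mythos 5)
Note first that the paper does not itself prove this lemma: the statement is quoted from Mahmoud's monograph (page 101) and immediately closed with $\square$, the only remark about its derivation being that Mahmoud ``uses the exact expected number of leaves at any given level.'' Your proposal is therefore to be compared with Mahmoud's argument rather than with a proof in the paper, and on that score it is essentially the same approach: a first-moment bound on the expected depth profile, with a generating-function evaluation that is then optimized. The ingredients you write down check out: the recurrence $n\,w_n(z)=(n-1+2z)\,w_{n-1}(z)+1$, the homogeneous solution $h_n=\prod_{j=1}^n\frac{j-1+2z}{j}=\Gamma(n+2z)/(\Gamma(2z)\,n!)\sim n^{2z-1}/\Gamma(2z)$, the convergence of the variation-of-parameters series for $\Re z>\frac12$, the bound $\E{N_k^{(n)}}\le\rho^{-k}w_n(\rho)$ with optimizer $\rho^\star=k/(2\ln n)$, and the extraction of $\eta_x$ and of the root $\alpha$.

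Two points merit correction. First, your worry about controlling the asymptotics ``as $\Re z$ approaches $\tfrac12$ from the right'' is moot here: you evaluate only at the single real point $\rho^\star=(\alpha+\epsilon)/2>2$, well inside the convergence region, and since $w_n$ is a polynomial with nonnegative coefficients the estimate $\E{N_k^{(n)}}\le\rho^{-k}w_n(\rho)$ is an exact inequality for any fixed $\rho>0$, not an asymptotic one. Second, the prefactor you get is \emph{not} $c_x$. Carrying the variation-of-parameters sum to the end (with $w_0=0$ the limit is $v_\infty(z)=\sum_{j\ge1}\Gamma(j)\Gamma(2z)/\Gamma(j+2z)=1/(2z-1)$) gives $P(z)=1/\bigl((2z-1)\Gamma(2z)\bigr)$, i.e.\ $P(x/2)=1/\bigl((x-1)\Gamma(x)\bigr)$, whereas $c_x=x/\bigl((x-2)\Gamma(x)\bigr)$. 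The ratio $P(x/2)/c_x=(x-2)/\bigl(x(x-1)\bigr)$ is at most $\tfrac16$ for $x\ge4$, so your bound is in fact slightly sharper and the stated $\Oh(c_xn^{-\eta_x})$ conclusion follows, but the two constants are not equal, even up to a universal factor valid for all $x$. The discrepancy comes from the choice of profile: Mahmoud's $c_x$ is tailored to \emph{external} nodes. Their generating polynomial obeys the \emph{homogeneous} recurrence $n\,\ell_n=(n-1+2z)\,\ell_{n-1}$ with $\ell_0=1$, giving $\ell_n=\Gamma(n+2z)/(\Gamma(2z)\,n!)$ exactly and prefactor $1/\Gamma(x)$; the extra $1/(1-2/x)$ in $c_x$ is the geometric tail $\sum_{j\ge k}\rho^{-j}$ arising when one bounds $\Prob{\exists\ \text{leaf of depth}\ge k}$ rather than a single level. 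Working with internal nodes, as you do, $H_n\ge k\Leftrightarrow N_k\ge1$ needs no tail sum, and the inhomogeneous term $1$ in the recurrence produces a different prefactor. Either route yields the decisive exponent $\eta_x$, which is all the paper actually uses.
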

By \wref{lem:equals-equivalence-recursion-tree-search-tree-exact} this result
translates immediately to the height of Quicksort recursion trees.

\Citeauthor{Mahmoud1992evolution}'s derivation is based on extensive knowledge on random BSTs
(also given in~\cite{Mahmoud1992evolution}), in particular he uses the exact 
expected number of leaves at any given level.
Generalizing this for fringe-balanced trees with duplicates seems a daunting task.

Variants of \wref{lem:bst-height-hosam} are sometimes covered in (advanced) 
algorithms courses and textbooks,
see, \eg, 
Exercise\,4.20 of \citet{MitzenmacherUpfal2005}
Section\,2.4 of \citet{Dubhashi2009},
and Section~4.7 of \citet{Erickson2017}.
There, a more intuitive argument is given by bounding the probability of many ``bad''
splits at nodes using Chernoff bounds, however neither of these
resources gives a detailed formal proof.
The argument is appealingly simple, and fairly easy to extend, but it indeed requires some care
to really guarantee the needed independence. 
We therefore give a detailed proof below and obtain the following result.

\begin{lemma}[Logarithmic Height of Recursion Trees With High Probability]
\label{lem:equals-recursion-trees-log-height-whp}
	~\\
	Let $\mathcal T$ be a $k$-fringe-balanced search tree built by
	inserting $n$ \iid $\distFromWeights{\vect q}$ numbers into an initially empty tree.
	Then $\mathcal T$ has height $\Oh(\log n)$ \whp; more precisely
	$\mathcal T$ has height $\ge c \ln n$ with probability $\le 2 n^{\eta}$ for $n \ge n_0 = e^{30\,000}$
	(\ie, with probability $\Oh(n^\eta)$),
	where
	\begin{align*}
			\eta
		&\wwrel=
			1-2c\delta^2
	\\
			\delta
		&\wwrel=
			p - \frac1c \cdot \biggl( \frac1{\ln(1/\alpha)}+1 \biggr)
	\\
			p
		&\wwrel=
			0.99 - 2 I_{\alpha-0.01,1}(t+1,t+1)
		\wwrel=
			0.99 - 2\,\frac{k!}{t!\, t!} \int_{\alpha-0.01}^1 x^t (1-x)^t \,dx
	\end{align*}
	where $\alpha \in (\frac12,1)$ is a parameter that can be chosen arbitrarily
	as long as $\delta > 0$.
	This result is independent of \(\vect q\), and holds
	in particular when \(\vect q\) depends on \(n\).
\end{lemma}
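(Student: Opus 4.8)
The plan is to run the classical good-split/bad-split argument, but to be scrupulous about the two places where the textbook version is informal: the independence structure of the split indicators along a path, and the passage from the \emph{continuous} spacings of the stochastic model to the \emph{integer} subproblem sizes that actually occur.

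By \wref{lem:equals-equivalence-recursion-tree-search-tree-exact} it is enough to bound the height of the fringe-balanced tree grown by the $n$ insertions of \wref{alg:fb-insert}. First I would fix one input element $U_j$ and follow the nested subproblems $S_0 \supsetneq S_1 \supsetneq \cdots$ that contain $U_j$ during the construction; the depth of $U_j$ equals the length of this chain. Call the split that turns $S_i$ into $S_{i+1}$ \emph{good} if the continuous spacing $\Pi_i$ of that partitioning step lies in the central interval $(1-\alpha', \alpha')$ with $\alpha' = \alpha - \tfrac1{100}$. Two facts make this the right notion. First, by the description in \wref{sec:stochastic-model}, conditionally on everything revealed while building $S_0,\ldots,S_i$ the spacing $\Pi_i$ is $\betadist(t+1,t+1)$-distributed, so the good-split indicators along $U_j$'s branch stochastically dominate a sequence of independent Bernoulli variables with success probability $q = \Prob{\Pi \in (1-\alpha',\alpha')} = 1 - 2I_{\alpha-0.01,1}(t+1,t+1)$, and this does not depend on the universe distribution. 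Second, since the induced subuniverse masses satisfy $V_1 \le \Pi_i$ and $V_2 \le 1-\Pi_i$, a good split forces \emph{both} child masses below $\alpha'$; hence, no matter which child $U_j$ ends up in, the Chernoff bound of \wref{lem:chernoff-bound-binomial} applied to the relevant binomial count gives $|S_{i+1}| \le \alpha|S_i|$ except with probability $e^{-2(1/100)^{2}|S_i|}$, provided $|S_i|$ is above a threshold $\tau = \tau(n)$. Absorbing this last failure probability into the indicator costs an extra $\tfrac1{100}$, which is why the effective good-split probability is $p = 0.99 - 2I_{\alpha-0.01,1}(t+1,t+1)$ and not $q$.

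Next I would assemble the pieces. If $U_j$ has depth $\ge h$ then, outside a deviation event of probability at most $h\,e^{-2(1/100)^{2}\tau}$, strictly fewer than $\bigl(\tfrac1{\ln(1/\alpha)}+1\bigr)\ln n$ of its first $h$ splits were good, since otherwise $|S_i|$ would already have dropped below $k$ and the chain would have terminated; the spare additive $\ln n$ comfortably absorbs the lower-order corrections, namely the $\Oh(1)$ loss from $k$ and from $\alpha'$ versus $\alpha$, the $\Oh(\log n)$ from the bottom of the chain where $|S_i| < \tau$ (there the Chernoff control is gone, but the remaining depth is trivially at most $\tau = \Oh(\log n)$), and the fact that an internal node at depth $h$ only forces its elements to depth $h-1$. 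Taking $h = c\ln n$, that count equals $(p-\delta)\,c\ln n$ with $\delta = p - \tfrac1c\bigl(\tfrac1{\ln(1/\alpha)}+1\bigr)$, so the lower-tail Chernoff estimate for a binomial with $c\ln n$ trials and success probability $p$ gives $\Prob{\text{depth of }U_j \ge c\ln n} \le e^{-2\delta^2 c\ln n} + h\,e^{-2(1/100)^{2}\tau} \le 2\,n^{-2c\delta^2}$ once $\tau = \Theta(\log n)$ (with a suitably large constant) and $n$ are taken large enough; tracking all constants is what pins the explicit threshold $n_0 = e^{30000}$. A union bound over the at most $n$ input elements then yields height $\ge c\ln n$ with probability at most $2\,n^{1-2c\delta^2} = 2\,n^\eta$, uniformly in $\vect q$ since nothing above used more about $\vect q$ than $\total{\vect q}=1$.

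The step I expect to be the real obstacle is making the ``stochastically dominate independent Bernoulli variables'' claim rigorous: the naive statement that the splits along a root-to-leaf path are independent is \emph{false}, because the path itself is determined by the data. The honest route is to expose the construction level by level along $U_j$'s branch and check, at each level, that conditionally on the past the next spacing is $\betadist(t+1,t+1)$ and that the event ``both children are small'' --- precisely the direction-independent event that prevents any size-weighting from sneaking in --- has conditional probability at least $p$ whenever $|S_i| \ge \tau$. One then also has to dispose cleanly of the bottom of the chain and keep track of the accumulated $\tfrac1{100}$-slacks so that the final exponent comes out exactly as stated; this bookkeeping, rather than any single deep idea, is the bulk of the work.
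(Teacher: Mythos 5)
Your outline correctly reproduces the textbook good-split Chernoff argument with the same target constant $p = 0.99 - 2 I_{\alpha-0.01,1}(t+1,t+1)$, and you correctly identify the dependence of the split indicators along a data-determined path as the pitfall. But the resolution you propose is not the one the paper uses, and as stated it leaves a genuine gap.

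You assert that, conditionally on everything revealed while building $S_0,\ldots,S_i$, the spacing $\Pi_i$ is $\betadist(t+1,t+1)$, so that the good-split indicators ``stochastically dominate'' an \iid\ $\bernoulli(q)$ sequence. This is not literally true when you track a \emph{fixed input element} $U_j$ down its branch. The pivot at $v_i$ is the median of the first $k$ arrivals at $S_i$, and $U_j$ is one of the elements of $S_i$. Conditioning on the chain up to level $i$ entails conditioning on $U_j$'s value, on $U_j$ never having been selected as a pivot at levels $<i$, and (if $U_j$ was ever among the $k$ sample elements at a prior level without becoming the pivot) on $U_j$ now being a forced early arrival at $v_i$, so it is again in the sample; once $U_j$ enters a sample it stays there until it becomes a pivot. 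So the conditional law of $\Pi_i$ is a mixture of $\betadist(t+1,t+1)$ (when $U_j$ is outside the sample) and a shifted beta (when it is inside), and the weights of this mixture and the constraint ``$U_j$ was not the median'' both carry information from the history. One can repair this by a union bound over the event ``$U_j$ ever enters a sample while $|S_i|$ is large'' plus a separate treatment of the small-sublist tail, but that bookkeeping is exactly the content you defer to ``expose the construction level by level,'' and it is not an off-the-shelf stochastic-domination fact.

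The paper's proof sidesteps this with a different device. It defines ``$\alpha$-balanced'' directly in terms of the integer subproblem sizes $J_1(v),J_2(v)$ (with a fixed size threshold $n_0=30000$, not a growing $\tau=\Theta(\log n)$), and then \emph{thins} the balanced indicator by an auxiliary, freshly generated coin $B(v)\eqdist\bernoulli\bigl(p/p_b(v)\bigr)$, where $p_b(v)$ is the conditional balancedness probability at $v$. The product $G_i=\text{balanced}(v_i)\cdot B(v_i)$ then has conditional success probability exactly $p$ at every node, so the $G_i$ form the exact \iid\ $\bernoulli(p)$ sequence that the Chernoff bound wants; no asymptotic domination argument is needed. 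The paper also union-bounds over the at most $n$ \emph{leaves} of the recursion tree rather than over the $n$ input elements, which avoids the ``$U_j$ in the sample'' interaction entirely. The bound $p_b(v)\ge p$ is established using the stochastic-order comparison $J_r\lest\binomial(n(v),D_r)+t$ and the binomial Chernoff bound, which is where your $0.99-2I$ reappears. So your overall shape and constants are right, but the crucial independence fix is missing: the thinning coin is what turns the informal ``splits are like \iid\ coin flips'' into an exact statement, and the leaf-based union bound is what keeps the element being searched for from contaminating the pivot samples.
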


\begin{table}[tbhp]
	\begin{captionbeside}{%
		A few exemplary values for the constants in \wref{lem:equals-recursion-trees-log-height-whp}.%
	}
		\hspace*{1em}\begin{tabular}{ccc}
		\toprule
			  $k$    & $c$  &  $\eta$  \\
		\midrule
			  $1$    & $12$ & $-1.72$  \\
			  $1$    & $13$ & $-2.86$  \\
			  $1$    & $20$ & $-11.53$  \\[.5ex]
			  $3$    & $9$ & $-1.94$  \\
			  $3$    & $10$ & $-3.37$  \\
			  $3$    & $20$ & $-19.02$ \\
		\bottomrule
		\end{tabular}
		\quad
		\begin{tabular}{ccc}
		\toprule
			  $k$    & $c$  &  $\eta$  \\
		\midrule
			  $5$    & $7$  & $-0.70$  \\
			  $5$    & $8$  & $-2.24$  \\
			  $5$    & $20$ & $-22.45$ \\[.5ex]
			$\infty$ & $4$  & $-2.03$  \\
			$\infty$ & $5$  & $-4.01$  \\
			$\infty$ & $20$  & $-33.71$  \\
		\bottomrule
		\end{tabular}%
		\hspace*{1em}
	\end{captionbeside}
	\label{tab:values-lemrecursion-trees-log-height-whp}
\end{table}

The given value for $n_0$ is clearly ridiculous;
the intention of \wref{lem:equals-recursion-trees-log-height-whp} is 
to give a rigorous proof of the asymptotic result.
As we will see in the proof, one can trade larger $c$ values for smaller $n_0$,
and all constants could be improved by a stronger version of the Chernoff bound.
It is not my intention to do so here.

We note that for Quicksort, an alternative route is to analyze a modified version of the algorithm
that makes some technicalities vanish and performs no better than the original Quicksort; 
see, \eg, \citet{Seidel1993}.
Moreover, much stronger concentration results are known for the overall number of \emph{comparisons} in 
Quicksort, see \citet{McDiarmidHayward1996} or the streamlined description in 
Section~7.6 of \citet{Dubhashi2009}.
There the exponent of the probability bound is arbitrarily large for one \emph{fixed} 
bound $c \ln n$.
It seems not possible to obtain such a stronger bound for the \emph{height} of the tree, though.

Concentration results are typically framed in terms of randomized Quicksort.
To emphasize the relation between Quicksort recursion trees and search trees,
our reference Quicksort (\wref{alg:list-quicksort}) is not randomized,
but deterministically chooses the first elements for the sample.
Causing an exceptionally high recursion tree can hence be attributed to a specific 
input in our scenario; the following definition expresses that idea.

\begin{definition}[Height-degenerate Inputs]
\label{def:height-degenerate}
	An input of length $n$ is called \textit{\(h\)-height-degenerate} (\wrt our reference
	fat-pivot median-of-$k$ Quicksort) if the recursion tree of $\proc{Quicksort}_k$
	on this input has a height $> h \ln(n)$.
\end{definition}
From \wref{lem:equals-recursion-trees-log-height-whp} we immediately obtain
the following fact.

\begin{corollary}[Probability of Height-Degeneracy]
\label{cor:recursion-tree-log-height-weak-whp}
	For any $k$, the probability that an input of $n$ elements
	is $13$-height-degenerate is in $\Oh(1/n^2)$ as $n\to\infty$.
\qed\end{corollary}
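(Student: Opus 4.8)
The plan is to derive this directly from \wref{lem:equals-recursion-trees-log-height-whp} (equivalently \wref{pro:tree-log-height-weak-whp}) by rephrasing ``height-degenerate'' in terms of tree heights. First I would invoke \wref{lem:equals-equivalence-recursion-tree-search-tree-exact}: since our reference Quicksort preserves the relative order of elements and takes the first $k$ list elements as its sample, the recursion tree produced by $\proc{Quicksort}_k$ on an input $\vect U = (U_1,\ldots,U_n)$ has exactly the same \emph{shape} --- in particular the same \emph{height} --- as the $k$-fringe-balanced tree obtained by successively inserting $U_1,\ldots,U_n$ with \wref{alg:fb-insert}. Hence, by \wref{def:height-degenerate}, an input of length $n$ is $13$-height-degenerate precisely when that fringe-balanced tree has height $> 13\ln n$, and for a random (iid $\distFromWeights{\vect q}$) input this is exactly the event bounded in \wref{lem:equals-recursion-trees-log-height-whp}.

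It then remains to apply that lemma with $c = 13$: the probability of height $> 13\ln n$ is at most $2n^{\eta}$ for all $n \ge n_0$, where $\eta = 1 - 2\cdot 13\cdot\delta^2$ and $\delta,p$ depend on the free balance parameter $\alpha\in(\tfrac12,1)$ as in the lemma. I would then exhibit a choice of $\alpha$ making $\delta > 0$ and $\eta \le -2$. The least favourable sample size is $k=1$: for a fixed $\alpha$ with $\alpha-0.01 > \tfrac12$, the quantity $p = 0.99 - 2\,I_{\alpha-0.01,1}(t+1,t+1)$ is nondecreasing in $t$ (the symmetric law $\betadist(t+1,t+1)$ becomes more concentrated at $\tfrac12$, so its upper tail past $\alpha-0.01$ shrinks), hence $\delta$, and therefore $2c\delta^2$, can only grow with $k$, so $\eta$ only gets smaller. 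For $k=1$ the optimal $\alpha$ already yields $\eta = -2.86 \le -2$, as recorded in \wref{tab:values-lemrecursion-trees-log-height-whp}. Consequently, for every odd $k$ the probability of a $13$-height-degenerate input is at most $2n^{-2}$ for $n \ge n_0$, and for the finitely many $n < n_0$ the bound $\Oh(1/n^2)$ holds trivially after enlarging the implied constant. This establishes the corollary.

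The only step that is more than mechanical substitution into \wref{lem:equals-recursion-trees-log-height-whp} is the claim that $k=1$ is the worst case --- i.e.\ that $\Prob{\betadist(t+1,t+1) > a}$ is nonincreasing in $t$ for $a > \tfrac12$, which I would justify by noting that the family $\{\betadist(t+1,t+1)\}_t$ is ordered in the peakedness (convex) order about $\tfrac12$. This is not strictly necessary, though: since $k$ is fixed, one may instead simply read off the appropriate row of \wref{tab:values-lemrecursion-trees-log-height-whp} (using that, for fixed $c$, $\eta\to-\infty$ as $k\to\infty$, so only small $k$ need checking, and each such check is a one-line numeric substitution into the lemma). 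Everything else in the argument is just transferring the already-proved bound of \wref{lem:equals-recursion-trees-log-height-whp} through \wref{lem:equals-equivalence-recursion-tree-search-tree-exact} and \wref{def:height-degenerate}.
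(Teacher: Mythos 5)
Your argument is correct and follows the same route the paper intends: the paper states that the corollary follows "immediately" from \wref{lem:equals-recursion-trees-log-height-whp}, and you fill in exactly the details that make this immediate — rephrasing $13$-height-degeneracy via \wref{def:height-degenerate} and \wref{lem:equals-equivalence-recursion-tree-search-tree-exact} as the event that the $k$-fringe-balanced tree has height $> 13\ln n$, specializing the lemma to $c=13$, and absorbing the finitely many $n < n_0$ into the implied constant.

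One remark on economy: you spend effort showing that $k=1$ is the worst case (via the peakedness order of $\betadist(t+1,t+1)$), so that a single row of \wref{tab:values-lemrecursion-trees-log-height-whp} suffices for the uniform claim "for any $k$". That argument is sound, but it is more than the statement requires: $k$ is a fixed constant throughout the paper, so the implied constant in $\Oh(1/n^2)$ is allowed to depend on $k$, and it suffices to observe that for every fixed $t$ one can choose $\alpha\in(\tfrac12,1)$ with $\delta>0$ and $\eta = 1-26\delta^2 \le -2$ at $c=13$ (as the table confirms for small $k$ and as follows from $p(\alpha,t)\to 0.99$ as $t\to\infty$ for fixed $\alpha$). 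Either way, the proof is complete and matches the paper's intended derivation.
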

We will in the following simply use ``height-degenerate'' to mean $13$-height-degenerate.
\smallskip
\smallskip

\separatedpar

\begin{proof}[\wref{lem:equals-recursion-trees-log-height-whp}]
	Let $k=2t+1$ be given.
	We will follow the folklore proof that the height of randomly grown BSTs 
	is typically logarithmic:
	we determine a constant probability $p>0$ (independent of $n$) so that a single
	partitioning step yields a reasonably balanced split;
	since long paths in the recursion tree cannot contain more than 
	a certain number of such balanced nodes, we can bound the probability of seeing 
	such a long path in a recursion tree.

	\paragraph{Outline}
	A few technicalities need to be addressed in the formal proof:
	\begin{enumerate}[itemsep=.5ex]
	\item 
		To tweak constants, we will introduce a parameter $\alpha\in(1/2,1)$ 
		and require subproblems at a node $v$ to contain at most $\alpha n(v)$ elements,
		where $n(v)$ is the size of the sublist that $v$ corresponds to.
	\item 
		The probability to lose a given fraction of elements depends on the 
		subproblem size (a discretization effect) and
		in our case of inputs with duplicates also on $\vect q$.
		We can therefore only \emph{lower bound} the probability for a 
		balanced node by a constant $p$.
		To obtain a term for $p$ that we can actually evaluate,
		we resort to asymptotic approximations, which 
		are only a valid lower bound for sublist sizes larger than a threshold $n_0$.
	\item 
		Since the precise probability to lose a fraction of elements depends
		on the sublist size, also the events that a certain
		node be balanced are dependent.
	\end{enumerate}
	The last point is a problem since the standard Chernoff bounds requires 
	mutual independence; nevertheless this issue is not addressed in the sources cited above.
	Since the pivot choices themselves are done independently, we only need a little trick
	to obtain independent events:
	A node is only considered good when it is balanced and additionally a biased
	coin flip yields heads, where the bias is chosen so that the overall probability for a
	good node is the same for all nodes.
	This gives us a sequence of independent indicator variables to which we can apply
	our machinery as usual.
	
	\paragraph{Balanced nodes}
	We use the following notation in this proof.
	$v$ denotes a node of the recursion tree.
	By $n(v)$ we mean the sublist size at $v$, \ie, the \underbar number of elements in the subproblem
	of the recursive call that $v$ corresponds to.
	$d(v)$ denotes the \underbar depth of $v$, \ie, 
	the number of nodes on the path from the root to $v$, including endpoints.
	Finally, if $n(v) \ge k$, 
	we use $J_r(v)$, $r=1,2$, to the denote the size of the $r$th subproblem at $v$; 
	this is the subproblem size of $v$'s left resp.\ right child.
	
	We are now in the position to formalize the notion of \emph{balanced} nodes:
	Let $\alpha\in(1/2,1)$ be a fixed number and $n_0\ge k$ a constant 
	(to be chosen later).
	We call an inner node \emph{\(\alpha\)-balanced} if $n(v)\le n_0$
	or if $J_r(v)/n(v) \le \alpha$ for both $r=1,2$.
	($(\alpha,n_0)$-balanced would be more appropriate; 
	the dependence on $n_0$ is understood implicitly).
	An $\alpha$-balanced node hence has
	no subproblem with more than an $\alpha$-fraction of the elements,
	or has a negligibly small sublist anyway.
	
	The key idea is now that any path in a recursion tree for $n$ elements can contain at most 
	\begin{align*}
			\log_{\alpha}(n_0/n) 
		\wwrel= 
			\log_{1/\alpha}(n/n_0) 
		\wwrel= 
			\frac1{\ln(1/\alpha)}\ln(n/n_0)
		\wwrel\le 
			\frac1{\ln(1/\alpha)}\ln(n)
	\end{align*}
	$\alpha$-balanced nodes before reaching a node $v$ with $n(v) \le n_0$
	since considering only the size reduction at these $\alpha$-balanced nodes already 
	reduces the $n$ initial elements to
	$\le \alpha^{\log_{\alpha}(n_0/n)} n = n_0$ elements.
	From there on, at most $n_0$ additional $\alpha$-balanced nodes can follow
	since each reduces the subproblem size by at least one.

	\paragraph{Balanced is not good enough}
	We are now formalizing the idea sketched above to obtain \emph{independent} indicator variables.
	For a node with $n(v)\ge n_0$, we define 
	\(p_b(v) = \Prob{\text{$v$ $\alpha$-balanced} \given n(v)}\).
	Note that $p_b(v)$ only depends on $n(v)$ and $k$,
	but since the number of its possible subproblems sizes is finite, $p_b(v)$ will necessarily differ
	for different values of $n(v)$, 
	even without pivot sampling ($k=1$) and without our threshold ($n_0 = 0$).
	
	However, we will show below that we can find choices for $n_0$ and $\alpha$ so that at least 
	$p_b(v) \ge p$ for a given constant $p = p(\alpha)$ in all possible trees.
	
	For such a $p$, we call a node $v$ \emph{\((\alpha,p)\)-good} if it is $\alpha$-balanced 
	\emph{and additionally} $B(v) = 1$,
	where $B(v) \eqdist \bernoulli(\frac{p}{p_b(v)})$ which is independent of all other random variables.
	The distribution of $B(v)$ is conditional on the given tree via $n(v)$,
	so one might imagine first drawing a random recursion tree, and then assigning its
	nodes labels good or bad.
	
	Since every good node is also balanced, we cannot have more than $\frac1{\ln(1/\alpha)}\ln(n) + n_0$ 
	good nodes on any path in a recursion tree for input size $n$.

	\paragraph{Probability of long paths}
	
	We can now bound the probability of having a tree of height $\ge c \ln n$.
	First note that the overall number of nodes is bounded by $n$ 
	(at least one pivot is removed in each step), 
	so the number of leaves in the recursion tree is trivially bounded by $n$.
	By the union bound, the probability that \emph{any} of these leaves has depth $\ge h$ is at
	most $n$ times the probability that one leaf has depth $\ge h$.
	Let hence $v$ be one of the leaves in the recursion tree and let $v_1,\ldots,v_{d(v)}=v$ 
	be the nodes on the path from the root $v_1$ down to $v$; 
	recall that $d(v)$ is the depth of leaf $v$.
	
	The sublist corresponding to $v$ is the result of $d(v)-1$ successive partitioning steps, 
	each of which is either $(\alpha,p)$-good or not.
	Let $G_1,\ldots,G_{d(v)-1}$ be the corresponding indicator random variables where
	$G_i$ is $1$ if and only if $v_i$ is good.
	By construction, we have $\Prob{G_i = 1} = p_b(v) \cdot \frac{p}{p_b(v)} = p$
	independently of the tree.
	We now extend $G_1,\ldots,G_{d(v)-1}$ to an infinite sequence of random variables 
	by \iid $\bernoulli(p)$ variables for all $i\ge d(v)$;
	the $G_i$ then form an infinite sequence of \iid random variables.

	Now recall that the number of $\alpha$-balanced nodes on any path is 
	at most $n_0 + \frac1{\ln(1/\alpha)} \ln n$.
	For any $h\in\N$, we thus have 
	\begin{align*}
			\Prob[\big]{d(v) \ge h}
		&\wwrel\le
			\Prob*{G_1+\cdots+G_h \rel\le n_0 + \frac1{\ln(1/\alpha)} \ln n }
	\\	&\wwrel=
			\Prob[\bigg]{X_h \rel\le \tilde\gamma \ln n }
	\\	&\wwrel\le
			\Prob[\bigg]{X_h \rel\le \gamma \ln n },
	\\
		\shortintertext{where}
			X_h
		&\wwrel\eqdist
			\binomial(h,p),
	\\
			\tilde\gamma
		&\wwrel=
			\frac1{\ln(1/\alpha)} + \frac{n_0}{\ln n},
	\\	
			\gamma
		&\wwrel=
			\frac1{\ln(1/\alpha)} + 1
		\wwrel\ge
			\tilde\gamma
			, \qquad (n\ge e^{n_0}).
	\end{align*}
	For $n \ge e^{n_0}$
	and $h$ so that $\delta \ce p - \gamma \ln(n) / h > 0$ we then have
	\begin{align*}
			\Prob[\big]{d(v) \ge h}
		&\wwrel\le
			\Prob[\big]{X_h \le \gamma \ln n}
	\\	&\wwrel=
			\Prob*{p - \frac{X_h}h \rel\ge p - \gamma \,\frac{\ln(n)}h } ;
	\\	&\wwrel\le
			\Prob*{
				\biggl|\frac{X_h}h - p \biggr| 
				\rel\ge \delta
			}
	\\ &\wwrel{\relwithref[r]{lem:chernoff-bound-binomial}\le}
			2 \exp(-2 \delta^2 h)
			.
	\end{align*}
	With $h = c \ln(n)$, we have 
	\(
			\delta
		\wwrel=
			p - \frac\gamma c
	\),
	which is independent of $n$, and positive for any $c > \gamma/p$.
	With this bound we finally find 
	\begin{align*}
			\Prob[\big]{\mathcal T \text{ has height } \ge c \ln n}
		&\wwrel\le
			n \,\Prob[\big]{d(v) \ge c \ln n}
	\\	&\wwrel\le 
			2n \exp(-2 c \delta^2 \ln n)
	\\	&\wwrel=
			2n^{1-2c\delta^2},
	\end{align*}
	which implies the claim.

	\paragraph{A lower bound for balanced nodes: how to choose the constants}
	It remains to show that we can actually find values $n_0$ and $p = p(\alpha)$ 
	(at least for some choices of $\alpha$) so that 
	so that $p_b(v) \ge p$ in all nodes $v$ in all possible trees $\mathcal T$.
	
	To this end, we derive an \emph{upper} bound for the probability $1-p_b(v)$ that the root $v$ of 
	$\mathcal T$ is \emph{not} $(\alpha,n_0)$-balanced.
	For $n \le n_0$, we are done since $p_b(v) = 1$ and any $p>0$ will do.
	So assume $n > n_0$.
	By the union bound we have
	\begin{align*}
	\numberthis\label{eq:equals-prob-non-alpha-balanced-union-bound}
			\Prob{v \text{ not $\alpha$-balanced}} 
		\wwrel= 
			\Prob[\big]{ \exists r : \ui{J_r}n \ge \alpha n } 
		\wwrel\le 
			\sum_r \Prob[\big]{ \ui{J_r}n \ge \alpha n},
	\end{align*}
	so it suffices to consider the subproblems $r=1,2$ in isolation.
	
	Recall that after a pivot value $P$ is chosen according to
	\wpeqref{eq:dist-P}, the probabilities $V_r$ for any other
	element to belong to the $r$th subproblem are fully determined.
	$P$ in turn is fully determined by the choice of $\vect D$.
	Hence, conditionally on $\vect D$, we have
	$\binomial(n-k,V_r)$ elements that go to the $r$th subproblem,
	plus up to $t$ from the sample.
	Moreover we always have $V_r \le D_r$ (cf.\ \wref{fig:equals-stochastic-models-probs}).
	Conditional on $\vect D$, $\ui{J}n_r$ is hence smaller than
	$\tilde J_r \eqdist \binomial(n,D_r) + t$ in stochastic order, \ie, for all $j$ we have that
	$\Prob[\big]{J_r \ge j \given\vect D} \le \Prob[\big]{ \tilde J_r \ge j \given\vect D}$.
	(By averaging over all choices for $\vect D$ the same relation holds also unconditionally.)

	This is nothing but the precise formulation of the fact that 
	(in stochastic order)
	subproblem sizes for inputs with duplicates 
	are no larger than for random-permutation inputs,
	since we potentially exclude duplicates of pivots from recursive calls.
	
	The good thing about $D_r$ is that \parenthesisclause{unlike $J_r$}
	it does not depend on $n$: $D_r \eqdist \betadist(t+1,t+1)$ in every node.
	Also \parenthesisclause{unlike $V_r$} it does not depend on $\vect q$.
	Since $\tilde J_r$ is concentrated around $n D_r$,
	$J_r$ is likely to be $\ge \alpha n$ only for $D_r > \alpha$.
	Precisely for a constant $\delta > 0$ we have
	\begin{align*}
			\Prob[\bigg]{\tilde J_r \ge (D_r + \delta)n \given D_r} 
		&\wwrel\le
			\Prob[\Bigg]{\biggl|\frac{\tilde J_r}n - D_r \biggr| \ge \delta \given D_r} 
	\\	&\wwrel{\relwithref[r]{lem:chernoff-bound-binomial}\le}
			2\exp(-2\delta^2 n)
	\\	&\wwrel\le
			2\exp(-2\tilde\delta^2 n)
			\qquad\text{for any $\tilde\delta < \delta$}.
	\end{align*}
	Using this and separately considering $D_r$ larger resp.\ smaller $\alpha-\delta$ yields
	\begin{align*}
			\Prob[\big]{J_r \ge \alpha n} 
		&\wwrel\le 
			\Prob[\big]{\tilde J_r \ge \alpha n } 
	\\	&\wwrel=
			\Eover[\big] D {
				\indicator{D_r < \alpha - \delta} \cdot 
					\Prob{J_r \ge \alpha n \given D_r = d, d < \alpha-\delta}
			}
	\\*	&\wwrel\ppe{}
			\bin+
			\Eover[\big] D {
				\indicator{D_r > \alpha-\delta} \cdot 
					\Prob{J_r \ge \alpha n \given D_r = d, d > \alpha-\delta}
			}
	\\	&\wwrel\le
			\Prob{D_r < \alpha-\delta}\cdot 2\exp(-2\delta^2 n)
			\bin+ 
			\Prob{D_r > \alpha-\delta}\cdot 1
	\intertext{%
		if we choose, say $\delta=0.01$, we have 
		$2\exp(-2\delta^2 n) \le 0.005$ for $n\ge n_0 = 30\,000$
	}
		&\wwrel\le
			\Prob{D_r > \alpha-0.01}
			\bin+ 0.005
			\qquad(n\ge n_0).
	\end{align*}
	Plugging in above, we find that with $n_0 = 30\,000$, we can choose
	\begin{align*}
	\numberthis\label{eq:equals-prob-node-not-alpha-balanced}
			p
		\wwrel=
			0.99 \wbin- 2 \cdot I_{\alpha-0.01,1}(t+1,t+1)
		\wwrel\le
			\Prob{v \text{ $\alpha$-balanced}}
			\qquad(n\ge n_0).
	\end{align*}
	
	Since $p=p(\alpha)$ is continuous and $\ge 0.97$ for $\alpha=1$ there is always
	a valid choice $\alpha<1$ with $p>0$.
	We are free to choose any such $\alpha$; 
	the resulting constant $c$ for the achieved height bound then 
	has to satisfy $c > \bigl(1+\frac1{\ln(1/\alpha)}\bigr)\big/p$.
	It is not clear in general which choice yields the best bounds, 
	so we keep it as a parameter in the analysis.
\end{proof}

\noindent
Two further remarks are in order about \wref{lem:equals-recursion-trees-log-height-whp}.
\begin{itemize}
\item \textbf{Height-bound for any input.}\\
	The attentive reader might have noticed that we do not make use of
	the assumption that the input consists of $\distFromWeights{\vect q}$
	elements.
	In fact, the above proof works for \emph{any} randomly permuted input,
	since we actually compute the subproblem sizes in the most unfavorable
	case: when all elements are distinct.
	
	For randomized Quicksort, the random-order assumption is also vacuous;
	we thus have proved the more general statement that randomized Quicksort
	has $\Oh(\log n)$ recursion depth \whp for any input.
	In particular, this proves \wref{pro:tree-log-height-weak-whp}.
	
\item \textbf{Height-bound in terms of $\vect q$.}\\
	For \emph{saturated} trees, our bound on the height in terms of $n$ is meaningless.
	By similar arguments as above we can show that the height is in $\Oh(\log(1/\mu))$ 
	with high probability as $1/\mu \to \infty$,
	where $\mu$ is the smallest probability $q_v$: 
	Intuitively, after $c\ln(1/\mu)$ balanced subdivisions of the unit interval,
	we are left with segments of size less than $\mu$, so after so many partitioning rounds,
	we have reduced the subuniverse sizes to $1$.
	The subproblems are then solved in one further partitioning step.
	
	This bound is intuitively more appealing, but for our use case in the proof of 
	the separation theorem (\wref{thm:separating-n-and-q}), 
	we are dealing with non-saturated trees and the $\log n$ bound turns out to be more convenient.
	(There we only require that $1/\mu$ does not grow \emph{too fast} with $n$, 
	but we do not have any guarantee that it grows at all.
	The height-bound $c \log(1/\mu)$ only holds with high probability as $1/\mu$ goes to infinity;
	we would then need a case distinction on the growth rate of $1/\mu$ \dots)
\end{itemize}

}

{
\small
\bibliography{quicksort-refs}
}

\end{document}